\newcommand{\btp}{\begin{tikzpicture}[baseline=-.25em,scale=0.25,line width=0.7pt]}
\newcommand{\etp}{\end{tikzpicture}}
\numberwithin{equation}{section}
\newtheorem{thrm}{Theorem}[section]
\newtheorem{crl}[thrm]{Corollary}
\newtheorem{lemma}[thrm]{Lemma}
\theoremstyle{definition}
\newtheorem{defn}[thrm]{Definition}
\theoremstyle{remark}
\newtheorem{rmk}[thrm]{Remark}
\newcommand{\rmkend}{\ensuremath{\diameter}}
\newcommand{\defnend}{\ensuremath{\diameter}}
\setlist[itemize,1]{leftmargin=.4in}
\setlist[enumerate,1]{leftmargin=.4in,label=(\roman*)}
\setlist[description,1]{leftmargin=.4in,font=\normalfont\itshape}
\newcommand{\nc}{\newcommand}
\newcommand{\rnc}{\renewcommand}
\nc{\al}{\alpha}
\nc{\eps}{\epsilon}
\nc{\veps}{\varepsilon}
\nc{\ga}{\gamma}
\nc{\ka}{\kappa}
\nc{\la}{\lambda}
\nc{\del}{\delta}
\nc{\si}{\sigma}
\nc{\id}{\mathrm{id}}
\nc{\Id}{\mathrm{Id}}
\nc{\gr}{\mathrm{gr}}
\rnc{\t}{\mathrm{t}}
\nc{\rk}{\mathrm{rank}}
\nc{\flL}{\phi_1}
\nc{\flR}{\phi_2}
\nc{\flLR}{\phi_{12}}
\nc{\tl}{\tilde}
\nc{\mfsl}{\mf{sl}}
\nc{\Aut}{{\rm Aut}}
\nc{\Out}{{\rm Out}}
\nc{\End}{\mathrm{End}}
\nc{\Ext}{\mathrm{Ext}}
\nc{\GL}{\mathrm{GL}}
\nc{\mf}{\mathfrak}
\nc{\mc}{\mathcal}
\nc{\ms}{\mathsf}
\nc{\bb}{\mathbb}
\nc{\wh}{\widehat}
\nc{\wt}{\widetilde}
\nc{\lra}{\longrightarrow}
\nc{\ra}{\rightarrow}
\nc{\into}{\hookrightarrow}
\nc{\onto}{\twoheadrightarrow}
\nc{\N}{\mathbb{N}}
\nc{\Z}{\mathbb{Z}}
\nc{\Q}{\mathbb{Q}}
\nc{\R}{\mathbb{R}}
\nc{\C}{\mathbb{C}}
\nc{\K}{\mathbb{K}}
\nc{\A}{\mathbb{A}}
\nc{\ot}{\otimes}
\nc{\op}{\oplus}
\nc{\equ}[1]{\begin{equation}#1\end{equation}}
\nc{\eqa}[1]{\begin{equation}\begin{alignedat}{50}#1\end{alignedat}\end{equation}}
\nc{\eqn}[1]{\begin{equation*}\begin{alignedat}{50}#1\end{alignedat}\end{equation*}}
\nc{\eqg}[1]{\begin{equation}\begin{gathered}#1\end{gathered}\end{equation}}
\nc{\ali}[1]{\begin{alignat}{50}#1\end{alignat}}
\nc{\als}[1]{\begin{subequations}\begin{alignat}{50}#1\end{alignat}\end{subequations}}
\nc{\aln}[1]{\begin{alignat*}{50}#1\end{alignat*}}
\nc{\gat}[1]{\begin{gather}#1\end{gather}}
\nc{\gas}[1]{\begin{subequations}\begin{gather}#1\end{gather}\end{subequations}}
\nc{\gan}[1]{\begin{gather*}#1\end{gather*}}
\nc{\tx}[1]{\qu\text{#1}\qu}
\nc{\eqrefs}[2]{\text{(\ref{#1}-\ref{#2})}}
\nc\el{\nonumber\\}
\nc\nn{\nonumber}
\nc{\qu}{\quad}
\nc{\qq}{\qquad}
\nc{\key}{{\mathscr{k}}}
\nc{\ley}{{\mathscr{l}}}
\nc{\bi}{\bar\imath}
\nc{\bj}{\bar\jmath}
\nc{\bk}{\bar k}
\nc{\bl}{\bar l}
\nc{\tq}{\tl q}
\nc{\Rat}{{\rm Rat}}
\nc{\Arg}{{\rm Arg}}
\nc{\RC}{R^{\vee}}
\nc{\tsp}{\hspace{.05em}}
\nc{\cid}{{{c.i.d.\ }}}
\renewcommand{\,}{\kern 0.1em} 
\rnc\appendixname{}
\begin{document}

\title{Solutions of the $U_q(\wh\mfsl_N)$ reflection equations}

\begin{abstract} \small 
We find the complete set of invertible solutions of the untwisted and twisted reflection equations for the Bazhanov-Jimbo $R$-matrix of type ${\rm A}^{(1)}_{N-1}$. 
We also show that all invertible solutions can be obtained by an appropriate affinization procedure from solutions of the constant untwisted and twisted reflection equations.
\end{abstract}

\author{Vidas Regelskis}
\address{Department of Mathematics, University of York, York, YO10 5DD, UK and \newline \mbox{\hspace{.35cm}} Institute of Theoretical Physics and Astronomy, Vilnius University, Saul\.etekio av.~3, Vilnius 10257, Lithuania}
\email{vidas.regelskis@gmail.com}

\author{Bart Vlaar}
\address{
Department of Mathematics, University of York, York, YO10 5DD, UK and \newline \mbox{\hspace{.31cm}} Department of Mathematics, Heriot-Watt University, Edinburgh, EH14 4AS, UK}
\email{b.vlaar@hw.ac.uk}

\maketitle

\setlength{\parskip}{1ex}



\section{Introduction} \label{Sec:intro}

Let $V$ be a complex vector space and $R$ an $\End(V^{\ot2})$-valued meromorphic function of one variable, called the \emph{spectral parameter}, satisfying the parameter-dependent Yang-Baxter equation
\equ{ \label{YBE}
R_{12}(\tfrac uv)\,R_{13}(\tfrac uw)\,R_{23}(\tfrac vw) = R_{23}(\tfrac vw) \,R_{13}(\tfrac uw)\,R_{12}(\tfrac uv)
}
which is to be understood as an identity for $\End(V^{\ot3})$-valued meromorphic functions of one variable. 
Here the lower indices specify the embedding $\End(V^{\ot2}) \into \End(V^{\ot3})$. 
Equation \eqref{YBE} can be seen as a condition on the interaction of certain particles with internal state space $V$, see e.g. \cite{ZaZa}. 
It expresses that two possible factorizations of the interaction of three such particles into simple interactions (i.e., interactions of just two particles) are equivalent, which underpins quantum integrability in many physical models.

Given a solution $R$ to \eqref{YBE}, it is an interesting question to classify all meromorphic $\End(V)$-valued functions of one variable $K, \wt K$ satisfying
\gat{ 
R_{21}(\tfrac{u}{v})\, K_1(u)\, R(uv)\, K_2(v) = K_2(v)\, R_{21}(uv)\, K_1(u)\, R(\tfrac{u}{v}) , \label{RE}  \\
R(\tfrac{u}{v})\, \wt{K}_1(u)\, R^{{\t}_1}(\tfrac{1}{uv})\, \wt{K}_2(v) = \wt{K}_2(v)\, R^{{\t}_1}(\tfrac{1}{uv})\, \wt{K}_1(u)\, R(\tfrac{u}{v}) , \label{tRE} 
}
where $\rm t_1$ denotes the usual transposition of matrices in the first tensor leg.
These equations are known as the \emph{untwisted} and \emph{twisted parameter-dependent reflection equation}, respectively, and arise naturally as conditions on the interaction of the aforementioned particles with a boundary, see \cite{Ch,Sk,KuSk}. 
In this case the two possible factorizations of the interaction of two particles with one boundary into simple interactions (i.e., particle-particle and particle-boundary) are required to be equivalent.

In the present paper $R$ will denote the Bazhanov-Jimbo $R$-matrix of type ${\rm A}^{(1)}_{N-1}$ \cite{Ba,Ji1}, for which $V = \C^N$ and the dependence on $u$ is rational.
For the case $N\ge3$, we will use the method of separation of variables to find all invertible solutions $K, \wt K \in \End(\C^N)$ of \eqrefs{RE}{tRE}.

The $N=2$ case is well-known. 
In this case there is a natural one-to-one correspondence between solutions of \eqref{RE} and solutions of \eqref{tRE}; moreover, there exists a general solution \cite{dVGR}, which can be specialized to any invertible solution, for example the general diagonal solution of \eqref{RE} obtained in \cite{Ch,Sk}. 
The story is very different if $N\ge 3$. 
The untwisted and twisted reflection equations are not equivalent and in either case there is not a single general solution which can be specialized to obtain an arbitrary invertible solution. 

To our knowledge, the classification of the invertible solutions of \eqrefs{RE}{tRE} is not known and it is our main goal to obtain this. 
Many classes of particular solutions are known in the literature.
For example, in \cite{MLS} a wide class of symmetric solutions of \eqref{RE} was found for all Bazhanov-Jimbo $R$-matrices of classical affine Lie type. 
In the case ${\rm A}^{(1)}_{N-1}$ it generalizes the solution found in \cite{AbRi} but does not encompass all the solutions. 
In \cite{CGM} an affinization procedure was used to construct a class of solutions of \eqref{RE} from the invertible solutions of the constant reflection equation found by \cite{DNS}. 
Regarding the twisted reflection equation \eqref{tRE}, in \cite{Gan} a solution of \eqref{tRE} with all entries nonzero was found.
In \cite{MRS} it was shown that solutions of type AI and AII of the constant twisted reflection equation, found in \cite{NoSu}, are also solutions of \eqref{tRE}. 

In the present paper we show that the set of solutions is naturally partitioned into equivalence classes identified in Lemma \ref{L:K-symm}.
This extends, explains and makes precise the observation in \cite[eqns.~(39-40)]{MLS} that there exists a transformation related to the cyclic group of order $N$ acting on the set of solutions of \eqref{RE} found in {\it ibid}.
Our main results are Theorems \ref{T:K} and \ref{T:CK}. 
The first theorem states that any invertible solution of the untwisted reflection equation \eqref{RE} is equivalent to a non-diagonal symmetric matrix given by the formula \eqref{KS} or a ``triangular'' matrix given by formula \eqref{KT}.
We remark that formula \eqref{KS}, in a slightly different formulation, was first reported in \cite{RV}. 
Solutions found in \cite{MLS} and \cite{CGM} are special cases of \eqref{KS} up to equivalence. 
To our best knowledge, the family of solutions given by formula \eqref{KT} was not known before, with the exception of some low rank cases.

For example, when $N=4$, a choice of inequivalent invertible non-diagonal symmetric solutions of \eqref{RE} is 
\[
\setlength{\arraycolsep}{3pt}
\renewcommand{\arraystretch}{0.8}
\begin{pmatrix} a \\ & a \\ && b& d \\ && d & c \end{pmatrix},\qu
\begin{pmatrix} a \\ & b && d \\ && 1 \\ &d&& c \end{pmatrix},\qu
\begin{pmatrix} b &&& d \\ & 1 \\ && 1 \\ d &&& c \end{pmatrix},\qu
\begin{pmatrix} b &&& d \\ & b & d \\ & d& c \\ d &&& c \end{pmatrix},
\]
where 
\[
a= 1+ \theta, \qu 
b= 1+\la \theta \chi, \qu 
c= 1+\la^{-1} \theta \chi, \qu 
d= - \theta \chi, \qu
\theta = \frac{u-u^{-1}}{\la^{-1}\mu^{-1}+u^{-1}} , \qu
\chi = \frac{1}{\la-\mu u} ,
\]
with free parameters $\la,\mu\in\C^\times$ satisfying $0 \le \Arg(\la), \Arg(\mu) < \pi$. 
A choice of inequivalent triangular solutions is
\gan{
\setlength{\arraycolsep}{3pt}
\renewcommand{\arraystretch}{0.8}
\begin{pmatrix} 1&&& \\ & 1&& \\ && 1& \\ &&& 1 \end{pmatrix},\qu
\begin{pmatrix} 1&&& \\ & 1&& \\ && 1& \\ &&& a \end{pmatrix},\qu
\begin{pmatrix} 1&&& \\ & 1&& \\ && a& \\ &&& a \end{pmatrix},\qu
\begin{pmatrix} 1&&&b \\ & 1&& \\ && 1& \\ &&& a \end{pmatrix},\qu
\begin{pmatrix} 1 \\ & 1&&b \\ &&1 \\ &&& a \end{pmatrix},\qu
\begin{pmatrix} 1 \\ & 1&& \\ && 1&b \\ &&& a \end{pmatrix},
\\[1em]
\setlength{\arraycolsep}{3pt}
\renewcommand{\arraystretch}{0.8}
\begin{pmatrix} 1&&b& \\ & 1&& \\ && a& \\ &&& a \end{pmatrix},\qu
\begin{pmatrix} 1&&&b \\ & 1&& \\ && a& \\ &&& a \end{pmatrix},\qu
\begin{pmatrix} 1&&& \\ &1&&b \\ && a& \\ &&& a \end{pmatrix},\qu
\begin{pmatrix} 1&&&b \\ & 1&b& \\ && a& \\ &&& a \end{pmatrix}, \qu
\begin{pmatrix} 1&&&b \\ & 1&& \\ &b & a& \\ &&& a \end{pmatrix},
}
where $a= \dfrac{\al-u^{-1}}{\al-u}$, $b = \dfrac{u-u^{-1}}{\al-u}$ with a free parameter $\al\in\C$ satisfying $0\le \Arg(\al) < \pi$. 
In the formulas above $u \in \C^\times$ is the spectral parameter.

Theorem \ref{T:CK} states that any invertible solution of the twisted reflection equation \eqref{tRE} (in a ``charge-conjugated'' form explained in Lemma \ref{L:CRE}) is equivalent to the dense matrix \eqref{CK:qOns} or one of the generalized permutation matrices \eqref{CK:A1}, \eqref{CK:A2} or \eqref{CK:A4}. 
The latter two are only defined for even $N$. 
Although the first three solutions are well-known \cite{Gan,MRS} and the solution \eqref{CK:A4} was found recently in \cite{RV}, heretofore the statement that these are all solutions to \eqref{tRE} was unproven. 

Again for $N=4$ a choice of inequivalent invertible solutions to \eqref{tRE} is as follows.
The solution \eqref{CK:qOns} is
\[
\setlength{\arraycolsep}{3pt}
\renewcommand{\arraystretch}{0.8}
\begin{pmatrix}
 a q u & -a \sqrt{-q} u & -a u & 1 \\
 -a \sqrt{-q} u & -a u & 1 & a q \\
 -a u & 1 & a q & -a \sqrt{-q} \\
 1 & a q & -a \sqrt{-q} & -a 
\end{pmatrix} 
\]
where $ a=\dfrac{q+1}{\sqrt{-q}(q+u)}$. The solutions \eqref{CK:A1}, \eqref{CK:A2} and \eqref{CK:A4} are, respectively,
\[
\setlength{\arraycolsep}{3pt}
\renewcommand{\arraystretch}{0.8}
\begin{pmatrix} &&&1 \\ &&1 \\ &1 \\ 1 \end{pmatrix},\qu
\begin{pmatrix} &&1 \\ &&&1 \\ 1 \\ &1 \end{pmatrix},\qu
\begin{pmatrix} &u \\ u \\ &&&1 \\ &&1 \end{pmatrix}.
\]
Again, $u \in \C^\times$ is the spectral parameter (varying in the equation) and $q\in\C^\times$ is the deformation parameter occurring in the $R$-matrix $R(u)$.

It is important to note that all the invertible solutions of the untwisted reflection equation as given by the formulas \eqref{KS} and \eqref{KT} can be obtained by an affinization of (possibly non-invertible) solutions of the constant reflection equation. 
This is stated precisely in Corollary~\ref{C:Aff}. 
A classification of these solutions was obtained by Mudrov \cite{Mu} (all invertible ones were found before in \cite{DNS}). 
A similar statement for the twisted reflection equation is discussed in Remark \ref{R:Aff}.
A classification of (possibly non-invertible) solutions of the constant twisted reflection equation will be presented elsewhere.

For small values of $N$, in \cite{RV} the symmetric and diagonal solutions of \eqref{RE} and all solutions of \eqref{tRE} were derived as intertwiners of representations of quantum symmetric Kac-Moody pair algebras \cite{Ko}. 
In particular, the matrix solutions provide one-dimensional representations (characters) of these algebras \cite{KoSt}. 
This raises the natural question: what are the underlying algebras for which the triangular solutions \eqref{KT} provide characters?
For the $N=2$ case this was addressed in \cite{BsBe}.

The paper is organized as follows. 
In Section \ref{sec:2} we find the group of symmetries of the Bazhanov-Jimbo $R$-matrix. 
We believe this must be known, however we were unable to locate a proof of this in the literature. 
Another reason to include this is that its proof is similar to, but easier than, the proofs given in the next section.  
Section \ref{sec:3} contains the main results of the paper, Theorems~\ref{T:K} and~\ref{T:CK}. 
They provide a classification of invertible solutions to the untwisted and twisted reflection equations. 
We note that \eqref{tRE} has an equivalent formulation \eqref{CtRE} which is more in line with \eqref{RE}. 
Our classification is written for this modified form.

{\it Acknowledgements.} The authors thank the referees for some helpful suggestions and Ana Ros Camacho for her academic tweets. V.R. was supported in part by the Engineering and Physical Sciences Research Council (EPSRC) of the United Kingdom, grant number EP/K031805/1 and by the European Social Fund, grant number 09.3.3-LMT-K-712-02-0017. B.V. was supported by EPSRC, grant numbers EP/N023919/1 and EP/R009465/1.
The authors gratefully acknowledge the financial support.


\section{The Bazhanov-Jimbo $R$-matrix and its symmetries} \label{sec:2}


\subsection{Definitions and Preliminaries}

Choose $N\in \Z_{\ge2}$. If~not stated otherwise, for any indices $i,j,k,l,\dots$ we will always assume that $1\le i,j,k,l\le N$. 
Fix a basis $\{ e_k \}_{k=1}^N$ of $\C^N$ and denote by $\{ e^*_k \}_{k=1}^N$ its dual: $e^*_i(e_j)=\del_{ij}$. 
Let $E_{ij}\in \End(\C^N) = \End_\C(\C^N)$ denote the standard unit matrices, so that $E_{ij} e_k = \del_{jk} e_i$. 
We will denote the usual transposition of matrices by $\t: E_{ij} \mapsto E_{ji}$ and the transposition with respect to the main antidiagonal by $w : E_{ij} \mapsto E_{\bj\tsp\bi}$, where $\bi := N-i+1$ for any $i \in \{ 1, \ldots, N \}$. 
Let $\mf{S}_N$ denote the symmetric group.
Recall that $M \in \GL(\C^N)$ is called a \emph{generalized permutation matrix} if $M \in \sum_{i=1}^N \C E_{i,\si(i)}$ for some $\si\in\mf{S}_N$.

Let $\Rat$ denote the algebra of rational functions in one complex variable. 
Their arguments will always be written in terms of complex numbers $u$ and $v$, known in the literature as \emph{spectral parameters}, or combinations of them such as $u/v$ and $uv$.
We write $\Rat^\times$ for the group $\Rat \backslash \{ 0 \}$. 
Given a complex vector space $V$, let $\Rat(V) := \Rat \ot_\C \End(V)$ denote the algebra of rational $\End(V)$-valued functions of one complex variable where $\End(V) = \End_\C(V)$.
We will consider the multiplicative group
\[
\Rat(V)^\times :=  \{ M \in \Rat(V) \, : \, M(u) \text{ is invertible for generic values of } u \in \C \}.
\]

Fix $q \in\C^\times$ not a root of unity. 
The Bazhanov-Jimbo $R$-matrix of type ${\rm A}^{(1)}_{N-1}$ \cite{Ba,Ji1} is the element $R$ of $\Rat(\C^N\ot\C^N)^\times$ defined by
\equ{
R(u) = f_q(u)\,R_q + f_{q^{-1}}(u^{-1}) P R_{q^{-1}} P, \qu\text{where}\qu f_q(u) = \frac{1}{q-q^{-1} u},
\label{R(u)}
}
$P$ is the permutation operator and $R_q$ is a constant $R$-matrix given by
\ali{
P = \sum_{1\le i,j\le N} E_{ij} \ot E_{ji} , \qq
R_q = \sum_{1\le i,j \le N} \left( q^{\delta_{ij}}E_{ii}\ot E_{jj} + \delta_{i<j} (q-q^{-1}) E_{ij}\ot E_{ji} \right) . \label{P-Rq}
}
The matrix-valued function $R$ defined above is a solution to the \emph{parameter-dependent quantum Yang-Baxter equation} \eqref{YBE} on $(\C^N)^{\ot 3}$, see e.g.~\cite{Ji1}. We will need a few properties of $R$ later on:

\begin{lemma} 
Let $J$ be any invertible antidiagonal matrix.
For generic $u \in \C$ we have:
\gat{  
\label{R21} J_1 J_2 R(u) J_1^{-1} J_2^{-1} = R_{21}(u) = R^\t(u) \\
\label{Rbar} R(u^{-1})|_{q \to q^{-1}} = R_{21}(u).
}
\end{lemma}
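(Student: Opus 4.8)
The plan is to verify both identities directly from the defining formula \eqref{R(u)}, $R(u) = f_q(u)\,R_q + f_{q^{-1}}(u^{-1})\, P R_{q^{-1}} P$, by tracking the action of each involution on the two constituent pieces $R_q$ and $P R_{q^{-1}} P$ separately, and then recombining. The key elementary facts I would establish first are: (i) $P$ itself is antidiagonal-conjugation-invariant and transposition-invariant, i.e.\ $J_1 J_2 P J_1^{-1} J_2^{-1} = P = P^{\t}$ for any invertible antidiagonal $J$ (immediate from $P = \sum_{ij} E_{ij}\ot E_{ji}$ and $E_{ij}^{\t} = E_{ji}$, together with the fact that conjugation by $J_1 J_2$ merely permutes and rescales the summands in a way that cancels in the symmetric sum $\sum_{ij}$); (ii) the ``flip'' $R_{21}(u)$ equals $P R(u) P$ by definition of the lower-index embedding, so establishing \eqref{R21} reduces to showing $J_1 J_2 R(u) J_1^{-1} J_2^{-1} = P R(u) P = R^{\t}(u)$.

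For the antidiagonal-conjugation claim, since $f_q(u)$ and $f_{q^{-1}}(u^{-1})$ are scalars, it suffices to show $J_1 J_2 R_q J_1^{-1} J_2^{-1} = P R_q P$ and the analogous statement for $P R_{q^{-1}} P$; the latter follows from the former by conjugating with $P$ and using $P^2 = \Id$. So the crux is the single computation on the constant matrix $R_q = \sum_{ij}\bigl( q^{\delta_{ij}} E_{ii}\ot E_{jj} + \delta_{i<j}(q - q^{-1}) E_{ij}\ot E_{ji}\bigr)$. Conjugating $E_{ij}\ot E_{ji}$ by $J_1 J_2$ where $J = \sum_k c_k E_{k\bar k}$ sends it to a scalar multiple of $E_{\bi\bj}\ot E_{\bj\bi}$, and one checks the scalars and index conditions match those of $P R_q P = \sum_{ij}(q^{\delta_{ij}} E_{jj}\ot E_{ii} + \delta_{i<j}(q-q^{-1}) E_{ji}\ot E_{ij})$, i.e.\ one reindexes $i \leftrightarrow \bi$ and uses that $i < j \iff \bj < \bi$ and $\delta_{ij} = \delta_{\bi\bj}$ to see that the off-diagonal coefficient structure is preserved. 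The equality $R_{21}(u) = R^{\t}(u)$ is then read off from the transposition-invariance of $P$ and the fact that $R_q^{\t}$, with $\t$ the full transposition on $(\C^N)^{\ot 2}$, swaps $E_{ij}\ot E_{ji} \mapsto E_{ji}\ot E_{ij}$ — the same reindexing as conjugation by $P$ once one accounts for the asymmetry of the $\delta_{i<j}$ term, which works out because transposing and relabelling $i\leftrightarrow j$ in the $i<j$ sum reproduces exactly the structure of $P R_q P$.

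For \eqref{Rbar}, I would substitute $u \mapsto u^{-1}$ and $q \mapsto q^{-1}$ into \eqref{R(u)}: the scalar $f_q(u^{-1})|_{q\to q^{-1}} = f_{q^{-1}}(u^{-1})$ and $f_{q^{-1}}(u)|_{q\to q^{-1}} = f_q(u)$, so the two scalar prefactors are simply interchanged; meanwhile $R_q|_{q\to q^{-1}} = R_{q^{-1}}$, so the first term becomes $f_{q^{-1}}(u^{-1}) R_{q^{-1}}$ and the second becomes $f_q(u)\, P R_q P$. Conjugating the whole expression by $P$ (which, as noted, leaves $P R_q P$ and $P R_{q^{-1}} P$ swapped back) yields $f_q(u) R_q + f_{q^{-1}}(u^{-1}) P R_{q^{-1}} P = R(u)$, hence the un-conjugated expression is $P R(u) P = R_{21}(u)$, as claimed.

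\textbf{Main obstacle.} The only genuinely delicate point is bookkeeping the non-symmetric $\delta_{i<j}$ term in $R_q$ under the antidiagonal reindexing $i \mapsto \bi = N - i + 1$ and under transposition: one must confirm that $\delta_{i<j}$ transforms into $\delta_{\bj<\bi}$, that these agree with the terms produced by $P R_q P$, and that no spurious diagonal contributions or sign/scaling discrepancies appear once the $c_k$ coefficients of $J$ are carried through (they cancel precisely because each off-diagonal basis element $E_{ij}\ot E_{ji}$ is conjugated by $J_1 J_2$ to $\tfrac{c_{\bar\imath}}{c_{\bar\jmath}}\cdot\tfrac{c_{\bar\jmath}}{c_{\bar\imath}} E_{\bi\bj}\ot E_{\bj\bi} = E_{\bi\bj}\ot E_{\bj\bi}$, and similarly the diagonal terms are manifestly fixed). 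This is routine but must be done carefully; everything else is formal manipulation of scalars.
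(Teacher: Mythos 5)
Your proposal is correct and follows essentially the same route as the paper: both reduce \eqref{R21} to the constant-matrix identity $J_1J_2R_qJ_1^{-1}J_2^{-1}=PR_qP=R_q^{\t}$ (verified by the reindexing $(i,j)\mapsto(\bar\jmath,\bar\imath)$, under which the $\delta_{i<j}$ terms and the $J$-coefficients behave exactly as you describe), and obtain \eqref{Rbar} by direct substitution into \eqref{R(u)}. Your write-up merely makes explicit the recombination of the two terms of \eqref{R(u)} that the paper dismisses as immediate.
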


\begin{proof}
Using \eqref{R(u)} we obtain \eqref{R21} as an immediate consequence of $J_1 J_2 R_q J_1^{-1} J_2^{-1} = PR_qP = R_q^\t$.
This in turn follows by a direct computation: we have
\begin{align*}
J_1 J_2 R_q J_1^{-1} J_2^{-1} = R_q^\t &=  \sum_{1\le i,j \le N} \left( q^{\delta_{ij}}E_{\bar \imath \bar \imath}\ot E_{\bar \jmath \bar \jmath} + \delta_{i<j} (q-q^{-1}) E_{\bar \imath \bar \jmath}\ot E_{\bar \jmath \bar \imath} \right),  \\
PR_qP &= \sum_{1\le i,j \le N} \left( q^{\delta_{ij}}E_{jj}\ot E_{ii} + \delta_{i<j} (q-q^{-1}) E_{ji}\ot E_{ij} \right),
\end{align*}
which are seen to be equal if we make the substitution $(i,j) \mapsto (\bar \jmath, \bar \imath)$.
Finally, \eqref{Rbar} follows immediately from \eqref{R(u)}.
\end{proof}


\subsection{Symmetries}

We say that $Z \in \Rat(\C^N)^\times$ is a \emph{symmetry of $R$} if 
\equ{
[ R(\tfrac uv), Z(u)\ot Z(v) ] = 0 \label{RZZ}
}
for generic $u,v$.
The set of such $Z$ is a subgroup of $\Rat(\C^N)^\times$. 

\begin{lemma} \label{L:R-symm}
The group of symmetries of $R$ is generated by rational functions times the identity matrix, constant invertible diagonal (c.i.d.) matrices and $Z^\rho \in \Rat(\C^N)^\times$ defined by
\equ{
Z^\rho(u) = \sum_{1\le i < N} E_{i,i+1} + u\tsp E_{N1}.  \label{Z}
}
\end{lemma}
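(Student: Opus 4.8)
The plan is to determine all $Z \in \Rat(\C^N)^\times$ satisfying \eqref{RZZ} by exploiting the explicit form \eqref{R(u)} of $R$. First I would substitute the decomposition $R(u) = f_q(u) R_q + f_{q^{-1}}(u^{-1}) P R_{q^{-1}} P$ into \eqref{RZZ} and, using that $f_q$ and $f_{q^{-1}}(u^{-1})$ are linearly independent as rational functions of $u/v$ (they have poles at distinct points), conclude that the commutation condition splits into the two constant conditions $[R_q, Z(u) \ot Z(v)] = 0$ and $[P R_{q^{-1}} P, Z(u) \ot Z(v)] = 0$ for generic $u,v$; the second follows from the first by conjugating with $P$ and relabelling, so it suffices to analyze $[R_q, Z(u)\ot Z(v)] = 0$. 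Writing $A = Z(u)$, $B = Z(v)$ (two generic invertible matrices, which will eventually be coupled since they are values of the same function), I would expand $[R_q, A \ot B] = 0$ entrywise using the explicit formula for $R_q$ in \eqref{P-Rq}.

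The next step is to extract constraints on a single invertible matrix by specialising. Setting $v = u$ forces $[R_q, A \ot A] = 0$ for generic $A = Z(u)$; since $R_q$ is (up to normalisation and a shift) the image of the universal $R$-matrix of $U_q(\mfsl_N)$, its commutant in $\End(\C^N)^{\ot 2}$ is spanned by $1$ and the Casimir-type element, and one deduces that any such $A$ must be a nonzero scalar multiple of a matrix in the group generated by diagonal matrices and the monomial shift. More directly and self-containedly, I would read off from the entrywise equations $[R_q, A\ot A]=0$: comparing coefficients of $E_{kl}\ot E_{mn}$ yields, for the off-diagonal terms $\delta_{i<j}(q-q^{-1})E_{ij}\ot E_{ji}$, a system that (because $q$ is not a root of unity, so $q - q^{-1} \ne 0$) forces the support of $A$ to be a "shifted diagonal", i.e. $A \in \sum_i \C E_{i,\sigma(i)}$ with $\sigma$ a power of the $N$-cycle; then the diagonal part $\sum_i q^{\delta_{ij}} E_{ii}\ot E_{jj}$ pins down the entries up to an overall scalar and a c.i.d.\ factor, giving $A \in \C^\times \cdot (\text{c.i.d.}) \cdot (Z^\rho)^k$ for some $k$. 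Evaluating at generic $u$ and using continuity/rationality, the power $k$ is constant in $u$, and tracking the $u$-dependence through $Z^\rho(u) = \sum_{i<N} E_{i,i+1} + u E_{N1}$ — which is exactly why the single entry $E_{N1}$ carries the factor $u$ — reproduces \eqref{Z}.

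Finally I would check the converse: that rational scalars times $\Id$, c.i.d.\ matrices, and $Z^\rho$ indeed satisfy \eqref{RZZ}, and that these generate a subgroup; the scalar and c.i.d.\ cases are immediate from the fact that $R_q$ is built from the weight-zero elements $E_{ii}\ot E_{jj}$ and the weight-matching terms $E_{ij}\ot E_{ji}$, and the case of $Z^\rho$ is a direct (if slightly tedious) verification that conjugating $R(u/v)$ by $Z^\rho(u)\ot Z^\rho(v)$ returns $R(u/v)$, using $\bi = N - i + 1$ bookkeeping similar to the proof of \eqref{R21}. The main obstacle I expect is the combinatorial core of the forward direction: cleanly showing that the off-diagonal equations in $[R_q, A\ot A]=0$ force the support of $A$ to lie on a single cyclically-shifted diagonal, rather than some larger union of diagonals, and ruling out degenerate cases where entries vanish — this is where $q$ not being a root of unity is essential and where one must argue carefully rather than wave hands. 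The paper even remarks that this proof is "similar to, but easier than" the reflection-equation classifications in Section~\ref{sec:3}, so I would organise the support-and-scaling argument so it can be reused there.
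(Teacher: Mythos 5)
Your first step --- splitting \eqref{RZZ} into the two constant conditions $[R_q,\,Z(u)\ot Z(v)]=0$ and $[PR_{q^{-1}}P,\,Z(u)\ot Z(v)]=0$ --- is not valid, and in fact it is false, so the whole reduction to a constant problem collapses. The linear-independence argument does not apply: the scalar coefficients $f_q(\tfrac uv)$ and $f_{q^{-1}}(\tfrac vu)$ depend only on $u/v$, but the two commutators depend on $u$ and $v$ separately, so for fixed $u/v$ you only get one linear relation between two nonconstant matrix-valued functions, and cancellation between the two terms is possible. Indeed it happens for the very generator you are trying to produce: already for $N=2$ one computes
\[
[\,R_q,\;Z^\rho(u)\ot Z^\rho(v)\,]=(q-q^{-1})\,u\,\bigl(E_{11}\ot E_{22}-E_{22}\ot E_{11}\bigr)\neq 0,
\qquad
[\,PR_{q^{-1}}P,\;Z^\rho(u)\ot Z^\rho(v)\,]=(q-q^{-1})\,v\,\bigl(E_{11}\ot E_{22}-E_{22}\ot E_{11}\bigr),
\]
and \eqref{RZZ} holds only because $f_q(\tfrac uv)\,u+f_{q^{-1}}(\tfrac vu)\,v=0$, i.e.\ precisely by cancellation between the two summands of \eqref{R(u)}. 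So your splitting would ``prove'' that the symmetry group contains no non-monomial-shift elements beyond scalars and c.i.d.\ matrices, contradicting the lemma itself. The auxiliary claim that setting $v=u$ forces $[R_q,A\ot A]=0$ is also empty: $R(1)=P$ (the two terms of \eqref{R(u)} combine to $(q-q^{-1})P$ times $f_q(1)$), and $P$ commutes with every $A\ot A$, so the specialization $v=u$ carries no information. Consequently the middle part of your plan (classifying $A$ with $[R_q,A\ot A]=0$ via the commutant of $R_q$, and then asserting such $A$ are c.i.d.\ times powers of the shift) both rests on a false premise and is internally inconsistent, since the constant shift matrix does not solve the constant commutation problem.

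The correct route, which is the one the paper takes, is to stay with the full parameter-dependent relation: write $Z(u)=\sum_{ij}z_{ij}(u)E_{ij}$, sandwich the braided form $\check R(\tfrac uv)=PR(\tfrac uv)$ between dual and ordinary basis vectors to get the scalar identity \eqref{Z1} in which the coefficients $a_{ij},b_{ij}$ already encode the interplay of both summands of $R$, and then extract constraints by clever index choices and parameter specializations (e.g.\ $u=-qv$ kills products $z_{ik}(v)z_{il}(-qv)$, forcing at most one off-diagonal entry per row and column), followed by the ratio constraints \eqref{Z2}, whose $u^{\pm1}$ factors are exactly what force the support of a non-diagonal $Z$ onto a single cyclically shifted diagonal with the $u$-factor sitting in the lower-left block, i.e.\ $Z=(Z^\rho)^m$ up to a c.i.d.\ matrix and a rational scalar. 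Your converse check and your intuition that the combinatorial ``support'' argument is the crux are fine, but they must be run on the $u$-dependent equation, not on $R_q$ alone.
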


\begin{proof}
%
%
($\Longrightarrow$) We begin by showing that the symmetry relation \eqref{RZZ} holds if $Z$ is one of the three indicated generators.
Clearly \eqref{RZZ} is satisfied if $Z$ is a rational function times the identity matrix. 
Set $\check R(u) := P R(u)$. Then \eqref{RZZ} is equivalent to
$
\check{R}(\tfrac uv)\,Z(u)\ot Z(v) - Z(v)\ot Z(u)\,\check R(\tfrac uv) = 0.
$
Observe that
\equ{
\check{R}(\tfrac uv) (e_k \ot e_l) = a_{kl}(\tfrac uv)\, e_l \ot e_k + b_{kl}(\tfrac uv)\, e_k \ot e_l , \label{Z0}
}
where 
\eqg{
a_{ij}(\tfrac uv) = q^{\del_{ij}} f_q(\tfrac uv) + q^{-\del_{ij}} f_{q^{-1}}(\tfrac vu), 
\\
b_{ij}(\tfrac uv) = (q-q^{-1})\big( \del_{i>j}\, f_q(\tfrac uv) - \del_{i<j}\,f_{q^{-1}}(\tfrac vu) \big) . \label{ab}
}
In particular, $a_{ii}(\tfrac uv)=1$, $b_{ii}(\tfrac uv)=0$ and $b_{ij}(\tfrac uv)/b_{ji}(\tfrac uv)= v/u$ if $i>j$. 

Let $Z$ be a \cid matrix, i.e., $Z(u)=\sum_{1\le i \le N}\,d_i E_{ii}$ with $d_i\in\C^\times$. Then 
\[ 
Z(u)\ot Z(v) (e_k\ot e_l)  = d_k d_l (e_k \ot e_l).
\]
It is clear from \eqref{Z0} and the equality above that \eqref{RZZ} is indeed true in this case. 
Now let $Z = Z^\rho$. 
Then
\[ 
Z(u)\ot Z(v) (e_k\ot e_l)  =  (\del_{k>1}\,e_{k-1} + \del_{k1}\,u\, e_N) \ot (\del_{l>1}\,e_{l-1} + \del_{l1}\,v\, e_N).
\]
It is clear that, if $l,k>1$ or $k=l=1$, 
\[
\big(\check{R}(\tfrac uv)\,Z(u)\ot Z(v) - Z(v)\ot Z(u) \check{R}(\tfrac uv)\big)(e_k\ot e_l) = 0 .
\]
Let $k>1$ and $l=1$. Then
\aln{
& \big(\check{R}(\tfrac uv)\,Z(u)\ot Z(v) - Z(v)\ot Z(u) \check{R}(\tfrac uv)\big)(e_k\ot e_l) \\
& \qq = v\,\big(a_{k-1,N}(\tfrac uv)-a_{k1}(\tfrac uv)\big)\, e_N \ot e_{k-1} + \big(v\,b_{k-1,N}(\tfrac uv) - u\,b_{k1}(\tfrac uv)\big)\, e_{k-1} \ot e_N = 0 ,
}
since $k-1<N$ and $k>1$, so that $a_{k-1,N}(\tfrac uv)-a_{k1}(\tfrac uv)=0$ and $b_{k-1,N}(\tfrac uv)/b_{k1}(\tfrac uv) = u/v$. The case when $k=1$ and $l>1$ is checked in a similar way. Hence $Z^\rho$ is indeed a symmetry of $R$.


($\Longleftarrow$) Conversely, suppose that $Z(u) = \sum_{1\le i,j\le N} z_{ij}(u) E_{ij}$ with $z_{ij} \in \Rat$, so that $e^*_i\,Z(u)\,e_j = z_{ij}(u)$. Also observe that
\eqn{
(e^*_i \ot e^*_j)\, \check{R}(\tfrac uv) &= a_{ij}(\tfrac uv)\, e^*_j \ot e^*_i + b_{ij}(\tfrac uv)\, e^*_i \ot e^*_j .
}
Consequently,
\eqa{
&(e^*_i \ot e^*_j)\,\big(\check{R}(\tfrac uv)\, Z(u)\ot Z(v) - Z(v)\ot Z(u)\, \check{R}(\tfrac uv)\big)(e_k \ot e_l) \\
& \qq = \big( a_{ij}(\tfrac uv) - a_{kl}(\tfrac uv) \big)\, z_{il}(v)\, z_{jk}(u) + b_{ij}(\tfrac uv)\, z_{ik}(u)\, z_{jl}(v) - b_{kl}(\tfrac uv)\, z_{ik}(v)\, z_{jl}(u). \label{Z1}
}
Next, in two steps, we prove that if $Z$ is a symmetry of $R$ then $Z(u)$ is a product of a rational function, a \cid matrix and $(Z^\rho(u))^{m}$ for some $0\le m < N$. 
We do not need to consider greater powers because $(Z^\rho(u))^N=u I$. 


{\noindent\it Step 1: $Z(u)$ is a generalized permutation matrix.}

Let $i=k\ne j=l$. Then \eqref{Z1} gives $z_{ii}(u)\,z_{jj}(v) - z_{ii}(v)\,z_{jj}(u) = 0$ implying  $z_{jj}(u) = c_j z_{11}(u)$ with $c_j \in\C$ for all $1 < j \le N$. 
We focus on the $N\ge3$ case first. 
Let $i\ne j\ne k \ne i$ and $l=j$. 
Then \eqref{Z1} gives
\[
c_j \,(\del_{i<j}\,u + \del_{i>j}\,v)\,z_{ik}(u) - (\del_{k<j}\,u + \del_{k>j}\,v)\,z_{ik}(v) = 0.
\]
Hence $c_j=0$ or $z_{ik}(u)=0$. 
It follows that 
\[
Z(u) \in z_{11}(u) \sum_{1\le i\le N} \C E_{ii} \qu\text{or}\qu Z(u) \in \sum_{1\le i\ne j\le N} z_{ij}(u)E_{ij} .  
\]
The first case yields a rational function times a \cid matrix and hence a generalized permutation matrix. 
The second case requires a little bit more consideration. 
Let $i=j\ne k \ne l \ne j$ or $l = i\ne j\ne k \ne i$. 
Then \eqref{Z1} gives
\gan{
(q v+u)\,z_{ik}(u)\,z_{il}(v)- (\del_{k<l}\,u + \del_{k>l}\,v)\,(q+1)\,z_{ik}(v)\,z_{il}(u)  = 0,
\\
(q v+u)\,z_{ik}(v)\,z_{jk}(u) - (\del_{i<j}\,u + \del_{i>j}\,v)\,(q+1)\,z_{ik}(u)\,z_{jk}(v)  = 0 .
}
Setting $u=-qv$ this yields that $z_{ik}(v)\, z_{il}(-qv) = 0$ and $z_{ik}(-qv)\, z_{jk}(v) = 0$ for generic values of $v$. 
Hence there can be at most one nonzero off-diagonal element in each row and each column of $Z(u)$. 
Together with the requirement that $Z \in \Rat(\C^N)^\times$, this implies that $Z(u)$ is a generalized permutation matrix for generic $u$ and hence for all $u$ in the domain of $Z$.  
It remains to consider the $N=2$ case. Let $i\ne j=k=l$ and $u=-qv$. Then \eqref{Z1} gives $(\del_{i<j}\,q-\del_{i>j})\,c_j\,z_{ij}(-q v)=0$ implying $Z(u)=c_1 E_{11} + c_2 E_{22}$ or $Z(u)=z_{12}(u) E_{12} + z_{21}(u) E_{21}$, as required.


{\noindent\it Step 2: For generic values of $u$, $Z(u)$ is a product of a rational function, a \cid matrix and $(Z^\rho(u))^{m}$ for some $0\le m < N$.}

Taking Step 1 into account, it suffices to show that if $Z(u)$ is not a rational function times a \cid matrix, then it is of the form indicated with $1 \le m < N$.
We have $Z(u) = \sum_{1\le i \le N} z_{i,\si(i)}(u)\, E_{i,\si(i)}$ for some $\si \in \mf{S}_N$. 
Assuming that $i\ne j$ and $k\ne l$ \eqref{Z1} gives
\[
(\del_{i<j}\,u + \del_{i>j}\,v)\, z_{ik}(u)\, z_{jl}(v) - (\del_{k<l}\,u + \del_{k>l}\,v)\,z_{ik}(v)\,z_{jl}(u) = 0 .
\]
Let $z_{ij}(u)$ with $i\ne j$ be nonzero and let $k,l$ be arbitrary. 
Then the equality above is equivalent~to 
\equ{
\frac{z_{i\pm k,j\pm l}(u)}{z_{ij}(u)}\in\C, \qq \frac{z_{i\pm k,j\mp l}(u)}{z_{ij}(u)}\in\C u^{\pm1} . \label{Z2}
}
For $N=2$ the expressions above imply that $Z(u)\in \C(u)(E_{12} + \C u E_{21})$ and the proof is complete.
Otherwise, if $N\ge 3$, assume that coefficients $z_{pk}(u)$ and $z_{rl}(u)$ with $i<p<r\le N$ are also nonzero. 
By \eqref{Z2} we have 
\[
\frac{z_{pk}(u)}{z_{ij}(u)} \in \C (\del_{k<j}\,u + \del_{k>j}) , \qu\;
\frac{z_{rl}(u)}{z_{ij}(u)} \in \C (\del_{l<j}\,u + \del_{l>j}) , \qu\;
\frac{z_{rl}(u)}{z_{pk}(u)} \in \C (\del_{l<k}\,u + \del_{l>k}) .
\]
The conditions above are satisfied if and only if $j<k<l$ or $l<j<k$ or $k<l<j$. For $N=3$ only the latter two cases are possible yielding, up to multiplication by a \cid matrix and a rational function in $u$, $Z(u) = E_{12}+E_{23}+ u E_{31} = Z^\rho(u)$ or $Z(u) = E_{21}+ u E_{21}+ u E_{32} = (Z^\rho(u))^2$, which completes the proof. For $N\ge 4$ assume that coefficients $z_{ij}(u)$ and $z_{i+1,k}(u)$ are nonzero. If $k>j+1$ there must exist a nonzero coefficient $z_{ab}(u)$ with $1\le a<i$ or $i+1<a\le N$ and $j<b<k$ yielding a contradiction. If $1<k<j$ there must exist a nonzero coefficient $z_{ab}(u)$ with $1\le a<i$ or $i+1<a\le N$ and $b=1$ yielding a contradiction once again. Hence, if $z_{ij}(u)$ is nonzero, $z_{i+1,k}(u)$ can only be nonzero if $k=j+1$ or $k=1$. Clearly, if $j=1$ then the only option is $k=2$. Similarly, if $j=N$, then $k=1$. Let $1<j<N$ and $k=1$. But then there must exist a nonzero coefficient $z_{ab}(u)$ with $1\le a < i$ or $i+1<a\le N$ and $b>j$, which yields a contradiction. Therefore, if $z_{ij}(u)$ is nonzero, then $z_{i+1,k}(u)$ is also nonzero only if $k=j+1\le N$ or $k=1$ and $j=N$. It follows that a generically non-diagonal $Z$ is a symmetry of $R$ only if, up to multiplication by a \cid matrix and a rational function in $u$, 
\[
Z(u) = \sum_{m< j\le N} E_{j-m,j} + \sum_{1\le j \le m} u E_{N+j-m,j} = (Z^{\rho}(u))^m
\]
for some $1\le m < N$. This completes the proof.   
\end{proof}

\begin{rmk} \label{R:Z}
The symmetry $Z^\rho$ corresponds to the Hopf algebra automorphism of the quantum affine algebra $U_q(\wh\mfsl_N)$ given by an elementary rotation of the Dynkin diagram. 
The symmetries given by diagonal matrices (with determinant 1) correspond to the Hopf algebra automorphisms that multiply positive non-affine simple root vectors by nonzero complex numbers; the negative simple root vectors are multiplied by the corresponding inverses.  \hfill \rmkend
\end{rmk}


\section{Solutions of the untwisted and twisted reflection equations} \label{sec:3}


\subsection{Equivalence of solutions}

Our goal is find all $K, \wt{K} \in \Rat(\C^N)^\times$ satisfying \eqref{RE} and \eqref{tRE}, respectively, which we recall here:%
\aln{
R_{21}(\tfrac{u}{v})\, K_1(u)\, R(uv)\, K_2(v) &= K_2(v)\, R_{21}(uv)\, K_1(u)\, R(\tfrac{u}{v}),
\\
R(\tfrac{u}{v})\, \wt{K}_1(u)\, R^{{\t}_1}(\tfrac{1}{uv})\, \wt{K}_2(v) &= \wt{K}_2(v)\, R^{{\t}_1}(\tfrac{1}{uv})\, \wt{K}_1(u)\, R(\tfrac{u}{v}).   
}
It will be convenient to rewrite the twisted reflection equation \eqref{tRE} in a \emph{cross-conjugated} form similar to \eqref{RE}. (This form has a natural braided counterpart, which will be used to prove the classification theorem.) 
We introduce the antidiagonal matrix%
\footnote{In the physics literature it is called the cross or charge conjugation matrix, see e.g., \cite{Ba}.
}
\[
C = \sum_{1\le i \le N} (-q)^{i} E_{\bi i} 
\]
and the $C$-conjugated $R$-matrix
\equ{
\RC(u) = C_2^{-1} R^{t_1}( \tq^{\,2} u^{-1})\,C_2 \qq\text{where}\qq \tq = (-q)^{N/2} . \label{RC(u)}
}
In particular, cf.~\eqref{R(u)},
\equ{
\RC(u) = f_q( \tq^{\,2} u^{-1})\, \RC_{q} + f_{q^{-1}}( \tq^{\,-2} u)\,P\, \RC_{q^{-1}} P , \label{RC(u)-aff}
}
where
\equ{
\RC_q = C_2^{-1} R^{t_1}_q\,C_2 = \sum_{1\le i,j \le N} \left( q^{\del_{ij}}E_{ii}\ot E_{\bj\tsp\bj} + \delta_{i<j} (-q)^{j-i}(q-q^{-1}) E_{ji}\ot E_{\bj\tsp\bi} \right) . \label{RCq}
}
\begin{rmk} \label{R:C} 
The matrix $C$ corresponds to the Hopf algebra automorphism of $U_q(\wh\mfsl_N)$ given by the reflection of the Dynkin diagram which fixes the affine node; see also Remark \ref{R:Z}. \hfill \rmkend
\end{rmk}


\begin{lemma} \label{L:CRE}
Define $K \in \Rat(\C^N)$ by
\equ{
K(u) = C^{-1} \wt{K}(\tq^{\,-1} u). \label{CK}
}
Then the twisted reflection equation \eqref{tRE} is equivalent to 
\equ{
R_{21}(\tfrac{u}{v})\, K_1(u)\, \RC(uv) \,K_2(v) = K_2(v)\, \RC_{21}(uv)\, K_1(u)\, R(\tfrac{u}{v}). \label{CtRE}
}
\end{lemma}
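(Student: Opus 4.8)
The plan is to substitute the definition \eqref{CK} into the cross-conjugated equation \eqref{CtRE} and manipulate it, using the defining relation \eqref{RC(u)} for $\RC$ and the identity \eqref{R21}, until it matches the twisted reflection equation \eqref{tRE} with the argument of $\wt K$ rescaled by $\tq$. Concretely, I would write $K_i(u) = C_i^{-1}\wt K_i(\tq^{-1}u)$ in both sides of \eqref{CtRE}, and simultaneously replace $\RC(uv)$ by $C_2^{-1}R^{t_1}(\tq^{2}(uv)^{-1})C_2$ and $\RC_{21}(uv)$ by $C_1^{-1}R^{t_2}(\tq^{2}(uv)^{-1})C_1$ (the latter obtained from \eqref{RC(u)} by swapping the tensor legs, i.e.\ conjugating the $2\leftrightarrow1$ version; note the transposition moves to the second leg). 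The goal is to show that after conjugating the whole identity by $C_1 C_2$ and performing the change of variables $u \mapsto \tq\, u$, $v \mapsto \tq\, v$, one recovers \eqref{tRE} verbatim.

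The key steps, in order: (1) insert \eqref{CK} and the $C$-conjugated $R$-matrices; (2) collect the $C_1, C_2$ factors — on the left side one finds $R_{21}(u/v)\,C_1^{-1}$, and since $R_{21}(u/v) = R^{\t}(u/v)$ by \eqref{R21}, one should use the fact from \eqref{R21} that conjugating $R$ by antidiagonal matrices in both legs gives $R_{21}$; (3) move all $C_i^{\pm1}$ to the outside, checking that the $C$'s on the two sides cancel pairwise against the conjugation so that $C_1 C_2(\cdots)C_1^{-1}C_2^{-1}$ applied to the new identity reproduces the structure of \eqref{tRE}; (4) track the spectral-parameter arguments: the $R^{t_1}(\tq^2(uv)^{-1})$ appearing after substitution must become $R^{t_1}((uv)^{-1})$ after the rescaling $u\mapsto \tq u, v\mapsto \tq v$, and the outer $R_{21}(u/v)$, $R(u/v)$ arguments are unaffected by this rescaling since they depend only on $u/v$; (5) conclude that the rescaled, conjugated \eqref{CtRE} is precisely \eqref{tRE}, and since the change of variables and conjugation by the fixed invertible matrix $C$ are invertible operations, the two equations are equivalent.

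The main obstacle I anticipate is bookkeeping with the transpositions and the swap of tensor legs: one must be careful that $\RC_{21}$ is defined as the $2\leftrightarrow1$ flip of \eqref{RC(u)}, so that the transposition $t_1$ becomes $t_2$ and the conjugating matrix $C_2$ becomes $C_1$; combined with the identity $R_{21} = R^{\t}$ from \eqref{R21} (full transpose in both legs), one needs the compatibility $C_1 C_2 R^{\t_1}(x) C_1^{-1} C_2^{-1} = (R^{\t_1})_{21}(x) = R^{\t_2}_{21}(x)$, which should follow from applying \eqref{R21} to $R^{\t_1}$, or equivalently by noting that partial transposition commutes with the leg-swap conjugation by $J_1 J_2$ appropriately. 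Getting the signs and the powers of $\tq = (-q)^{N/2}$ to line up in the rescaling is routine but error-prone; I would double-check by specializing to $N=2$ where everything is classical. Once the conjugation-and-rescaling dictionary is set up correctly, the equivalence is a direct term-by-term match with no residual computation.
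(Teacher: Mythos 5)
Your plan is essentially the paper's own proof read in the opposite direction: the paper substitutes $u\mapsto \tq^{\,-1}u$, $v\mapsto\tq^{\,-1}v$ into \eqref{tRE}, rewrites $R^{\t_1}(\tq^{\,2}/(uv))$ via \eqref{RC(u)} and then removes the resulting $C_1C_2$ using \eqref{R21}; your substitute--conjugate--rescale scheme is the same computation, and it does go through. One correction, however: the ``compatibility'' identity you say you need, $C_1C_2\,R^{\t_1}(x)\,C_1^{-1}C_2^{-1}=(R^{\t_1})_{21}(x)$, is false. Conjugation by $C\ot C$ sends $E_{ji}\ot E_{ji}$ (with $i<j$) to $(-q)^{2(j-i)}E_{\bj\tsp\bi}\ot E_{\bj\tsp\bi}$, so it does not preserve $R^{\t_1}$, and partial transposition does not commute with conjugation by $J_1J_2$ in the way you suggest. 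Fortunately this identity is not what the argument requires. What you actually need is just that $\RC_{21}(x)=C_1^{-1}\,(R_{21})^{\t_2}(\tq^{\,2}x^{-1})\,C_1$ together with $(R_{21})^{\t_2}=R^{\t_1}$, i.e.\ the invariance of $R^{\t_1}$ under the flip of tensor legs, which is precisely what \eqref{R21} in the form $R^{\t}=R_{21}$ provides; in addition you use $R_{21}(\tfrac uv)\,C_1^{-1}C_2^{-1}=C_1^{-1}C_2^{-1}\,R(\tfrac uv)$, again a consequence of \eqref{R21}. With these two facts, inserting \eqref{CK}, cancelling the $C$'s, and rescaling $u\mapsto\tq\,u$, $v\mapsto\tq\,v$ recovers \eqref{tRE} exactly as you outline, so the proof is correct once that misstated auxiliary identity is replaced by its correct version.
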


\begin{proof}
Upon substituting $u\to \tq^{\,-1} u$ and $v\to \tq^{\,-1} v$ and using the notation \eqref{RC(u)}, we obtain that \eqref{tRE} is equivalent to
\[
R(\tfrac{u}{v}) \, C_1 \, C_2 \, K_1(u) \, \RC(u v) \, K_2(v) = C_1 \, C_2 \, K_2(v) \, \RC_{21}(uv) \, K_1(u) \, R(\tfrac{u}{v}).
\]
Now \eqref{R21} gives the equivalence with \eqref{CtRE}.
\end{proof}


There are straightforward transformations on the set of solutions of \eqref{RE} and \eqref{CtRE}.

\begin{lemma} \label{L:K-symm} 
Suppose $K \in \Rat(\C^N)$ is a solution of \eqref{RE} or \eqref{CtRE}. 
Then the element of $\Rat(\C^N)$ defined by any of the following is also a solution of the same reflection equation:
\begin{enumerate}
\item $u \mapsto K(-u)$,
\item $u \mapsto g(u)\,K(u)$ for any $g \in \Rat^\times$, 
\item $u \mapsto \phi(Z(\tfrac{\eta}{u}))K(u)Z(\eta u)$, where $\phi$ is the inverse map in case of \eqref{RE} and the transposition $w$ in case of \eqref{CtRE}, for any $\eta \in \C^\times$ and any symmetry $Z$ of $R$, 
\item $u \mapsto \psi(K(u))$, where $\psi$ is the usual transposition in case of \eqref{RE} and the composition of the three operations, transposition $w$, $u \mapsto u^{-1}$ and $(-q)^{1/2} \mapsto (-q)^{-1/2}$, in case of \eqref{CtRE}.
\end{enumerate}
\end{lemma}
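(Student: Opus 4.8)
The plan is to verify each of the four transformations in turn, treating \eqref{RE} and \eqref{CtRE} as a single equation of the generic shape $R_{21}(\tfrac uv)\,K_1(u)\,S(uv)\,K_2(v)=K_2(v)\,S_{21}(uv)\,K_1(u)\,R(\tfrac uv)$, where $S=R$ in the untwisted case and $S=\RC$ in the twisted case. This lets us handle both equations simultaneously wherever the argument does not care which of $R,\RC$ appears in the ``middle'' slots; only transformations (iii) and (iv) genuinely need to distinguish the two cases.

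For (i), the substitution $u\mapsto -u$, $v\mapsto -v$ leaves all the arguments $\tfrac uv$, $uv$ unchanged, so the equation with $u\mapsto K(-u)$ in place of $K$ is literally the original equation after renaming spectral parameters; nothing to check. For (ii), replacing $K(u)$ by $g(u)K(u)$ and $K(v)$ by $g(v)K(v)$ multiplies both sides by the same scalar $g(u)g(v)\in\Rat^\times$ (scalars being central in $\End(\C^N)$ and commuting with all the matrix factors), so the equation is preserved. These two are immediate and I would state them in a line or two each.

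The substantive cases are (iii) and (iv). For (iii), I would substitute $K'(u):=\phi(Z(\tfrac{\eta}{u}))K(u)Z(\eta u)$ into, say, the left-hand side and push the $Z$-factors through $R$ and $S$ using the symmetry relation \eqref{RZZ}, i.e.\ $[R(\tfrac uv),Z(u)\ot Z(v)]=0$, together with the corresponding fact for $S$. For the untwisted case ($\phi=$ inverse), one also needs that $Z(u)^{-1}\ot Z(v)^{-1}$ commutes with $R$ and with $R_{21}$, which follows from \eqref{RZZ} by inverting, and that $\RC$ (resp.\ $R$ itself in the middle slots) is likewise symmetric under $Z$; the $\eta$ and the inversions $\tfrac\eta u$ are arranged exactly so that the arguments match up: the $Z(\eta u)$ appended to $K_1(u)$ and the $Z(\tfrac{\eta}{v})$ prepended to $K_2(v)$ meet across $S(uv)$ and cancel/commute because $S$ is $Z$-symmetric, while the outer $\phi(Z(\tfrac\eta u))$ and $Z(\eta v)$ get carried to the outside. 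For the twisted case the map $\phi$ is $w$ (antidiagonal transposition) rather than inversion; here the key input is that $w$ is a symmetry-compatible operation on $\RC$ — concretely one uses \eqref{R21} (which says conjugation by an antidiagonal matrix turns $R$ into $R_{21}=R^{\t}$) together with the definition \eqref{RC(u)} of $\RC$ to show that $w$ interacts correctly, and that $w(Z(u))$ is again (a rational multiple of a power of) $Z^\rho$ up to the \cid{} generators, so that \eqref{RZZ} can still be applied. I expect this bookkeeping — matching all arguments and transpositions, and checking that the middle $R$-matrix in \eqref{CtRE} (namely $\RC$) really is $Z$-symmetric and $w$-compatible — to be the main obstacle; it is where one has to be careful that the statement of Lemma~\ref{L:R-symm} and the relations \eqref{R21}, \eqref{RC(u)}, \eqref{RCq} are all being invoked correctly.

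For (iv), in the untwisted case one applies the usual transposition $\t$ in both tensor legs to the whole equation \eqref{RE}, reverses the order of the three factors on each side, and uses $R^{\t}=R_{21}$ from \eqref{R21} (with $J$ an antidiagonal matrix, or directly the identity $R_{21}=R^{\t}$) to recognize the result as \eqref{RE} again with $K$ replaced by $K^{\t}$. In the twisted case, $\psi$ is the composite of $w$, $u\mapsto u^{-1}$ and $(-q)^{1/2}\mapsto(-q)^{-1/2}$; the strategy is the same — apply $w$ in both legs and reverse factor order — but now one must track how each of $R(\tfrac uv)$ and $\RC(uv)$ transforms under $w$ combined with the spectral-parameter inversion and the sign-flip of $(-q)^{1/2}$. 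Here \eqref{Rbar} ($R(u^{-1})|_{q\to q^{-1}}=R_{21}(u)$) handles the $R$-factors, and for the $\RC$-factors one goes back to \eqref{RC(u)}/\eqref{RC(u)-aff}: since $\tq=(-q)^{N/2}$, flipping $(-q)^{1/2}$ sends $\tq\mapsto\tq^{-1}$ (for $N$ odd) or $\tq\mapsto\tq$ (for $N$ even) — more precisely $\tq^{\,2}$ is unaffected, which is what actually appears in \eqref{RC(u)} — and combining with $u\mapsto u^{-1}$ and $w$ one checks $\RC$ maps to $\RC_{21}$. I would close by noting that since the four transformations in (i)--(iv) each preserve the solution set of the relevant reflection equation, so does any composite, which is the sense in which they generate the equivalence relation used later; but strictly the lemma only asserts each one individually, so it suffices to verify the four items.
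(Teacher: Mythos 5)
Your overall route coincides with the paper's: (i) and (ii) are dismissed as immediate, (iii) is handled by commuting the $Z$-factors through the $R$-matrices via \eqref{RZZ} (the paper cites Sklyanin for the untwisted case and records exactly the two mixed relations $Z^w_2(\tfrac{\eta}{v})\,\RC(uv)\,Z_1(\eta u)=Z_1(\eta u)\,\RC(uv)\,Z^w_2(\tfrac{\eta}{v})$ and its $\RC_{21}$ analogue, which is what you gesture at), and (iv) is handled by applying the transposition (resp.\ the antiautomorphism $w\ot w$) to the equation, reversing factors, and using \eqref{R21}, \eqref{Rbar} together with the behaviour of $\RC$ — the paper's identities \eqref{wR} and \eqref{wRC}.

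There is, however, a concrete error in your justification of the twisted part of (iv). The substitution $(-q)^{1/2}\mapsto(-q)^{-1/2}$ induces $q\mapsto q^{-1}$, hence $\tq=(-q)^{N/2}\mapsto\tq^{-1}$ for \emph{every} $N$ (even or odd), and in particular $\tq^{\,2}=(-q)^N\mapsto\tq^{\,-2}$; your claims that $\tq$ is fixed for even $N$ and that ``$\tq^{\,2}$ is unaffected'' are exactly backwards. This is not cosmetic: in \eqref{RC(u)} the argument of $R^{t_1}$ is $\tq^{\,2}(uv)^{-1}$, and the check that $\RC$ goes to $\RC_{21}$ works precisely because the inversion $\tq^{\,2}\mapsto\tq^{\,-2}$ compensates the inversion of $uv$, so that after applying \eqref{Rbar} one lands back on the argument $\tq^{\,2}(uv)^{-1}$ appearing in $\RC_{21}(uv)$ (this is the content of \eqref{wRC}). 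If $\tq^{\,2}$ were genuinely unaffected, the argument would come out as $\tq^{\,-2}(uv)^{-1}$ and the identification would fail. You also need, as the paper does, the relations $M^w=JM^{\t}J$ and $JCJ=(-q)^{N+1}C|_{q\to q^{-1}}$ to deal with the conjugating $C$'s inside \eqref{RC(u)}; with those inputs and the corrected bookkeeping for $\tq$, your argument for (iv) becomes the paper's proof.
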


\begin{proof} 
Statements {(i)} and {(ii)} are obvious.
For \eqref{RE} statement {(iii)} is a special case of \cite[Prop.~2]{Sk}; for \eqref{CtRE} we need to use
\gan{
Z^w_2(\tfrac{\eta}{v}) \, \RC(uv)\,Z_1(\eta u) = Z_1(\eta u)\,\RC(uv)\,Z^w_2(\tfrac{\eta}{v}) , \\
Z^w_1(\tfrac{\eta}{v}) \, \RC_{21}(uv)\,Z_2(\eta u) = Z_2(\eta u)\, \RC_{21}(uv)\,Z^w_1(\tfrac{\eta}{v}) ,
}
which follow from \eqref{RZZ} and \eqref{RC(u)}.
Statement {(iv)} in the case of \eqref{RE} follows immediately by transposing the untwisted reflection equation \eqref{RE} and using \eqref{R21}.
In case of \eqref{CtRE} we need a little bit more work. 
Denote $J:= \sum_{1\le i\le N} E_{i\tsp\bi}$.
Then from $M^w = J M^\t J$ for any $M \in \End(\C^N)$ and \eqref{R21} we obtain
\equ{ \label{wR}
(w \otimes w)(R(u^{-1}))|_{q \to q^{-1}} = R_{21}(u)
}
Consequently, combining \eqref{Rbar} with $JCJ=(-q)^{N+1}C|_{q \to q^{-1}}$ and \eqref{R21} we also obtain
\equ{ \label{wRC}
(w \otimes w)(\RC_{21}(u^{-1}))|_{q \to q^{-1}} = \RC(u).
}
Now applying the antiautomorphism $w \otimes w$ to \eqref{RE} followed by the operations $u \mapsto u^{-1}$, $v \mapsto v^{-1}$ and $(-q)^{1/2} \mapsto (-q)^{-1/2}$ (the latter of which induces $q \mapsto q^{-1}$) and using \eqrefs{wR}{wRC} we obtain statement {(iv)} in the case of \eqref{CtRE}.
\end{proof}

Note that statements (i) and (iii) of the above Lemma correspond to Hopf algebra automorphisms of $U_q(\wh\mfsl_N)$. Statement (i) corresponds to the automorphism which multiplies the simple affine root vectors by $-1$. 
For (iii) see Remark \ref{R:Z}.

\begin{defn} 
We say that two solutions to the same reflection equation are \emph{equivalent} if they are related by a composition of the transformations defined in Lemma \ref{L:K-symm}. (It is straightforward to verify that this indeed defines an equivalence relation.) 
\end{defn}

In the remaining parts of this section we will provide a classification of solutions of \eqref{RE} and \eqref{CtRE} up to equivalence.


\subsection{Untwisted reflection equation}

In this section we provide a classification of invertible solutions of the reflection equation \eqref{RE}.

\begin{defn} \label{D:GIM-GCM}
Let $M \in \GL(\C^N)$.
We call $M$ a \emph{generalized involution matrix} or a \emph{generalized cross matrix} if
\equ{ \label{gim-gcm}
M \in \sum_{1\le i\le N} \C E_{i \sigma(i)} \qu\text{ or }\qu  M \in \sum_{1\le i\le N} (\C E_{ii} + \C E_{i \sigma(i)}) ,
}
respectively, for some involution $\sigma \in \mf{S}_N$. \hfill \defnend
\end{defn}

If the off-diagonal coefficients in \eqref{gim-gcm} are all nonzero, then the matrix $M$ is diagonally similar to a symmetric matrix. 
We shall see that solutions of \eqref{RE1} evaluate to generalized cross matrices, and more precisely to matrices which are either diagonally similar to a symmetric matrix or, up to a permutation of $\{ 1,\ldots,N \}$, a direct sum of a lower- and an upper-triangular matrix.

The solutions will be labelled by the sets 
\gan{
\Sigma^{\rm sym}_N = \big\{ (\ell,r,\si) \,:\, 0\le \ell < r \le \tfrac12(N+\ell) \text{ and } \si(i)=N+\ell-i+1 \text{ for } \ell<i\le r  \big\} , 
\\[.2em]
\Sigma^{\rm tri}_N = \big\{ (m,\si) \,:\, \tfrac12N\le m\le N \;\text{and}\; 0<\si(j)\le \si(i)\le m\; \text{for}\; m<i\le j\le N \;\text{s.t.}\; \si(i)\ne i, \si(j)\ne j 
\big\} ,
}
where it is understood that $\si\in\mf{S}_N$ is an involution satisfying $\si(i)=i$ unless otherwise specified in the definition of the sets above.

Let $N$ be even and suppose $(\tfrac12N,\si_1), (\tfrac12N,\si_2)\in \Sigma^{\rm tri}_N$. We say that involutions $\si_1$ and $\si_2$ are \emph{related}, if $\tilde \rho \si_1 = \si_2 \tilde \rho$ where $\tilde \rho \in \mf{S}_N$ is the involution determined by $\tilde \rho(i)=i+N/2$ if $i \le N/2$.
For example, when $N=4$, the involutions $\si_1=(14)$ and $\si_2=(23)$ are related. We will denote by $\Sigma^{\rm tri,\sim}_N$ a maximal subset of $\Sigma^{\rm tri}_N$ which has no related involutions. (Note that $\Sigma^{\rm tri,\sim}_N=\Sigma^{\rm tri}_N$ if $N$ is odd.)

\begin{thrm} \label{T:K} 
Let $N\ge 3$. 
Then $K \in\Rat(\C^N)^\times$ is a solution of the reflection equation \eqref{RE} if and only if it is equivalent to the solution given by
\equ{ 
I + \frac{u-u^{-1}}{\la^{-1} \mu^{-1} + u^{-1} } \,\Bigg( \sum_{1\le i \le \ell} E_{ii} + \frac{1}{\la - \mu u} \sum_{\ell<i\le r} \Big( \la E_{ii} + \la^{-1} E_{\si(i)\si(i)} - E_{i\si(i)} - E_{\si(i)i} \Big)\!\Bigg)  \label{KS}
}
with $(\ell,r,\si)\in \Sigma^{\rm sym}_{N}$ and $\la,\mu\in\C^\times$ with $ 0 \le  \Arg(\la) , \Arg(\mu) < \pi$, or to the solution given by
\eqa{
& I + \frac{u-u^{-1}}{\al-u}\sum_{m< i \le N} \Big( E_{ii} + \veps_{i\si(i)} E_{i\si(i)} + \veps_{\si(i)i} E_{\si(i)i} \Big) \label{KT}
}
with $(m,\si)\in \Sigma^{\rm tri,\sim}_{N}$ and $\al\in\C$ with $0\le \Arg(\al) < \pi$ and $\veps_{i\si(i)},\veps_{\si(i)i} \in \{0,1\}$ satisfying $\veps_{ii}=0$ and $\veps_{i\si(i)}+\veps_{\si(i)i}=1$ if $i\ne \si(i)$. Moreover, when $\si\ne\id$, $\veps_{j\si(j)}=1$ and $\veps_{\si(j)j}=0$, where $j = \min \{ 1\le i \le N :\, i<\si(i)\}$.
\end{thrm}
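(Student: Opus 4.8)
\emph{Proof proposal.} The \emph{if} direction is a direct, if somewhat lengthy, verification, made manageable by the block structure of $R$: for any subset $S\subseteq\{1,\dots,N\}$ the restriction of $R$ to $\C^S\ot\C^S$ is again a Bazhanov--Jimbo $R$-matrix (of type ${\rm A}^{(1)}_{|S|-1}$ after the order-preserving identification $S\cong\{1,\dots,|S|\}$), while on a ``mixed'' sector $\C^S\ot\C^T\op\C^T\ot\C^S$ with $S\cap T=\emptyset$ equation \eqref{RE} collapses to finitely many scalar identities among the matrix entries. Hence it suffices to verify \eqref{RE} for the individual diagonal blocks of \eqref{KS} and \eqref{KT} --- the scalar blocks, the $2\times2$ symmetric block of \eqref{KS}, and the triangular $1\times1$ and $2\times2$ blocks of \eqref{KT} --- together with the finitely many two-block interactions, each an elementary computation using \eqref{R(u)} and \eqref{R21}. (Alternatively this follows from the affinization result behind Corollary~\ref{C:Aff}.)

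For the \emph{only if} direction, write $K(u)=\sum_{i,j}k_{ij}(u)\,E_{ij}$ with $k_{ij}\in\Rat$; as in the proof of Lemma~\ref{L:R-symm} the zero pattern of $K(u)$ is constant for generic $u$, so the \emph{support} $\mathrm{supp}(K):=\{(i,j):k_{ij}\not\equiv0\}$ is well defined. Taking matrix elements of \eqref{RE} in the basis $\{e_i\ot e_j\}$ and using \eqref{R(u)}, \eqref{R21} yields, for each quadruple $(i,j,k,l)$, a rational functional identity in $u,v$ quadratic in the $k_{ab}$. The first and principal step is to exploit the degenerations of $R$: one has $R(u)\to q^{-1}R_q$ as $u\to0$ and $R(u)\to q\,PR_{q^{-1}}P$ as $u\to\infty$ (block-triangular constant matrices), and at $u=q^{\pm2}$ the operator $R(u)$ degenerates onto the $q$-(anti)symmetric subspace of $\C^N\ot\C^N$ --- at $u=q^2$ after taking a residue, since $R$ has a simple pole there, the residue annihilating every $e_k\ot e_k$. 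Specialising \eqref{RE} at $uv=q^{\pm2}$, $u/v=q^{\pm2}$, at further values of $u/v$ at which the coefficient functions degenerate (analogous to the choice $u=-qv$ in the proof of Lemma~\ref{L:R-symm}), and in the limits $u,v\to0,\infty$, and then separating the $u$- and $v$-dependence in the factored identities that result, one proves simultaneously that: (a) each nonzero $k_{ij}(u)$ is a complex multiple of one of the explicit rational functions of $u$ occurring in \eqref{KS}--\eqref{KT} (of shapes $1$, $u$, $\tfrac{u-u^{-1}}{\la^{-1}\mu^{-1}+u^{-1}}$, $\tfrac{u-u^{-1}}{(\la^{-1}\mu^{-1}+u^{-1})(\la-\mu u)}$, $\tfrac{u-u^{-1}}{\al-u}$); and (b) $\mathrm{supp}(K)$ satisfies rigid constraints forcing products $k_{ij}(v)\,k_{ab}(-qv)$ to vanish.

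The second step is the combinatorial analysis of these constraints. As in Steps~1--2 of the proof of Lemma~\ref{L:R-symm}, the vanishing conditions together with invertibility of $K$ leave only a controlled family of zero patterns; after conjugating by a \cid matrix and a power of $Z^\rho$ (Lemma~\ref{L:R-symm}) and applying the transformations of Lemma~\ref{L:K-symm}, one shows that $\mathrm{supp}(K)$ is either (i) of \emph{generalised cross} type, i.e.\ $\{(i,i):1\le i\le N\}\cup\{(i,\si(i)):i\in S\}$ for an involution $\si\in\mf{S}_N$ and a $\si$-stable set $S$, or (ii) of ``block lower-triangular $\op$ block upper-triangular'' type. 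In case (i) the diagonal entries and the individual off-diagonal $2\times2$ blocks decouple, and solving the reduced equations produces the symmetric solution \eqref{KS}; in case (ii) one is led to the triangular solution \eqref{KT}. What remains are the parameters surviving all equivalences: $\la,\mu$ (resp.\ $\al$) are the residual ``pole data'' of $K$; the signs $\veps_{i\si(i)},\veps_{\si(i)i}$ in \eqref{KT} record which off-diagonal entry of each pair is present; the normalisations $0\le\Arg(\la),\Arg(\mu),\Arg(\al)<\pi$ and $\veps_{j\si(j)}=1$ come from transformations (i), (ii), (iv) of Lemma~\ref{L:K-symm} (namely $u\mapsto-u$, rescaling, and transposition, the last of which swaps $\veps_{i\si(i)}\leftrightarrow\veps_{\si(i)i}$); and the passage from $\Sigma^{\rm tri}_N$ to $\Sigma^{\rm tri,\sim}_N$ for even $N$ records the identification of related involutions, coming from conjugation by $(Z^\rho)^{N/2}$.

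The main obstacle is precisely the support analysis in the second step. Unlike in Lemma~\ref{L:R-symm}, $K$ need not be a generalised permutation matrix, so its diagonal and off-diagonal entries interact, and one must carefully exclude all ``mixed'' patterns that are partly cross-like and partly triangular; the very bifurcation into the two families \eqref{KS} and \eqref{KT}, and the precise inequalities defining $\Sigma^{\rm sym}_N$ and $\Sigma^{\rm tri}_N$ (such as $\ell<r\le\tfrac12(N+\ell)$ and $0<\si(j)\le\si(i)\le m$), emerge only from this case-by-case argument. A secondary difficulty is bookkeeping the action of the symmetry group and of the equivalence transformations precisely enough to reduce the resulting list to the stated canonical representatives without redundancy or omission.
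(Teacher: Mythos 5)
There is a genuine gap. For the ``only if'' direction your outline follows the same general strategy as the paper (matrix coefficients of the reflection equation, specialization at degenerate values such as $u=-qv$, separation of variables, a combinatorial analysis of the support, and a final reduction by the equivalences of Lemma~\ref{L:K-symm}), but it defers exactly the part that constitutes the proof: you yourself note that the bifurcation into \eqref{KS} and \eqref{KT} and the inequalities defining $\Sigma^{\rm sym}_N$, $\Sigma^{\rm tri}_N$ ``emerge only from this case-by-case argument'', yet no part of that argument is carried out. Worse, the one structural claim you do commit to --- that in the generalized-cross case ``the diagonal entries and the individual off-diagonal $2\times2$ blocks decouple'' --- is false, and if it were true the classification would come out wrong: solving each $2\times2$ block independently would produce independent parameters $(\la_i,\mu_i)$ per block, whereas the theorem asserts a single pair $(\la,\mu)$ for the whole matrix. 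The single pair arises precisely from the mixed-sector equations coupling distinct $\si$-orbits and coupling orbits to fixed points: these force $\al_{i\si(i)}\al_{j\si(j)}=1$ for split pairs and $\al_{i\si(i)}=\al_{j\si(j)}$ for nested pairs, the quadratic constraints of the form $\beta_{i\si(i)}^2=\ga_{i\si(i)k}(\ga_{i\si(i)k}+1)$ tying off-diagonal entries to diagonal fixed points, the exclusion of fixed points lying strictly inside nested pairs, and the fact that nested pairs must form the contiguous sequence as in \eqref{K:seq-ii}. Without these inter-block identities you cannot recover the sets $\Sigma^{\rm sym}_N$, $\Sigma^{\rm tri}_N$, nor rule out the ``mixed'' cross/triangular patterns, so the core of the classification is missing.

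The ``if'' direction is also only asserted. The block-restriction observation (that $R$ restricted to $\C^S\ot\C^S$ is again a Bazhanov--Jimbo $R$-matrix and that the reflection equation respects the block decomposition induced by $K$) is a legitimate starting point, but the mixed two-block sectors are exactly where the nontrivial identities live, and none of those computations are performed. The fallback ``alternatively this follows from the affinization result behind Corollary~\ref{C:Aff}'' is circular as stated: in the paper that corollary is a consequence of Theorem~\ref{T:K} together with Mudrov's classification. What the paper actually does for this direction is independent of the corollary: it rewrites $K(u)$ in the affinized forms \eqref{KS:aff}, \eqref{KT:aff}, uses $\check R(u)=f_q(u)\big((1-u)\check R_q+(q-q^{-1})u\,I\big)$ and the characteristic identity of $\check R_q$, and reduces \eqref{RE1} to the constant identities \eqref{KK}--\eqref{KKK2} and \eqrefs{RE:C1}{RE:C4} verified in Lemma~\ref{L:KS-const}. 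To repair your proposal you would either have to carry out the block-by-block and mixed-sector verifications explicitly, or reproduce this affinization argument rather than cite its downstream corollary.
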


We call matrix \eqref{KS} the {\it canonical symmetric solution} of \eqref{RE}. We call \eqref{KT} the {\it canonical triangular solution}. More particularly, we call \eqref{KT} the {\it canonical diagonal solution} if $\si=\id$.

\begin{rmk} [{{\it Specializations of the canonical symmetric solution\,}}] \label{R:spec} 
In \eqref{KS} one can make various choices for and take limits of the parameters:

\vspace{-.75em} 
 
\begin{itemize}
\item Allowing $\ell=r$ one recovers (up to an overall scalar) the canonical diagonal solution \eqref{KT} with $m=\ell$ and $\al = \la^{-1} \mu^{-1}$.

\item Setting $\la=\mu=1$ yields the generalized involution matrix
\equ{
\sum_{1\le i \le \ell}u\,E_{ii} + \sum_{\ell< i\le r} \Big(E_{i\si(i)} + E_{\si(i)i}\Big) + \sum_{r<i\le N+\ell-r} E_{ii} .   \label{KP}
}

\item The $\mu\to0$ limit gives the identity matrix. The $\la\to 0$ limit gives the canonical diagonal solution \eqref{KT} with $m=N+\ell-r$ and $\al=0$. 

\item The $N=2$, $\ell=0$, $r=1$ case, upon conjugating with a constant diagonal matrix, is the general solution found in \cite{dVGR}. 
The type I solutions found in \cite{MLS} are equivalent to the $r-\ell=1$ cases; their type II ($N$ odd) solutions are equivalent to the $r-\ell=\tfrac12(N-1)$ cases; their type II ($N$ even) solution are equivalent to the $r-\ell=\tfrac12N$ and $r-\ell=\tfrac12N-1$ cases. 
The solution given by Lemma 6.2 in \cite{CGM} corresponds to the $\ell=0$ case.
\hfill \rmkend 
\end{itemize}
\end{rmk}

Below we state two technical lemmas that will assist us in proving Theorem \ref{T:K}.
The first lemma will need the following sets:
\aln{
\Sigma^{\rm sym'}_N &= \big\{ (\ell,r,t,\si) \;\;\;:\, 0\le \ell < r < \tfrac12(\ell+t),\; 2\le t\le N \text{ and } \si(i)=t+\ell-i+1 \text{ for } \ell<i\le r  \big\} , 
\\[.2em]
\Sigma^{\rm sym''}_N &= \left\{ (\ell,m,r,\si) \,:\,\begin{aligned} 
& 2\le 2\ell < m < r \le \tfrac12(m+N) \text{ and } \si(i)=m-i+1 \text{ for } 0<i\le m \\[-.2em] 
& \text{and } \si(i)=N+m-i+1 \text{ for } m<i\le N \end{aligned}\; \right\} , 
\\[.2em]
\Sigma^{\rm tri'}_N &= \left\{ (\ell,m,r,\si) \,:\, \begin{aligned} 
& 0\le \ell \le m \le r\le N \text{ and } 0<\si(j)\le\si(i)\le \ell \;\text{for}\; \ell<i\le j\le m \\[-.2em] 
& \text{and } r<\si(j)\le\si(i)\le N \text{ for } m<i\le j\le r \text{ s.t.\ } \si(i)\ne i,\,\si(j)\ne j \end{aligned} \right\} .
}
Again, it is understood that $\si\in\mf{S}_N$ is an involution satisfying $\si(i)=i$ unless otherwise specified.

\begin{lemma} \label{L:[K]} 
(i) If $K$ is given by \eqref{KS}, then the solutions of \eqref{RE} equivalent to $K$ are given by 
\ali{
g(u) \Bigg( I + \frac{u-u^{-1}}{\la^{-1}\mu^{-1}+u^{-1}} &\Bigg( \sum_{1\le i \le \ell} E_{ii} - \frac{1}{\la\mu u} \sum_{t< i \le N} E_{ii} \el & \qu + \sum_{\ell<i\le r} \frac{1}{\la-\mu u}\Big( \la E_{ii} + \la^{-1} E_{\si(i)\si(i)} - c_i E_{i\si(i)} - c_i^{-1}E_{\si(i)i}  \Big) \Bigg) \Bigg) \label{KS1}
}
with $(\ell,r,t,\si)\in\Sigma^{\rm sym'}_N$ or 
\ali{
g(u) \Bigg( I + \frac{u-u^{-1}}{\la^{-1}\mu^{-1}+u^{-1}} &\Bigg(  \sum_{1\le i \le \ell} E_{ii}  + \sum_{\ell<i\le m-\ell} E_{ii} \el & \qu + \frac{1}{\la-\mu u} \sum_{1\le i \le \ell}\Big( \mu^{-1} u E_{ii} + \la E_{\si(i)\si(i)} - c_i u E_{i\si(i)} - c_i^{-1} u E_{\si(i)i} \Big) \el & \qu +  \frac{1}{\la-\mu u} \sum_{m< i\le r} \Big( \la E_{ii} + \la^{-1} E_{\si(i)\si(i)} - c_{i} E_{i\si(i)} - c_{i}^{-1} E_{i\si(i)}\Big)\Bigg) \label{KS2}
} 
with $(\ell,m,r,\si)\in\Sigma^{\rm sym''}_N$ and $\la,\mu,c_i\in\C^\times$.

\noindent (ii) If $K$ is given by \eqref{KT}, then the solutions of \eqref{RE} equivalent to $K$ are given by 
\ali{
g(u) &\Bigg( \sum_{1\le i \le \ell} u\,E_{ii} + \sum_{\ell< i\le r} \frac{\al-u}{\al\,u-1}\, E_{ii} + \sum_{r< i\le N} u^{-1} E_{ii} \el
& \qu\; + \sum_{\ell < i \le r} (\del_{i\le m}\tsp u + \del_{i>m}) \frac{u-u^{-1}}{\al\,u-1}\,\Big(c_{i\si(i)} E_{i\si(i)} + c_{\si(i)i} E_{\si(i)i} \Big) \Bigg) \label{KT1}
}
with $(\ell,m,r,\si)\in\Sigma^{\rm tri'}_N$ and $\al,c_{i\si(i)},c_{\si(i) i} \in\C$ such that either $c_{i \si(i)}$ or $c_{\si(i) i}$ is 0 for each $\ell < i \le r$.

\noindent Everywhere above $g \in \Rat^\times$ is arbitrary.
\end{lemma}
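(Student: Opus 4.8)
The plan is to prove Lemma \ref{L:[K]} by applying the transformations of Lemma \ref{L:K-symm} in a controlled way and tracking how the index data $(\ell,r,\si)$ or $(m,\si)$ transforms. Since the transformations in Lemma \ref{L:K-symm} generate the equivalence relation, every solution equivalent to \eqref{KS} is obtained by composing: (i) $u\mapsto -u$, (ii) scaling by $g\in\Rat^\times$, (iii) $u\mapsto Z(\eta/u)^{-1}K(u)Z(\eta u)$ for a symmetry $Z$ of $R$, and (iv) transposition. By Lemma \ref{L:R-symm} any symmetry $Z$ is, up to a rational scalar (absorbed into $g$), a product of a c.i.d.\ matrix $D=\sum_i d_i E_{ii}$ and a power $(Z^\rho)^p$. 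So the first step is a normal-form argument: compose an arbitrary word in the generators and reduce it to the shape $K\mapsto g(u)\cdot D_1 (Z^\rho(\eta/u))^{-p} \,\psi(K(\pm u))\, (Z^\rho(\eta u))^p D_2$ with $\psi\in\{\id,\t\}$, using that (i) and (iv) commute with (iii) up to relabelling parameters, that two c.i.d.\ conjugations compose to one, and that $(Z^\rho(u))^N = uI$ so only $0\le p<N$ matters.

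Next I would compute each elementary effect on the matrix \eqref{KS} explicitly. Conjugation by a c.i.d.\ matrix $D$ rescales the off-diagonal entries $E_{i\si(i)}\mapsto (d_i/d_{\si(i)})E_{i\si(i)}$, which is exactly what produces the free parameters $c_i$ (subject to $c_i c_{\si(i)}'= $ products being consistent under $\si$, hence effectively one free $c_i\in\C^\times$ per $\si$-orbit). Transposition $\psi=\t$ swaps $E_{i\si(i)}\leftrightarrow E_{\si(i)i}$, i.e.\ inverts $c_i$; together with $u\mapsto u^{-1}$-type identities it can be absorbed. The substitution $u\mapsto -u$ only changes the sign of $u$ inside the scalar prefactors and can be absorbed by adjusting $\la,\mu$ within the range $0\le\Arg<\pi$ (this is exactly why that argument restriction appears). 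The real content is conjugation by $(Z^\rho(\eta\,u^{\pm1}))^p$: since $Z^\rho$ cyclically shifts basis vectors (with a spectral factor $u$ wrapping around from index $N$ to index $1$), conjugating \eqref{KS} by it cyclically permutes the diagonal support $\{1,\dots,\ell\}$ and the ``symmetric block'' support $\{\ell+1,\dots,r\}\cup\{\si(i)\}$ around the cycle $(1\,2\,\cdots\,N)$, and introduces extra powers of $u$ on entries that straddle the wrap-around point. Carefully bookkeeping which diagonal entries acquire a factor $u$, $u^{-1}$ or stay as the original rational function yields precisely the three regimes of diagonal coefficients in \eqref{KS1} (namely $E_{ii}$ for $i\le\ell$, the $-\tfrac{1}{\la\mu u}E_{ii}$ for $i>t$, and the symmetric block for $\ell<i\le r$) and in \eqref{KS2} (the mixed block where some diagonal entries pick up $\mu^{-1}u$ and off-diagonal ones pick up $u$). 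The bookkeeping also forces the constraints defining $\Sigma^{\rm sym'}_N$ and $\Sigma^{\rm sym''}_N$: the inequalities $0\le\ell<r<\tfrac12(\ell+t)$, $2\le t\le N$ in the first case and $2\le 2\ell<m<r\le\tfrac12(m+N)$ in the second are exactly the conditions for the shifted block to ``fit'' without the symmetric pairs colliding with the pure-diagonal indices or wrapping onto themselves. Part (ii), for the triangular solution \eqref{KT}, is handled the same way: c.i.d.\ conjugation rescales $E_{i\si(i)}$ (keeping the constraint that $c_{i\si(i)}$ or $c_{\si(i)i}$ is $0$, inherited from $\veps_{i\si(i)}+\veps_{\si(i)i}=1$), and $(Z^\rho)^p$-conjugation cyclically shifts the block $\{m+1,\dots,N\}$ producing the three index ranges with diagonal entries $u$, $\tfrac{\al-u}{\al u-1}$, $u^{-1}$ and the $\del_{i\le m}u+\del_{i>m}$ factors, with the index constraints collected into $\Sigma^{\rm tri'}_N$.

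Finally I would argue the converse direction — that each matrix listed in \eqref{KS1}, \eqref{KS2}, \eqref{KT1} is indeed equivalent to \eqref{KS} or \eqref{KT} — which is immediate since the construction above is reversible: each elementary step is invertible, so running the normal-form word backwards recovers \eqref{KS}/\eqref{KT}. One should also note that different words can give the same matrix, so ``the solutions equivalent to $K$ are given by'' means the listed families exhaust the equivalence class, not that the parametrization is injective; this is consistent with the lemma's phrasing.

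The main obstacle is the middle step: pinning down exactly how conjugation by $(Z^\rho(\eta u))^p$ acts on the support pattern and on the spectral powers, and then extracting the precise combinatorial constraints ($\Sigma^{\rm sym'}_N$, $\Sigma^{\rm sym''}_N$, $\Sigma^{\rm tri'}_N$) that characterize which shifted configurations actually arise. This is essentially a careful but unilluminating index-chase; the danger is off-by-one errors at the wrap-around index $N\to 1$ where the spectral factor $u$ enters, and ensuring no spurious families are included or genuine ones omitted. Everything else (the $g\in\Rat^\times$ freedom, the $u\mapsto-u$ absorption into the $\Arg$-normalized $\la,\mu,\al$, the transposition swapping off-diagonal coefficients) is routine once the normal form of a general equivalence word is established.
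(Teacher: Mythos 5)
Your proposal is correct and follows essentially the same route as the paper: both reduce the equivalence to the generators of Lemma \ref{L:K-symm} (with symmetries decomposed via Lemma \ref{L:R-symm} into c.i.d.\ matrices and powers of $Z^\rho$), then track the orbit of \eqref{KS} and \eqref{KT} under $(Z^\rho)^k$-shift conjugation (producing the index data of $\Sigma^{\rm sym'}_N$, $\Sigma^{\rm sym''}_N$, $\Sigma^{\rm tri'}_N$ and the wrap-around powers of $u$), c.i.d.\ conjugation (the $c_i$), the sign flip $u\mapsto -u$ absorbed into $\la\to-\la$, $\mu\to-\mu$, and transposition. The paper likewise only sketches the ``tedious computation'', giving the explicit four-case formula for the shifted matrices, so your outline matches its level of rigor and content.
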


\begin{proof}
This follows by a tedious computation using Lemma \ref{L:K-symm}. Hence we provide the idea behind the proof only. Recall from Lemma \ref{L:R-symm} that the group of symmetries of $R$ is generated by \cid matrices and $Z^\rho$ defined in \eqref{Z} satisfying $(Z^\rho(u))^N= u\,I$. Set $D=\sum_{1\le i \le N} d_i E_{ii}$ with $d_i \in \C^\times$. Denote the matrix \eqref{KS} by $K_{\ell,r}(u)$, \eqref{KS1} by $K'_{\ell,r,t}(u)$ and \eqref{KS2} by $K''_{\ell,m,r}(u)$. Then, applying parts (ii) and (iii) of Lemma \ref{L:K-symm}, we compute
\aln{
g(u)\,D^{-1} & (Z^\rho(\tfrac{\eta}{u}))^{-k} K_{\ell,r}(u) (Z^\rho(\eta u))^{k} D \\[.25em] 
& = 
\begin{cases}
K'_{\ell+k,r+k,N+k}(u) &\text{if }\; -\ell \le k\le 0, \\[.25em]
K''_{k,\ell+2k,r+k}(u) &\text{if }\;  0< k \le r-\ell, \\[.25em]
\frac{1+\la\mu u}{1+\la\mu u^{-1}} \,\big(K'_{\ell-r+k,k,r+k}(u)\big|_{\la\leftrightarrow \mu^{-1}}\big) &\text{if }\; r-\ell < k\le N- r ,\\[.25em]
\frac{1+\la\mu u}{1+\la\mu u^{-1}} \,\big(K''_{r-N+k,\ell-N+2k,k}(u)\big|_{\la\leftrightarrow \mu^{-1}}\big) &\text{if }\; N-r < k< N-\ell ,
\end{cases}
\intertext{with $c_i = \eta^{\veps}\, d_i^{-1} d_{\si(i)}$ and $\veps$ defined by}
\veps &= -1 \;\text{ if } \begin{cases}0<k\le r-\ell \text{ and } 1\le i\le k , \\ 
r-\ell<k\le N-r \text{ and } \ell-r+k<i\le k , \\ 
N-r<k< N-\ell \text{ and } \ell-N+2k<i\le k
\end{cases}\\
\veps &= 1 \qu\text{ if } 0<k\le r-\ell \text{ and } \ell+2k<i\le r+k , \\
\veps &= 0 \qu\text{ otherwise}.
}
Applying part (i) of Lemma \ref{L:K-symm}, we find
\[
K_{\ell,r}(-u) = K_{\ell,r}(u)\big|_{\mu\to - \mu} , \qq
D^{-1}K_{\ell,r}(-u) D = K_{\ell,r}(u)\big|_{\la\to - \la} 
\]
with $d_i=-1$ if $i\le r$ and $d_i=1$ if $i>r$, and so $\la,\mu\in\C^\times$ for \eqref{KS1} and \eqref{KS2}. Finally, note that \eqref{KS} is invariant under the usual transposition; for \eqref{KS1}, \eqref{KS2} it is equivalent to a conjugation with an appropriate \cid matrix $D$. 

In case of \eqref{KT} the arguments are analogous with the following two exceptions. 
Let $N$ be even and $(\tfrac12N,\si_1),(\tfrac12N,\si_2)\in\Sigma^{\rm tri}_N$ be such that $\si_1$ and $\si_2$ are related. 
Let $\pi(\si_1)$, $\pi(\si_2) \in \GL(\C^N)$ be the corresponding permutation matrices. 
Then $(Z^\rho(u^{-1}))^{-N/2}\, \pi(\si_1)\,(Z^\rho(u))^{N/2} = \pi(\si_2)$ by virtue of Lemma~\ref{L:K-symm}~(iii). 
Consequently, there is an equivalence between the elements in the set of solutions of \eqref{CtRE} associated to $(\tfrac12N,\si_1)$ and the one associated to $(\tfrac12N,\si_2)$. 
Finally, Lemma \ref{L:K-symm} (iv) allows us to fix the position (i.e., in the upper or in the lower triangular part) of one of the off-diagonal entries of $K(u)$; a choice is given by the last constraint in Theorem \ref{T:K}.
\end{proof}

The Lemma below will assist us in proving that the matrix \eqref{KS} is indeed a solution of \eqref{RE}.

\begin{lemma} [{See \cite[Prop.~2.2]{DNS} and \cite[Thm.~2]{Mu}}] \label{L:KS-const}
Define $G, Q \in \End(\C^N)$ by 
\equ{
G = \sum_{r<i<\si(r)} \la E_{ii} + \sum_{\ell< i \le r} \Big( E_{i\si(i)} + E_{\si(i)i} + (\la-\la^{-1}) E_{\si(i)\si(i)} \Big) , \qq Q=\sum_{1\le i \le \ell} E_{ii} \label{KQ}
}
with $(\ell,r,\si)\in\Sigma^{\rm sym}_N$ and $\la\in\C^\times$. Then
\equ{
(G-\la I)(G+\la^{-1}I) = 0 \label{KK} 
}
if $\ell=0$, and
\gat{
(G-\la I)(G+\la^{-1}I)G = 0, \label{KKK1} \\
Q^2=Q,\qu GQ=GQ=0,\qu Q=I + (\la-\la^2)G - G^2  \label{KKK2}
}
if $\ell>0$. Moreover, setting $\check{R}_q = P R_q$,
\gat{
\check{R}_q G_2 \check{R}_q G_2 = G_2 \check{R}_q G_2 \check{R}_q , \label{RE:C1} \\
\check{R}_q G_2 \check{R}_q Q_2 = Q_2 \check{R}_q G_2 \check{R}_q , \label{RE:C2} \\
\check{R}_q Q_2 \check{R}_q Q_2 - Q_2 \check{R}_q Q_2 \check{R}_q = (q-q^{-1})(\check{R}_q Q_2 - Q_2 \check{R}_q), \label{RE:C3} \\
\check{R}_q Q_2 \check{R}_q G_2 - G_2 \check{R}_q Q_2 \check{R}_q = (q-q^{-1})(Q_2 \check{R}_q G_2 - G_2 \check{R}_q Q_2) . \label{RE:C4}
}
\end{lemma}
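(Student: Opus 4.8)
The plan is to verify each identity by a direct but structured computation, exploiting the block structure of $G$ and $Q$ relative to the distinguished index set. First I would establish the purely algebraic relations \eqref{KK}, \eqref{KKK1} and \eqref{KKK2}. Write $G = G^{(0)} + (\la - \la^{-1})\Delta$ where $G^{(0)} = \sum_{\ell<i\le r}(E_{i\si(i)} + E_{\si(i)i}) + \sum_{r<i<\si(r)}\la E_{ii}$ splits into a $2\times2$ block on each orbit pair $\{i,\si(i)\}$ with $\ell<i\le r$ (the block being $\la^{-1}$ times a reflection plus a correction) and a diagonal part $\la I$ on the fixed indices $r<i<\si(r)$, and $\Delta = \sum_{\ell<i\le r}E_{\si(i)\si(i)}$. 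Then $G$ acts block-diagonally: it annihilates $e_i$ for $1\le i\le\ell$ and for $\si(r)<i$; on each pair $\{i,\si(i)\}$ with $\ell<i\le r$ it acts by a $2\times2$ matrix whose eigenvalues one checks to be $\la$ and $-\la^{-1}$; and on the central fixed block it acts as $\la I$. From this eigenvalue analysis \eqref{KK} and \eqref{KKK1} are immediate, and the identity $Q = I + (\la-\la^2)G - G^2$ together with $Q^2 = Q$, $GQ = QG = 0$ follows by checking it on each block, since $Q$ is the projector onto the span of $\{e_i : 1\le i\le\ell\}$ and both $G$ and $G^2$ vanish there while $I + (\la-\la^2)G - G^2$ vanishes on the complementary blocks by the $2\times2$ and scalar computations.

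Next I would prove the braid-type relations \eqref{RE:C1}--\eqref{RE:C4}. Here the key structural observation is that $\check R_q = PR_q$ acts on $e_k\ot e_l$ by \eqref{Z0} with coefficients $a_{kl}, b_{kl}$ supported on the diagonal-swap and off-diagonal-preserving parts, and crucially the orbit structure of $\si$ is \emph{compatible with the order structure} used in $R_q$: on the pairs $\{i,\si(i)\} = \{i, t+\ell-i+1\}$ (with $t$ depending on context, here $t = N$ when $\ell=0$ and the structure of $\Sigma^{\rm sym}_N$ controls it in general) the blocks $G_2$ and $Q_2$ interact with $\check R_q$ only through finitely many index-pairs, so each identity reduces to a finite collection of scalar identities among the $a$'s and $b$'s. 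I would organize this by fixing $e_k\ot e_l$ and computing both sides; the left- and right-hand sides differ only when $l$ (resp.\ $k$) lies in the support of $G$ or $Q$, and one then checks the resulting linear combinations of $e_{k'}\ot e_{l'}$ coincide. The relations \eqref{RE:C1} and \eqref{RE:C2} should be the "homogeneous" cases (no correction term), reflecting that $G_2$ and $Q_2$ commute appropriately with the $\check R_q$-braiding on their supports, while \eqref{RE:C3} and \eqref{RE:C4} carry the $(q-q^{-1})$ correction arising precisely from the diagonal entry $q^{\del_{ij}}$ in $R_q$ — the term $(q-q^{-1})$ is exactly the failure of $a_{ii}$ versus $a_{ij}$ and of the $b$-coefficients to be symmetric under swapping equal versus unequal indices.

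The main obstacle will be the bookkeeping in \eqref{RE:C3} and \eqref{RE:C4}: unlike \eqref{RE:C1}--\eqref{RE:C2}, these are not "commutation up to conjugation" but genuine deformed relations, and one must track carefully, for each basis vector $e_k\ot e_l$, the contributions from $k$ or $l$ being $\le\ell$ (in the support of $Q$) versus being a paired index $\ell<i\le r$ (in the support of $G$). The cleanest route is probably to first verify \eqref{RE:C3} as a special case of the standard Hecke-type relation for the constant $R$-matrix restricted to the rank-one projector $Q_2$ — this is essentially \cite[Prop.~2.2]{DNS} — and then derive \eqref{RE:C4} by "interpolating" between \eqref{RE:C1} (pure $G$) and \eqref{RE:C3} (pure $Q$) using the polynomial identity \eqref{KKK2} expressing $Q$ in terms of $G$; substituting $Q_2 = I_2 + (\la-\la^2)G_2 - G_2^2$ into \eqref{RE:C4} and repeatedly applying \eqref{RE:C1} should collapse it to \eqref{RE:C3}, with the $(q-q^{-1})$ factor propagating linearly. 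One should double-check that the index constraints defining $\Sigma^{\rm sym}_N$ (namely $r\le\tfrac12(N+\ell)$, which guarantees the paired orbits do not overlap and $\si$ is a genuine involution with the fixed central block well-defined) are exactly what is needed for the block decomposition to be valid; this is where a naive computation could silently go wrong, so I would state that constraint explicitly at the start of the proof.
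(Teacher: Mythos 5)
Your handling of \eqref{KK}--\eqref{KKK2} is essentially the paper's own argument (the paper simply applies everything to the basis vectors $e_i$), but the details indicate the block computation was not actually carried out. First, $G$ does not annihilate $e_i$ for $i>\si(r)$: those indices are precisely the partners $\si(j)$, $\ell<j<r$, and belong to your $2\times2$ blocks, so your list of blocks is internally inconsistent as written (harmless once corrected: the decomposition is $[1,\ell]$, the pairs $\{i,\si(i)\}$ with $\ell<i\le r$, and the central interval $(r,\si(r))$). Second, carrying out the check you describe yields $Q=I+(\la-\la^{-1})G-G^2$, not the stated $I+(\la-\la^2)G-G^2$: on the central block $G=\la$ and $1+(\la-\la^2)\la-\la^2=1-\la^3\neq0$ for generic $\la$, while on a pair block $G^2=(\la-\la^{-1})G+I$, so the stated combination equals $(\la^{-1}-\la^{2})G\neq0$ there. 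The coefficient in \eqref{KKK2} is evidently a typo (the identity is invoked with coefficient $\la-\la^{-1}$ in the proof of Theorem \ref{T:K}), but your assertion that the stated expression ``vanishes on the complementary blocks'' is exactly what the computation is supposed to test. Also, $Q$ has rank $\ell$, not rank one, and when $\la^2=-1$ the two eigenvalues coincide, so one should conclude via the characteristic polynomial $(x-\la)(x+\la^{-1})$ of each $2\times2$ block rather than via diagonalizability.

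The genuine gap is in \eqref{RE:C1}--\eqref{RE:C4}. The substantive content --- that $G$ satisfies \eqref{RE:C1} and $Q$ satisfies \eqref{RE:C3} --- is only gestured at (a reduction to ``scalar identities'' and a citation), whereas the paper, terse as it is, verifies these by applying both sides to $e_i\ot e_j$; that is the computational heart of the lemma. More importantly, your derivation of \eqref{RE:C4} does not work as described: substituting $Q_2=I+cG_2-G_2^2$ removes $Q$ altogether, so nothing can ``collapse to \eqref{RE:C3}'', and the naive collapse is circular. Indeed, using \eqref{RE:C1} and the Hecke relation $\check{R}_q^2=(q-q^{-1})\check{R}_q+I$, the substituted \eqref{RE:C4} is equivalent to $\check{R}_qG_2^2\check{R}_qG_2-G_2\check{R}_qG_2^2\check{R}_q=(q-q^{-1})\big(G_2^2\check{R}_qG_2-G_2\check{R}_qG_2^2\big)$, and the only simplification of $G_2^2$ you have offered is $G_2^2=(\la-\la^{-1})G_2+I-Q_2$, which returns you verbatim to \eqref{RE:C4}. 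The reduced identity is in fact a consequence of \eqref{RE:C1} and the Hecke relation, but that requires an idea you did not supply: with $W:=\check{R}_qG_2\check{R}_qG_2=G_2\check{R}_qG_2\check{R}_q$ one has $WG_2=G_2W$ (both equal $G_2\check{R}_qG_2\check{R}_qG_2$) and $G_2\check{R}_qG_2=W\check{R}_q^{-1}$, whence, using $\check{R}_q^{-1}=\check{R}_q-(q-q^{-1})$, both sides reduce to $(q-q^{-1})\,W\,(G_2\check{R}_q-\check{R}_qG_2)$. (The same principle does give \eqref{RE:C2} cheaply: $[\check{R}_qG_2\check{R}_q,G_2]=0$ by \eqref{RE:C1}, hence $[\check{R}_qG_2\check{R}_q,Q_2]=0$ because $Q_2$ is a polynomial in $G_2$.) So either supply this argument, or do what the paper does and verify \eqref{RE:C1} and \eqref{RE:C2}--\eqref{RE:C4} directly on basis vectors.
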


\begin{proof}
The equalities \eqrefs{KK}{KKK2} and \eqrefs{RE:C1}{RE:C4} are verified by applying each of them to the vectors $e_i$ and $e_i\ot e_j$, respectively. 
For example, for \eqref{KK} and \eqref{KKK1} we write $G\,e_i = g_{ii}\,e_i + g_{\si(i)i}\,e_{\si(i)}$, where $g_{ii} = \del_{r<i<\si(r)} \la + \del_{\si(r)\le i\le N} (\la-\la^{-1})$ and $g_{\si(i)i} = \del_{\ell<i\le r} + \del_{\si(r)\le i\le N} $.  
Then, using properties of delta functions, we compute
\aln{
(G &-\la I)(G+\la^{-1}I)\,e_i \\ 
& = \big((g_{ii}-\la)(g_{ii}+\la^{-1})+g_{i\si(i)}g_{\si(i)i}\big)\,e_i + \big(g_{\si(i)i}\,(g_{ii}+\la^{-1}) + (g_{\si(i)\si(i)}-\la)\,g_{\si(i)i}\big)\,e_{\si(i)} 
\\
& = \big(g_{ii}^2 - g_{ii}(\la-\la^{-1}) + g_{\si(i)i} -1 \big)\,e_i  + \big(g_{\si(i)i}\,(g_{ii}+g_{\si(i)\si(i)} - \la +\la^{-1}) \big)\,e_{\si(i)}  \\
& = (\del_{\ell<i\le r} + \del_{r<i<\si(r)} + \del_{\si(r)\le i\le N} - 1 )\,e_i  = -\del_{0<i\le \ell} \,e_i ,
}
which equals zero if $\ell=0$ thus proving \eqref{KK}. 
For \eqref{KKK1} we need to multiply from left with $G$ yielding zero as required. The remaining identities are verified in a similar way. 
\end{proof}

\begin{proof} [Proof of Theorem \ref{T:K}] 


($\Longrightarrow$) We begin by showing that matrices \eqref{KS} and \eqref{KT} are solutions of the reflection equation \eqref{RE}. It will be convenient to consider \eqref{RE} in the braided form,
\equ{ 
\check{R}(\tfrac{u}{v})\, K_2(u)\, \check{R}(uv)\, K_2(v) = K_2(v)\, \check{R}(uv)\, K_2(u)\, \check{R}(\tfrac{u}{v}) \label{RE1} .
}
Using the characteristic identity $(\check{R}_q - q\,I)(\check{R}_q + q^{-1} I) = 0$ we rewrite the $R$-matrix $\check{R}(u)$ as
\equ{
\check{R}(u) = f_q(u) \Big( (1-u) \check{R} + (q-q^{-1})\,u\,I\Big) \label{R(u):check}
}
and, using \eqref{KQ}, we rewrite the matrix \eqref{KS} as 
\equ{
K(u) = I + \frac{u-u^{-1}}{(\la^{-1}\mu^{-1}+u^{-1})(\la-\mu u)} \Big(\la I - \mu u\,Q - G\Big). \label{KS:aff}
}
Recall that $Q^2=Q$ and $GQ=QG=0$. In particular, $Q=0$ if $\ell=0$. Using these identities together with \eqref{R(u):check} and \eqref{KS:aff} we find that \eqref{RE1} is equivalent to the equations \eqrefs{RE:C1}{RE:C4} and 
\gan{
\check{R}_q \check{R}_q G_2 - G_2 \check{R}_q \check{R}_q = (q-q^{-1}) (\check{R}_q G_2 - G_2 \check{R}_q), \\
\check{R}_q \check{R}_q Q_2 - Q_2 \check{R}_q \check{R}_q = (q-q^{-1}) (\check{R}_q Q_2 - Q_2 \check{R}_q), \\
\check{R}_q G_2 G_2 - G_2 G_2 \check{R}_q = (\la-\la^{-1}) (\check{R}_q G_2 - G_2 \check{R}_q) - (\check{R}_q Q - Q \check{R}_q) .
}
(We substituted \eqref{R(u):check} and \eqref{KS:aff} to \eqref{RE1}, multiplied by the common denominator, and took coefficients at different powers of $u$ and $v$.) 
The first two equalities hold because of the characteristic identity of $\check{R}_q$. The third equality is true because of \eqref{KKK2}.

We use the same approach to show that the matrix \eqref{KT} is also a solution to \eqref{RE}. We rewrite \eqref{KT} as
\equ{
K(u) = I + \frac{u-u^{-1}}{\al-u}\,Q, \label{KT:aff}
}
where $Q$ denotes the sum in \eqref{KT}. Observe that $Q^2=Q$. The braided reflection equation \eqref{RE1} is then equivalent to the following two equations 
\gan{
\check{R}_q Q_2 \check{R}_q Q_2 = Q_2 \check{R}_q Q_2 \check{R}_q , \\
\check{R}_q \check{R}_q Q_2 - Q_2 \check{R}_q \check{R}_q = (q-q^{-1}) (\check{R}_q Q_2 - Q_2 \check{R}_q) .
}
The first equality is the braided constant reflection equation for the matrix $Q$; it is a direct computation to verify that it is indeed is true, i.e., as in the proof of Lemma \ref{L:KS-const} (this also follows from \cite[Thm.~2]{Mu}). The second equality holds by the same arguments as before.


($\Longleftarrow$) 
Conversely, suppose that $K(u) = \sum_{ij} \key_{ij}(u)\,e_{ij}$ with $\key_{ij}\in\Rat$. Then 
\aln{
& (e^*_i \ot e^*_j)\, \check{R}(\tfrac{u}{v})\, K_2(u)\, \check{R}(uv)\, K_2(v) \,(e_k \ot e_l) 
\el
& \qq = \sum_{1\le r,s\le N} \Big( a_{ij}(\tfrac uv)\,\key_{ir}(u)\, e^*_j \ot e^*_r + b_{ij}(\tfrac uv)\, \key_{jr}(u)\, e^*_i \ot e^*_r \Big) \el[-.75em] & \hspace{2.75cm} \times \Big( a_{ks}(uv)\, \key_{sl}(v)\, e_s \ot e_k + b_{ks}(uv)\, \key_{sl}(v)\, e_k \ot e_s \Big) \\
& \qq = \sum_{1\le r \le N} \Big( \del_{jk}\, a_{ij}(\tfrac uv)\,b_{jr}(uv)\,\key_{ir}(u)\,\key_{rl}(v) + \del_{ik}\, b_{ij}(\tfrac uv)\, b_{ir}(uv)\, \key_{jr}(u)\, \key_{rl}(v) \Big) \el
& \hspace{2.75cm} + a_{ij}(\tfrac uv)\,a_{kj}(uv)\, \key_{ik}(u)\,\key_{jl}(v) + b_{ij}(\tfrac uv)\,a_{ki}(uv)\, \key_{jk}(u)\,  \key_{il}(v) 
\intertext{and}
& (e^*_i \ot e^*_j)\, K_2(v)\, \check{R}(uv)\, K_2(u)\, \check{R}(\tfrac{u}{v}) \,(e_k \ot e_l) 
\el
& \qq= \sum_{1\le r,s\le N} \Big( \key_{jr}(v)\, a_{ir}(uv)\, e^*_r \ot e^*_i + \key_{jr}(v)\, b_{ir}(uv)\, e^*_i \ot e^*_r \Big) \el[-.75em] & \hspace{2.75cm} \times \Big( \key_{sk}(u)\, a_{kl}(\tfrac uv)\, e_l \ot e_s + \key_{sl}(u)\,b_{kl}(\tfrac uv)\, e_k \ot e_s \Big)
\\
& \qq = \sum_{1\le r \le N} \Big( \del_{il}\, b_{ir}(uv)\,a_{ki}(\tfrac uv)\,\key_{jr}(v)\,\key_{rk}(u) +  \del_{ik}\, b_{ir}(uv)\,b_{il}(\tfrac uv)\,\key_{jr}(v)\, \key_{rl}(u) \Big) \el
& \hspace{2.75cm} +  a_{il}(uv)\,a_{kl}(\tfrac uv)\,\key_{jl}(v)\,\key_{ik}(u) + a_{ik}(uv)\,b_{kl}(\tfrac uv)\,\key_{jk}(v)\,\key_{il}(u). 
}
Taking the difference of the expressions above we obtain
\eqa{
\label{K1}
& \sum_{1\le r \le N}  \Big( \del_{ik}\,  b_{ir}(uv)\, \big( b_{ij}(\tfrac uv)\,\key_{jr}(u)\, \key_{rl}(v) - b_{il}(\tfrac uv)\,\key_{jr}(v)\, \key_{rl}(u) \big) \\[-1em]
& \qq\qq + \del_{jk}\, a_{ij}(\tfrac uv)\,b_{jr}(uv)\,\key_{ir}(u)\,\key_{rl}(v) - \del_{il}\, a_{ik}(\tfrac uv)\,b_{ir}(uv)\,\key_{jr}(v)\,\key_{rk}(u)  \Big) 
\\
& \qu + \big( a_{ij}(\tfrac uv)\,a_{kj}(uv) - a_{il}(uv)\,a_{kl}(\tfrac uv)\big)\,\key_{jl}(v)\,\key_{ik}(u) \\ & \qu + a_{ik}(uv)\, \big( b_{ij}(\tfrac uv)\,\key_{jk}(u)\,  \key_{il}(v) - b_{kl}(\tfrac uv)\,\key_{jk}(v)\,\key_{il}(u) \big) = 0. 
}

Next, in five steps, we prove that any solution $K$ of \eqref{RE1} is equivalent to the solution given by \eqref{KS} or \eqref{KT}. 
Recall that $a_{ij}(u)$ and $b_{ij}(u)$ were defined in \eqref{ab}.
For any $i\ne j$ we will write $a_{ij}(u)=a(u)$. We will use the following terminology. 
Let $\si\in\mf{S}_N$ be an involution. Let $i,j$ be such that $i<\si(i)$, $j<\si(j)$ and $i<j$. 
We say that the pairs $(i,\si(i))$ and $(j,\si(j))$ are \emph{non-crossing} if $i<\si(i)<j<\si(j)$ or $i<j<\si(j)<\si(i)$. 
We call the first case \emph{non-crossing split}, we call the second case \emph{non-crossing non-split}. 


{\noindent \it Step 1: Any solution of \eqref{RE1} is a generalized cross matrix, i.e.
\equ{
K(u) = \sum_{1\le i\le N} \Big( \key_{ii}(u) + (1-\del_{i\si(i)})\,\key_{i\si(i)}(u) \Big)\, E_{i\si(i)} \label{K:S1:A}
}
with $\si\in\mf{S}_N$ satisfying $\si^2=\id$. Moreover, without loss of generality, for all $i$ such that $i\ne \si(i)$,}
\equ{
\key_{i\si(i)}(u)=\key_{\si(i)i}(u)\qq\text{or}\qq\key_{i\si(i)}(u)\,\key_{\si(i)i}(u)=0. \label{K:S1:B}
}


Let $i=j\ne k\ne l$ and $i\ne l$ or $i\ne j\ne k= l$ and $i\ne k$. Then \eqref{K1} becomes, respectively, 
\gan{
\big(a(\tfrac{u}{v})-1\big)\, \key_{ik}(u)\, \key_{il}(v) + b_{kl}(\tfrac{u}{v}) \, \key_{ik}(v)\, \key_{il}(u) = 0 , 
\\
\big(a(\tfrac{u}{v})-1\big)\, \key_{il}(v)\, \key_{jl}(u) + b_{ij}(\tfrac{u}{v}) \, \key_{il}(u)\, \key_{jl}(v) = 0 . 
}
Setting $u=-q v$, i.e., $a(\tfrac{u}{v})=1$,  these yield, respectively, $\key_{ik}(v)\,k_{il}(-qv)=0$ and $\key_{il}(v)\,k_{jl}(-qv)=0$ for any value of $v$. It~follows that there can be at most one nonzero off-diagonal entry in each row and each column of $K(u)$. 


Now let $i=j=l\ne k$. This time \eqref{K1} gives
\aln{
& \sum_{1\le r \le N} a(\tfrac{u}{v})\, b_{ir}(u v)\,\key_{ir}(v)\,\key_{rk}(u) + a(\tfrac{u}{v})\,\key_{ii}(v)\,\key_{ik}(u) \\[-.5em] & \qq\qq - a(uv)\,\big( \key_{ii}(v) \,\key_{ik}(u) - b_{ki}(\tfrac{u}{v})\,\key_{ii}(u) \,\key_{ik}(v)\big) = 0.
}
Since each row and each column of $K(u)$ can have at most one nonzero off-diagonal entry, we can assume that $\key_{ir}(u)=0$ for all $r$ except when $r=i$ and $r=s\ne i$. 
Choose $k$ such that $k\ne s$. 
Since $b_{ii}(uv)=0$, the equality above gives $\key_{is}(v)\,\key_{sk}(u) = 0$. 
It follows that, if $\key_{is}(u)$ with $i\ne s$ is nonzero, then $\key_{sk}(u)=0$ for any $k$ satisfying $k\ne s$ and $k\ne i$. 
In particular, $K(u)$ is a generalized cross matrix.


Let $i=k\ne j=l$ and $i=\si(j)$. Then \eqref{K1} becomes
\[
b_{i\si(i)}(u v) \,\big(\key_{i\si(i)}(u)\, \key_{\si(i)i}(v)-\key_{\si(i)i}(u)\, \key_{i\si(i)}(v)\big) = 0.
\]
Hence $\key_{i\si(i)}(u)\,\key_{\si(i)i}(u)=0$ or $\key_{i\si(i)}(u) = c_i\,\key_{\si(i)i}(u)$ for some $c_i \in \C^\times$. In the latter case, by Lemma~\ref{L:K-symm} (iii), we may assume that $c_i=1$, i.e., $\key_{i\si(i)}(u)=\key_{\si(i)i}(u)$. This implies \eqref{K:S1:B}.


{\noindent \it Step 2: For any $i$ such that $i<\si(i)$ both $\key_{ii}$ and $\key_{\si(i)\si(i)}$ are either zero or lie in $\Rat^\times$. 
In the first case $K(u)$ is a generalized permutation matrix such that, for any $j$ satisfying $i<\si(i)<j<\si(j)$ or $i<j<\si(j)<\si(i)$ and any $k,l$ satisfying $k=\si(k)$ and $l=\si(l)$, we have, for $\al_{kl}\in\C$,
\gat{
\frac{\key_{kk}(u)}{\key_{ll}(u)} = \frac{\al_{kl}-u}{\al_{kl}-u^{-1}} ,  
\qq
\frac{\key_{i\si(i)}^2(u)}{\key_{j\si(j)}^2(u)} = \del_{i<j<\si(j)<\si(i)} + u^2\, \del_{i<\si(i)<j<\si(j)}, \label{K:S2:A1}
\\
\frac{\key_{kk}^2(u)}{\key_{i\si(i)}^2(u)} = \del_{k<i}\,u + \del_{i<k<\si(i)} + \del_{\si(i)<k}\,u^{-1}. \label{K:S2:A2}
}
Otherwise, i.e., if $\key_{ii}, \key_{\si(i)\si(i)} \in \Rat^\times$, then
\equ{
\frac{\key_{i\si(i)}(u)}{\key_{\si(i)\si(i)}(u)} = \frac{u-u^{-1}}{\al_{i\si(i)}-u^{-1}}\,\beta_{i\si(i)} , \qq \frac{\key_{\si(i)i}(u)}{\key_{\si(i)\si(i)}(u)} = \frac{u-u^{-1}}{\al_{i\si(i)}-u^{-1}}\,\beta_{\si(i)i}  \label{K:S2:B1}
}
with $\beta_{i\si(i)},\beta_{\si(i)i} \in \C$ such that $\beta_{i\si(i)}=\beta_{\si(i)i}\ne 0$ or $\beta_{i\si(i)}\,\beta_{\si(i)i}=0$. 
Moreover, for any $i,j$ such that $i<j$ and $\key_{i\si(i)}(u)\,\key_{\si(i)i}(u)=0$ if $i\ne \si(i)$ and $\key_{j\si(j)}(u)\,\key_{\si(j)j}(u)=0$ if $j\ne \si(j)$ we have that
\equ{
\frac{\key_{ii}(u)}{\key_{jj}(u)} = \frac{\al_{ij}-u}{\al_{ij}-u^{-1}}  \label{K:S2:B2}
}
for some $\al_{ij}\in\C$. Furthermore, for any $i,k$ such that $k\ne i<\si(i)\ne k$,
\equ{
\key_{kk}(u) = \bigg( 1 - (u-u^{-1})\,\frac{\del_{k<i} - (\del_{k<i} u + \del_{i<k<\si(i)} + \del_{\si(i)<k} u^{-1}) \ga_{i\si(i)k}}{\al_{i\si(i)}-u^{-1}}\bigg)\, \key_{\si(i)\si(i)}(u) . \label{K:S2:C}
}
for some $\ga_{i\si(i)k}\in\C$, and so $\key_{kk}$ is nonzero for all $k$. \smallskip 
}


Let $i=l\ne j=k$. Then \eqref{K1} gives
\eqa{
& \sum_{1\le r \le N} a(\tfrac{u}{v}) \Big( b_{jr}(u v)\,\key_{ir}(u)\,\key_{ri}(v) - b_{ir}(u v)\,\key_{jr}(u)\,\key_{rj}(v) \Big) \\[-.5em]
& \qq\qq + a(u v) \Big(b_{ij}(\tfrac uv)\,\key_{ii}(v)\,\key_{jj}(u) - b_{ji}(\tfrac uv)\,\key_{ii}(u)\,\key_{jj}(v) \Big) = 0 . \label{K:22}
}
Assume that $i<j=\si(i)$. Then  
\gan{
b_{ji}(u) = \frac{b_{ij}(u)}{u}, \qq 
 a(\tfrac uv)\, b_{ij}(uv) = \frac{(q^3-q)\,uv\,(v-u)}{(u-q^2 v)(uv-q^2)} , \qq
a(uv)\, b_{ij}(\tfrac uv) = \frac{(q^3-q)\,u\,(1-u v)}{(u-q^2 v)(uv-q^2)} 
}
and the equality above becomes 
\eqa{
& \frac{q^3-q}{(u-q^2 v)(uv-q^2)} \bigg(\key_{ii}(u) \Big((u-v)\,\key_{ii}(v)+v\,(1-u v)\,\key_{jj}(v) \Big) 
\\ & \hspace{3.8cm}- u\,\key_{jj}(u) \Big(v\,(u-v)\, \key_{jj}(v) + (1-u v)\,\key_{ii}(v) \Big)\bigg)  = 0 . \label{K:22-diag}
}
(In the computations below we will often omit writing the common factors which have no zeros or poles for generic values of $u$ and $v$.)
Hence $\key_{ii}(u)=\key_{jj}(u)=0$ or both are nonzero. Note that the first case can is only allowed if both $\key_{i\si(i)}(u)$ and $\key_{\si(i)i}(u)$ are nonzero, by the invertibility of $K(u)$.
In the second case assume that 
\equ{
\frac{\key_{ii}(u)}{\key_{jj}(u)} = \frac{\al_{ij}(u)-u}{\al_{ij}(u)-u^{-1}} \label{K:ii-jj-u}
}
with $\al_{ij}(u)\in\Rat$. This allows us to separate variables yielding $\al_{ij}(u)=\al_{ij}(v)$. In particular, $\al_{ij}(u) = \al_{ij} \in\C$. The case $i>j=\si(i)$ yields an analogous result.  

Let us return to \eqref{K:22} and assume that $i,j$ are such that $i<j$ and $\key_{i\si(i)}(u)\,\key_{\si(i)i}(u)=0$ if $i\ne \si(i)$ and the same condition for $j$. 
Then \eqref{K:22} simplifies to \eqref{K:22-diag} and $\key_{ii}(u)$, $\key_{jj}(u)$ must be nonzero, by invertibility of $K(u)$.
Repeating the same arguments as before we find the same result, i.e., \eqref{K:ii-jj-u} with $\al_{ij}(u)=\al_{ij}\in\C$. 
This completes the proof of the first relation in \eqref{K:S2:A1} and the relation \eqref{K:S2:B2}.

On the other hand, when $i\ne\si(i)$, $j\ne\si(j)$ and $\key_{ii}(u)=\key_{jj}(u)=0$ but $\key_{i\si(i)}(u)\,\key_{\si(i)i}(u)$ and $\key_{j\si(j)}(u)\,\key_{\si(j)j}(u)$ are nonzero, the equality \eqref{K:22} becomes
\[
a(\tfrac uv) \Big( b_{j\si(i)}(u v)\,\key_{i\si(i)}(u)\,\key_{\si(i)i}(v) - b_{i\si(j)}(u v)\,\key_{j\si(j)}(u)\,\key_{\si(j)j}(v) \Big) = 0 
\]
giving
\[
(\del_{j<\si(i)}\,uv+\del_{\si(i)<j})\,\key_{i\si(i)}(u)\,\key_{\si(i)i}(v) - (\del_{i<\si(j)}\,uv+\del_{\si(j)<i})\,\key_{j\si(j)}(u)\,\key_{\si(j)j}(v) = 0 .
\]
We focus on the cases when $i<\si(i)<j<\si(j)$ and $i<j<\si(j)<\si(i)$. Using \eqref{K:S1:B} and separating variables we obtain the second relation in \eqref{K:S2:A1}.

Finally, assume that $i=\si(i)$, $j<\si(j)$ and $\key_{jj}(u)=0$. The equality \eqref{K:22} becomes
\[
a(\tfrac uv) \Big( b_{ji}(u v)\,\key_{ii}(u)\,\key_{ii}(v) - b_{i\si(j)}(u v)\,\key_{j\si(j)}(u)\,\key_{\si(j)j}(v) \Big) = 0 
\]
giving
\[
(\del_{j<i}\,uv + \del_{i<j})\,\key_{ii}(u)\,\key_{ii}(v) - (\del_{i<\si(j)}\,uv + \del_{\si(j)<i})\, \key_{j\si(j)}(u)\,\key_{\si(j)j}(v) = 0 .
\]
By separating the variables we obtain \eqref{K:S2:A2}.


Let $i=j=l\ne k$ and $i=\si(k)$. Upon renaming $k\to j$ in \eqref{K1} we find
\[
a(u v) \Big(\key_{ii}(v)\, \key_{ij}(u) - b_{ji}(\tfrac{u}{v})\, \key_{ii}(u)\, \key_{ij}(v) \Big) = a(\tfrac{u}{v}) \Big( \key_{ii}(v)\,\key_{ij}(u) + b_{ij}(u v)\,\key_{ij}(v)\,\key_{jj}(u)\Big) .
\]
The trivial solutions are $\key_{ii}(u)=\key_{jj}(u)=0$ and $\key_{ij}(u)=0$. In the non-trivial case, for $i<j=\si(i)$, the equation above gives
\aln{
(u-u^{-1})(\al_{i\si(i)}-v^{-1})\,\key_{i\si(i)}(v)\,\key_{\si(i)\si(i)}(u)   - (v-v^{-1}) (\al_{ij}-u^{-1})\,\key_{i\si(i)}(u)\,\key_{\si(i)\si(i)}(v) = 0.
}
Here we used \eqref{K:S2:B2}, which we have already proved.
Upon separating the variables we obtain the first identity in \eqref{K:S2:B1}. If~$i>j$ instead, we need to interchange $i$ with $j$ and proceed in the same way as before. 
This yields the second identity in \eqref{K:S2:B1}.


Let $i=l\ne j \ne k$, $i\ne k$ and $\si(j)=k$. Then \eqref{K1} gives 
\aln{
& a(\tfrac{u}{v})\Big(b_{ij}(u v)\,\key_{jj}(v)\,\key_{jk}(u) + b_{ik}(u v)\,\key_{jk}(v)\,\key_{kk}(u) \Big) \\
& \qu - a(u v) \Big(b_{ij}(\tfrac{u}{v})\,\key_{ii}(v)\,\key_{jk}(u) - b_{ki}(\tfrac{u}{v})\,\key_{ii}(u)\,\key_{jk}(v) \Big) = 0 .
}
Assume further that $j<k$ and $\key_{jj}(u)$, $\key_{kk}(u)$ and  $\key_{jk}(u)$ are all nonzero. Using \eqref{K:S2:B1} and \eqref{K:S2:B2}, and dividing by the common factor, we rewrite the equality above as
\aln{
& \frac{v-v^{-1}}{\al_{jk}-v^{-1}} \Big((u v-1)\,(\del_{k<i}\,u + \del_{i<k}\,v)\,\key_{ii}(u) + (u-v)\,(\del_{i<k}\,u v + \del_{k<i})\,\key_{kk}(u)  \Big)\key_{kk}(v) \\ 
& \qq- \frac{u-u^{-1}}{\al_{jk}-u^{-1}} \Big((u v-1)\,(\del_{i<j}\,u + \del_{j<i}\,v)\,\key_{ii}(v) - (u-v)\,(\del_{i<j}\,u v + \del_{j<i})\, \key_{jj}(v) \Big) \key_{kk}(u) = 0.
}
Upon separating the variables we find, for some $\ga_{ijk}\in\C$, 
\[
\key_{ii}(u) = \bigg( 1 - (u-u^{-1})\,\frac{\del_{i<j} - (\del_{i<j} u + \del_{j<i<k} + \del_{k<i} u^{-1}) \ga_{ijk}}{\al_{jk}-u^{-1}}\bigg)\,\key_{kk}(u) . 
\]
The case when $k<j$ yields an analogous result. 
It follows that, if $\key_{jj}(u)$, $\key_{\si(j),\si(j)}(u)$ and $\key_{j,\si(j)}(u)$ or $\key_{\si(j),j}(u)$ with $j\ne\si(j)$ are nonzero for some $j$, then $\key_{ii}(u)$ are nonzero for all $i$. 
This completes the proof of the second statement of Step 2.


{\noindent \it Step 3: If $K(u)$ is a generalized involution matrix, then $K$ is equivalent to the solution given by
\equ{
I + \frac{u-u^{-1}}{\al-u} \sum_{m< i\le N} E_{ii} \label{KD0}
}
for some $m$ satisfying $\tfrac12 N \le m \le N$ and $\al\in\C$ satisfying $0\le \Arg(\al) < \pi$, or to the solution given by
\equ{
\sum_{1\le i \le \ell} u E_{ii} + \sum_{\ell<i\le r} \big(E_{i\si(i)} + E_{\si(i)i} \big) + \sum_{r<i\le N+\ell-r} E_{ii} \label{KA3:X0} 
}
for some $(\ell,r,\si)\in\Sigma^{\rm sym}_N$. 
}

Suppose that $K(u)$ is diagonal, i.e., $K(u)=\sum_{1\le i\le N} \key_{ii}(u)\,E_{ii}$. Then the first relation in \eqref{K:S2:A1} implies that there can be at most three different entries, i.e., for any $\ell,r$ satisfying $0\le \ell<r\le N$ we have, for all $i,i',j,j',k,k'$ such that $i,i'\le \ell< j,j'\le r < k,k'$, 
\eqg{
\frac{\key_{ii}(u)}{\key_{jj}(u)}= \frac{1-\al\,u}{1-\al\,u^{-1}} , \qq  \frac{\key_{jj}(u)}{\key_{kk}(u)}= \frac{\al-u}{\al-u^{-1}}, \qq \frac{\key_{ii}(u)}{\key_{kk}(u)}= u^2 , \\
\frac{\key_{ii}(u)}{\key_{i'i'}(u)}=\frac{\key_{jj}(u)}{\key_{j'j'}(u)}=\frac{\key_{kk}(u)}{\key_{k'k'}(u)}=1, \label{K:diags2} 
}
with $\al\in\C$. This yields, up to an overall scalar factor, 
\equ{
K(u) = \sum_{1\le i \le \ell} u\,E_{ii} + \sum_{\ell< i\le r} \frac{\al-u}{\al\,u-1}\, E_{ii} + \sum_{r< i\le N} u^{-1} E_{ii}, \label{KD}
}
which, by Lemma \ref{L:[K]} (\eqref{KD} is a special case of \eqref{KT1} when $\si=\id$), is equivalent to \eqref{KD0}: denoting the matrix \eqref{KD0} by $K_m(u)$ and \eqref{KD} by $K_{\ell,r}(u)$ we have, for $0\le k <N$,
\[
(Z^\rho(\eta^{-1} u))^{-k} K_m(u)\,(Z^\rho(\eta u))^k = \begin{cases}
\frac{\al\,u-1}{\al-u}\, K_{k,k+m}(u) &\text{if } k+m\le N,\\ 
u\, K_{k+m-N,k}(u)\big|_{\al\to\al^{-1}} &\text{if } k+m> N. 
\end{cases}
\]

Next, suppose that $K(u)$ is non-diagonal, i.e., 
\equ{
K(u)=\sum_{1\le i\le N} \key_{i\si(i)}(u)\,E_{i\si(i)} \label{K:GPM-ansatz}
}
with $\si \in \mf{S}_N$ of order 2. 
We will assume that $\key_{i\si(i)}(u)=\key_{\si(i)i}(u)$ for all $i$ satisfying $i<\si(i)$; this follows from Step 1.
Let $i,j,k,l$ be all different or $i,j,k$ different and $l=j$.
Then \eqref{K1} gives
\[
a(uv) \Big( b_{ij}(\tfrac uv)\,\key_{il}(v)\,\key_{jk}(u) - b_{kl}(\tfrac uv)\,\key_{il}(u)\,\key_{jk}(v) \Big) = 0 .
\]
Using \eqref{ab} we obtain
\[
(\del_{i>j}\,v + \del_{i<j}\,u)\,\key_{il}(v)\,\key_{jk}(u) - ( \del_{k>l}\,v + \del_{k<l}\,u )\,\key_{il}(u)\,\key_{jk}(v) = 0 ,
\]
or equivalently ({\it cf.}~\eqref{Z2})
\equ{
\frac{\key_{i\pm k,j\pm l}(u)}{\key_{ij}(u)} \in \C u^{\mp1} , \qq 
\frac{\key_{i\pm k,j\mp l}(u)}{\key_{ij}(u)} \in \C . \label{K:ij-kl}
}

Let $i,j$ be such that $i<\si(i)$, $j<\si(j)$ and $i<j$. The relations \eqref{K:ij-kl} applied to the coefficients $\key_{i,\si(i)}(u)$, $\key_{\si(i),i}(u)$ and $\key_{j,\si(j)}(u)$, $\key_{\si(j),j}(u)$ imply that we must have $\si(i)<j$ or $\si(j)<\si(i)$, i.e., the pairs $(i,\si(i))$ and $(j,\si(j))$ are non-crossing. We represent these configurations diagrammatically~by
\[
\btp
\fill[gray!10] (1.5,6) rectangle (4,8.5);
\fill[gray!10] (6,4) rectangle (8.5,1.5);
\draw[dotted] (0,8.5) node[left=1pt]{\color{black}\small$i$} -- (10,8.5) ;
\draw[dotted] (1.5,10) node[above=6pt, anchor=base]{\color{black}\small$i$} -- (1.5,0);
\draw[dotted] (0,6) node[left=1pt]{\color{black}\small$\si(i)$} -- (10,6) ;
\draw[dotted] (4,10) node[above=6pt, anchor=base]{\color{black}\small$\si(i)$} -- (4,0);
\draw[dotted] (0,4) node[left=1pt]{\color{black}\small$j$} -- (10,4);
\draw[dotted] (6,10) node[above=6pt, anchor=base]{\color{black}\small$j$} -- (6,0);
\draw[dotted] (0,1.5) node[left=1pt]{\color{black}\small$\si(j)$} -- (10,1.5);
\draw[dotted] (8.5,10) node[above=6pt, anchor=base]{\color{black}\small$\si(j)$} -- (8.5,0);
\draw (0,0) -- (10,0) -- (10,10) -- (0,10) -- (0,0);
\filldraw[fill=black] (4,8.5) circle (.25);
\filldraw[fill=black] (1.5,6) circle (.25);
\filldraw[fill=black] (8.5,4) circle (.25);
\filldraw[fill=black] (6,1.5) circle (.25);
\etp
\qq\qq
\btp
\fill[gray!10] (1.5,8.5) rectangle (8.5,1.5);
\fill[gray!20] (3,4) rectangle (6,7);
\draw[dotted] (0,8.5) node[left=1pt]{\color{black}\small$i$} -- (10,8.5);
\draw[dotted] (1.5,10) node[above=6pt, anchor=base]{\color{black}\small$i$} -- (1.5,0);
\draw[dotted] (0,7) node[left=1pt]{\color{black}\small$j$} -- (10,7) ;
\draw[dotted] (3,10) node[above=6pt, anchor=base]{\color{black}\small$j$} -- (3,0);
\draw[dotted] (0,4) node[left=1pt]{\color{black}\small$\si(j)$} -- (10,4);
\draw[dotted] (6,10) node[above=6pt, anchor=base]{\color{black}\small$\si(j)\;$} -- (6,0);
\draw[dotted] (0,1.5) node[left=1pt]{\color{black}\small$\si(i)$} -- (10,1.5);
\draw[dotted] (8.5,10) node[above=6pt, anchor=base]{\color{black}\small$\;\si(i)$} -- (8.5,0);
\draw (0,0) -- (10,0) -- (10,10) -- (0,10) -- (0,0);
\filldraw[fill=black] (1.5,1.5) circle (.25);
\filldraw[fill=black] (8.5,8.5) circle (.25);
\filldraw[fill=black] (3,4) circle (.25);
\filldraw[fill=black] (6,7) circle (.25);
\etp
\]
where the filled nodes indicate the nonzero entries of $K(u)$ in question. Moreover, since $K(u)$ is a generalized permutation matrix, the filled nodes correspond to the only nonzero entries in each dotted line and each dotted column. The grey shadings are added to emphasize the non-crossing property of the pairs $(i,\si(i))$ and $(j,\si(j))$: they are split in the first case and non-split in the second case.
In particular, when $N=3$ we have $\si\in\{(12),(13),(23)\}$ and when $N=4$ we have $\si\in\{(12),(13),(14),(23),(24),(12)(34),(14)(23)\}$. 
When $N\ge5$ we need additional arguments.

Let us focus on case when the non-crossing pairs $(i,\si(i))$ and $(j,\si(j))$ are split. Let $k$ be such that $k<\si(k)$. By the same arguments as before, the relations \eqref{K:ij-kl} imply that the pair $(k,\si(k))$ must satisfy one the following four conditions:
\aln{
& k<i<\si(i)<\si(k)<j<\si(j) ,\qq 
&& i<k<\si(k)<\si(i)<j<\si(j) ,\\ 
& i<\si(i)<k<j<\si(j)<\si(k) ,\qq 
&& i<\si(i)<j<k<\si(k)<\si(j) .
}
In other words, all the pairs must be simultaneously non-crossing and there can be at most two simultaneously split pairs. We represent these configurations by 
\[
\btp
\fill[gray!10] (1,4.5) rectangle (6.5,10);
\fill[gray!20] (2.5,8.5) rectangle (4.5,6.5);
\fill[gray!10] (8,1) rectangle (10,3);
\draw[dotted] (0,10) node[left=1pt]{\color{black}\small$k$} -- (11,10);
\draw[dotted] (1,11) node[above=6pt, anchor=base]{\color{black}\small$k$} -- (1,0);
\draw[dotted] (0,8.5) node[left=1pt]{\color{black}\small$i$} -- (11,8.5);
\draw[dotted] (2.5,11) node[above=6pt, anchor=base]{\color{black}\small$i$} -- (2.5,0);
\draw[dotted] (0,6.5) node[left=1pt]{\color{black}\small$\si(i)$} -- (11,6.5);
\draw[dotted] (4.5,11) node[above=6pt, anchor=base]{\color{black}\small$\si(i)$} -- (4.5,0);
\draw[dotted] (0,4.5) node[left=1pt]{\color{black}\small$\si(k)$} -- (11,4.5);
\draw[dotted] (6.5,11) -- (6.5,0);
\draw[dotted] (0,3) node[left=1pt]{\color{black}\small$j$} -- (11,3);
\draw[dotted] (8,11) node[above=6pt, anchor=base]{\color{black}\small$j$} -- (8,0);
\draw[dotted] (0,1) node[left=1pt]{\color{black}\small$\si(j)$} -- (11,1);
\draw[dotted] (10,11) node[above=6pt, anchor=base]{\color{black}\small$\si(j)$} -- (10,0);
\draw (0,0) -- (11,0) -- (11,11) -- (0,11) -- (0,0);
\filldraw[fill=black] (1,4.5) circle (.25);
\filldraw[fill=black] (6.5,10) circle (.25);
\filldraw[fill=black] (2.5,6.5) circle (.25);
\filldraw[fill=black] (4.5,8.5) circle (.25);
\filldraw[fill=black] (8,1) circle (.25);
\filldraw[fill=black] (10,3) circle (.25);
\etp
\qq\qq
\btp
\fill[gray!10] (1,4.5) rectangle (6.5,10);
\fill[gray!20] (2.5,8.5) rectangle (4.5,6.5);
\fill[gray!10] (8,1) rectangle (10,3);
\draw[dotted] (0,10) node[left=1pt]{\color{black}\small$i$} -- (11,10);
\draw[dotted] (1,11) node[above=6pt, anchor=base]{\color{black}\small$i$} -- (1,0);
\draw[dotted] (0,8.5) node[left=1pt]{\color{black}\small$k$} -- (11,8.5);
\draw[dotted] (2.5,11) node[above=6pt, anchor=base]{\color{black}\small$k$} -- (2.5,0);
\draw[dotted] (0,6.5) node[left=1pt]{\color{black}\small$\si(k)$} -- (11,6.5);
\draw[dotted] (4.5,11) node[above=6pt, anchor=base]{\color{black}\small$\si(k)$} -- (4.5,0);
\draw[dotted] (0,4.5) node[left=1pt]{\color{black}\small$\si(i)$} -- (11,4.5);
\draw[dotted] (6.5,11) -- (6.5,0);
\draw[dotted] (0,3) node[left=1pt]{\color{black}\small$j$} -- (11,3);
\draw[dotted] (8,11) node[above=6pt, anchor=base]{\color{black}\small$j$} -- (8,0);
\draw[dotted] (0,1) node[left=1pt]{\color{black}\small$\si(j)$} -- (11,1);
\draw[dotted] (10,11) node[above=6pt, anchor=base]{\color{black}\small$\si(j)$} -- (10,0);
\draw (0,0) -- (11,0) -- (11,11) -- (0,11) -- (0,0);
\filldraw[fill=black] (1,4.5) circle (.25);
\filldraw[fill=black] (6.5,10) circle (.25);
\filldraw[fill=black] (2.5,6.5) circle (.25);
\filldraw[fill=black] (4.5,8.5) circle (.25);
\filldraw[fill=black] (8,1) circle (.25);
\filldraw[fill=black] (10,3) circle (.25);
\etp
\]
\[
\btp
\fill[gray!10] (1,8.5) rectangle (2.5,10);
\fill[gray!10] (4.5,6.5) rectangle (10,1);
\fill[gray!20] (6.5,2.5) rectangle (8.5,4.5);
\draw[dotted] (0,10) node[left=1pt]{\color{black}\small$i$} -- (11,10);
\draw[dotted] (1,11) node[above=6pt, anchor=base]{\color{black}\small$i$} -- (1,0);
\draw[dotted] (0,8.5) node[left=1pt]{\color{black}\small$\si(i)$} -- (11,8.5);
\draw[dotted] (2.5,11) node[above=6pt, anchor=base]{\color{black}\small$\si(i)$} -- (2.5,0);
\draw[dotted] (0,6.5) node[left=1pt]{\color{black}\small$k$} -- (11,6.57);
\draw[dotted] (4.5,11) node[above=6pt, anchor=base]{\color{black}\small$k$} -- (4.5,0);
\draw[dotted] (0,4.5) node[left=1pt]{\color{black}\small$j$} -- (11,4.5);
\draw[dotted] (6.5,11) node[above=6pt, anchor=base]{\color{black}\small$j$} -- (6.5,0);
\draw[dotted] (0,2.5) node[left=1pt]{\color{black}\small$\si(j)$} -- (11,2.5);
\draw[dotted] (8.5,11)  -- (8.5,0);
\draw[dotted] (0,1) node[left=1pt]{\color{black}\small$\si(k)$} -- (11,1);
\draw[dotted] (10,11) node[above=6pt, anchor=base]{\color{black}\small$\si(k)$} -- (10,0);
\draw (0,0) -- (11,0) -- (11,11) -- (0,11) -- (0,0);
\filldraw[fill=black] (1,8.5) circle (.25);
\filldraw[fill=black] (2.5,10) circle (.25);
\filldraw[fill=black] (4.5,1) circle (.25);
\filldraw[fill=black] (10,6.5) circle (.25);
\filldraw[fill=black] (6.5,2.5) circle (.25);
\filldraw[fill=black] (8.5,4.5) circle (.25);
\etp
\qq\qq
\btp
\fill[gray!10] (1,8.5) rectangle (2.5,10);
\fill[gray!10] (4.5,6.5) rectangle (10,1);
\fill[gray!20] (6.5,2.5) rectangle (8.5,4.5);
\draw[dotted] (0,10) node[left=1pt]{\color{black}\small$i$} -- (11,10);
\draw[dotted] (1,11) node[above=6pt, anchor=base]{\color{black}\small$i$} -- (1,0);
\draw[dotted] (0,8.5) node[left=1pt]{\color{black}\small$\si(i)$} -- (11,8.5);
\draw[dotted] (2.5,11) node[above=6pt, anchor=base]{\color{black}\small$\si(i)$} -- (2.5,0);
\draw[dotted] (0,6.5) node[left=1pt]{\color{black}\small$j$} -- (11,6.57);
\draw[dotted] (4.5,11) node[above=6pt, anchor=base]{\color{black}\small$j$} -- (4.5,0);
\draw[dotted] (0,4.5) node[left=1pt]{\color{black}\small$k$} -- (11,4.5);
\draw[dotted] (6.5,11) node[above=6pt, anchor=base]{\color{black}\small$k$} -- (6.5,0);
\draw[dotted] (0,2.5) node[left=1pt]{\color{black}\small$\si(k)$} -- (11,2.5);
\draw[dotted] (8.5,11)  -- (8.5,0);
\draw[dotted] (0,1) node[left=1pt]{\color{black}\small$\si(j)$} -- (11,1);
\draw[dotted] (10,11) node[above=6pt, anchor=base]{\color{black}\small$\si(j)$} -- (10,0);
\draw (0,0) -- (11,0) -- (11,11) -- (0,11) -- (0,0);
\filldraw[fill=black] (1,8.5) circle (.25);
\filldraw[fill=black] (2.5,10) circle (.25);
\filldraw[fill=black] (4.5,1) circle (.25);
\filldraw[fill=black] (10,6.5) circle (.25);
\filldraw[fill=black] (6.5,2.5) circle (.25);
\filldraw[fill=black] (8.5,4.5) circle (.25);
\etp
\]
respectively. (Note that the first two diagrams describe equivalent configurations; the same is true for the last two diagrams.)
Now let $k$ be such that $k=\si(k)$ instead. The relations \eqref{K:ij-kl} together with \eqref{K:S2:A2} imply that $k$ must satisfy one of the following two conditions:
\equ{
i<k<\si(i)<j<\si(j) , \qq i<\si(i)<j<k<\si(j). \label{K:XX-iki}
}
We represent these configurations by
\[
\btp
\fill[gray!10] (1,5.5) rectangle (4.5,9);
\fill[gray!10] (6.5,3.5) rectangle (9,1);
\draw[dotted] (0,9) node[left=1pt]{\color{black}\small$i$} -- (10,9);
\draw[dotted] (1,10) node[above=6pt, anchor=base]{\color{black}\small$i$} -- (1,0);
\draw[dotted] (0,7.5) node[left=1pt]{\color{black}\small$k$} -- (10,7.5);
\draw[dotted] (2.5,10) node[above=6pt, anchor=base]{\color{black}\small$k$} -- (2.5,0);
\draw[dotted] (0,5.5) node[left=1pt]{\color{black}\small$\si(i)$} -- (10,5.5) ;
\draw[dotted] (4.5,10) node[above=6pt, anchor=base]{\color{black}\small$\si(i)$} -- (4.5,0);
\draw[dotted] (0,3.5) node[left=1pt]{\color{black}\small$j$} -- (10,3.5);
\draw[dotted] (6.5,10) node[above=6pt, anchor=base]{\color{black}\small$j$} -- (6.5,0);
\draw[dotted] (0,1) node[left=1pt]{\color{black}\small$\si(j)$} -- (10,1);
\draw[dotted] (9,10) node[above=6pt, anchor=base]{\color{black}\small$\si(j)$} -- (9,0);
\draw (0,0) -- (10,0) -- (10,10) -- (0,10) -- (0,0);
\filldraw[fill=black] (4.5,9) circle (.25);
\filldraw[fill=black] (1,5.5) circle (.25);
\filldraw[fill=black] (9,3.5) circle (.25);
\filldraw[fill=black] (6.5,1) circle (.25);
\filldraw[fill=black] (2.5,7.5) circle (.25);
\etp
\qq\qq
\btp
\fill[gray!10] (1,6.5) rectangle (3.5,9);
\fill[gray!10] (5.5,4.5) rectangle (9,1);
\draw[dotted] (0,9) node[left=1pt]{\color{black}\small$i$} -- (10,9);
\draw[dotted] (1,10) node[above=6pt, anchor=base]{\color{black}\small$i$} -- (1,0);
\draw[dotted] (0,6.5) node[left=1pt]{\color{black}\small$\si(i)$} -- (10,6.5) ;
\draw[dotted] (3.5,10) node[above=6pt, anchor=base]{\color{black}\small$\si(i)$} -- (3.5,0);
\draw[dotted] (0,4.5) node[left=1pt]{\color{black}\small$j$} -- (10,4.5);
\draw[dotted] (5.5,10) node[above=6pt, anchor=base]{\color{black}\small$j$} -- (5.5,0);
\draw[dotted] (0,2.5) node[left=1pt]{\color{black}\small$k$} -- (10,2.5);
\draw[dotted] (7.5,10) node[above=6pt, anchor=base]{\color{black}\small$k\;$} -- (7.5,0);
\draw[dotted] (0,1) node[left=1pt]{\color{black}\small$\si(j)$} -- (10,1);
\draw[dotted] (9,10) node[above=6pt, anchor=base]{\color{black}\small$\;\;\si(j)$} -- (9,0);
\draw (0,0) -- (10,0) -- (10,10) -- (0,10) -- (0,0);
\filldraw[fill=black] (3.5,9) circle (.25);
\filldraw[fill=black] (1,6.5) circle (.25);
\filldraw[fill=black] (9,4.5) circle (.25);
\filldraw[fill=black] (5.5,1) circle (.25);
\filldraw[fill=black] (7.5,2.5) circle (.25);
\etp
\]

Let us now focus on the case when the non-crossing pairs $(i,\si(i))$ and $(j,\si(j))$ are non-split, i.e., when $i<j<\si(j)<\si(i)$. Let $k$ be such that $k=\si(k)$.  (The case when $k\ne\si(k)$ follows from the considerations above.) Then \eqref{K:ij-kl} and \eqref{K:S2:A2} imply that $k$ must satisfy one the following three conditions:
\equ{
k<i<j<\si(j)<\si(i) , \qq i<j<k<\si(j)<\si(i)) , \qq i<j<\si(j)<\si(i)<k . \label{K:X-iki}
}
We represent these configurations by
\[
\btp
\fill[gray!10] (2.5,7.5) rectangle (8.5,1.5);
\fill[gray!20] (4,3.5) rectangle (6.5,6);
\draw[dotted] (0,9) node[left=1pt]{\color{black}\small$k$} -- (10,9);
\draw[dotted] (1,10) node[above=6pt, anchor=base]{\color{black}\small$k$} -- (1,0);
\draw[dotted] (0,7.5) node[left=1pt]{\color{black}\small$i$} -- (10,7.5);
\draw[dotted] (2.5,10) node[above=6pt, anchor=base]{\color{black}\small$i$} -- (2.5,0);
\draw[dotted] (0,6) node[left=1pt]{\color{black}\small$j$} -- (10,6) ;
\draw[dotted] (4,10) node[above=6pt, anchor=base]{\color{black}\small$j$} -- (4,0);
\draw[dotted] (0,3.5) node[left=1pt]{\color{black}\small$\si(j)$} -- (10,3.5);
\draw[dotted] (6.5,10) node[above=6pt, anchor=base]{\color{black}\small$\si(j)\;$} -- (6.5,0);
\draw[dotted] (0,1.5) node[left=1pt]{\color{black}\small$\si(i)$} -- (10,1.5);
\draw[dotted] (8.5,10) node[above=6pt, anchor=base]{\color{black}\small$\;\;\si(i)\!$} -- (8.5,0);
\draw (0,0) -- (10,0) -- (10,10) -- (0,10) -- (0,0);
\filldraw[fill=black] (1,9) circle (.25);
\filldraw[fill=black] (2.5,1.5) circle (.25);
\filldraw[fill=black] (8.5,7.5) circle (.25);
\filldraw[fill=black] (4,3.5) circle (.25);
\filldraw[fill=black] (6.5,6) circle (.25);
\etp
\qq\qq
\btp
\fill[gray!10] (1.5,1.5) rectangle (8.5,8.5);
\fill[gray!20] (2.75,3.5) rectangle (6.5,7.25);
\draw[dotted] (0,8.5) node[left=1pt]{\color{black}\small$i$} -- (10,8.5);
\draw[dotted] (1.5,10) node[above=6pt, anchor=base]{\color{black}\small$i$} -- (1.5,0);
\draw[dotted] (0,7.25) node[left=1pt]{\color{black}\small$j$} -- (10,7.25);
\draw[dotted] (2.75,10) node[above=6pt, anchor=base]{\color{black}\small$j$} -- (2.75,0);
\draw[dotted] (0,5.75) node[left=1pt]{\color{black}\small$k$} -- (10,5.75) ;
\draw[dotted] (4.25,10) node[above=6pt, anchor=base]{\color{black}\small$k$} -- (4.25,0);
\draw[dotted] (0,3.5) node[left=1pt]{\color{black}\small$\si(j)$} -- (10,3.5);
\draw[dotted] (6.5,10) node[above=6pt, anchor=base]{\color{black}\small$\si(j)\;$} -- (6.5,0);
\draw[dotted] (0,1.5) node[left=1pt]{\color{black}\small$\si(i)$} -- (10,1.5);
\draw[dotted] (8.5,10) node[above=6pt, anchor=base]{\color{black}\small$\;\;\si(i)\!\!$} -- (8.5,0);
\draw (0,0) -- (10,0) -- (10,10) -- (0,10) -- (0,0);
\filldraw[fill=black] (4.25,5.75) circle (.25);
\filldraw[fill=black] (1.5,1.5) circle (.25);
\filldraw[fill=black] (8.5,8.5) circle (.25);
\filldraw[fill=black] (2.75,3.5) circle (.25);
\filldraw[fill=black] (6.5,7.25) circle (.25);
\etp
\qq\qq
\btp
\fill[gray!10] (1.5,3) rectangle (7,8.5);
\fill[gray!20] (3,5) rectangle (5,7);
\draw[dotted] (0,8.5) node[left=1pt]{\color{black}\small$i$} -- (10,8.5);
\draw[dotted] (1.5,10) node[above=6pt, anchor=base]{\color{black}\small$i$} -- (1.5,0);
\draw[dotted] (0,7) node[left=1pt]{\color{black}\small$j$} -- (10,7);
\draw[dotted] (3,10) node[above=6pt, anchor=base]{\color{black}\small$j$} -- (3,0);
\draw[dotted] (0,5) node[left=1pt]{\color{black}\small$\si(j)$} -- (10,5) ;
\draw[dotted] (5,10) node[above=6pt, anchor=base]{\color{black}\small$\si(j)\;$} -- (5,0);
\draw[dotted] (0,3) node[left=1pt]{\color{black}\small$\si(i)$} -- (10,3);
\draw[dotted] (7,10) node[above=6pt, anchor=base]{\color{black}\small$\;\;\si(i)\!$} -- (7,0);
\draw[dotted] (0,1) node[left=1pt]{\color{black}\small$k$} -- (10,1);
\draw[dotted] (9,10) node[above=6pt, anchor=base]{\color{black}\small$\;k\!$} -- (9,0);
\draw (0,0) -- (10,0) -- (10,10) -- (0,10) -- (0,0);
\filldraw[fill=black] (9,1) circle (.25);
\filldraw[fill=black] (1.5,3) circle (.25);
\filldraw[fill=black] (7,8.5) circle (.25);
\filldraw[fill=black] (3,5) circle (.25);
\filldraw[fill=black] (5,7) circle (.25);
\etp
\]

Since $K(u)$ is generically invertible, the considerations above imply that all simultaneously non-crossing non-split pairs $(i_r,\si(i_r))$ satisfying $i_r<\si(i_r)$ for $r=1,\dots,n$, where $n$ is the number of such pairs, form a sequence 
\equ{
(i,\si(i)),\; (i+1,\si(i)-1),\; (i+2,\si(i)-2),\;  \dots\;,\; (i+n-1,\si(i)-n+1), \label{K:seq-ii}
}
where $i= \min\{i_1,i_2,\dots,i_n\}$. It follows that 
\[
\si = 
\begin{pmatrix} \ell+1 & \ell+2 & \dots & r & t-r & t-r+1 & \dots & t-\ell+1 \\ t-\ell+1 & t-\ell & \dots & t-r & r & r-1 & \dots & \ell+1 \end{pmatrix} 
\]
with $\ell,r,t$ satisfying $0\le \ell < r \le \tfrac12(\ell+t)$ and $2 \le t \le N$, or
\aln{
\si &= \begin{pmatrix} 1 & 2 & \dots & \ell & m-\ell+1 & m-\ell+2 & \dots & m \\ m & m-1 & \dots & m-\ell+1 & \ell & \ell-1 & \dots & 1 \end{pmatrix} 
\el & \qq\times
\begin{pmatrix} m+1 & m+2 & \dots & r & N+m-r+1 & N+m-r+1 & \dots & N \\ N & N-1 & \dots & N+m-r+1 & r & r-1 & \dots & m+1 \end{pmatrix} 
}
with $\ell,m,r$ satisfying $2\le 2\ell < m < r \le \tfrac12 (N+m)$. In other words, we have that $(\ell,r,t,\si)\in\Sigma^{\rm sym'}_N$ and $(\ell,m,r,\si)\in\Sigma^{\rm sym''}_N$, respectively. Typical configurations for the first and the second case are shown below (the grey lines denote locations of the only nonzero matrix entries): 
\[
\btp
\fill[gray!10] (2,12) rectangle (11,3);
\fill[white] (4.5,9.5) rectangle (8.5,5.5);
\draw[dotted] (-0.5,12) node[left=1pt]{\color{black}\small$\ell\!+\!1$} -- (14.5,12);
\draw[dotted] (2,14.5) node[above=6pt, anchor=base]{\color{black}\small$\ell\!+\!1$} -- (2,-0.5);
\draw[dotted] (-0.5,9.5) node[left=1pt]{\color{black}\small$r$} -- (14.5,9.5);
\draw[dotted] (4.5,14.5) node[above=6pt, anchor=base]{\color{black}\small$r$} -- (4.5,-0.5);
\draw[dotted] (-0.5,5.5) node[left=1pt]{\color{black}\small$t\!-\!r$} -- (14.5,5.5);
\draw[dotted] (8.5,14.5) node[above=6pt, anchor=base]{\color{black}\small$\!t\!-\!r\;$} -- (8.5,0);
\draw[dotted] (-0.5,3) node[left=1pt]{\color{black}\small$t\!-\!\ell\!+\!1$} -- (14.5,3);
\draw[dotted] (11,14.5) node[above=6pt, anchor=base]{\color{black}\small$\;\;\;t\!-\!\ell\!+\!1\!\!\!\!\!$} -- (11,-0.5);
\draw (-0.5,-0.5) -- (14.5,-0.5) -- (14.5,14.5) -- (-0.5,14.5) -- (-0.5,-0.5);
\draw[gray!50,line width=4.5pt,line cap=round] (0,14) -- (2-.5,12+.5); 
\draw[gray!50,line width=4.5pt,line cap=round] (5,9) -- (8,6);
\draw[gray!50,line width=4.5pt,line cap=round] (11.5,2.5) -- (14,0);
\draw[gray!50,line width=4.5pt,line cap=round] (2,3) -- (4.5,5.5);
\draw[gray!50,line width=4.5pt,line cap=round] (8.5,9.5) -- (11,12);
\filldraw[fill=black] 
(0,14) circle (.25) (2-.5,12+.5) circle (.25)
(5,9) circle (.25) (8,6) circle (.25) (11.5,2.5) circle (.25) (14,0) circle (.25)
(2,3) circle (.25) (11,12) circle (.25) 
(4.5,5.5) circle (.25) (8.5,9.5) circle (.25);
\etp
\qq\qu
\btp
\fill[gray!10] (0,14) rectangle (8,6);
\fill[white] (2.5,11.5) rectangle (5.5,8.5);
\fill[gray!10] (14,0) rectangle (8.5,5.5);
\fill[white] (12.5,1.5) rectangle (10,4);
\draw[dotted] (-0.5,14) node[left=1pt]{\color{black}\small$1$} -- (14.5,14);
\draw[dotted] (0,14.5) node[above=6pt, anchor=base]{\color{black}\small$1$} -- (0,-0.5);
\draw[dotted] (-0.5,14-2.5) node[left=1pt]{\color{black}\small$\ell$} -- (14.5,14-2.5);
\draw[dotted] (2.5,14.5) node[above=6pt, anchor=base]{\color{black}\small$\ell$} -- (2.5,-0.5);
\draw[dotted] (-0.5,14-5.5) node[left=1pt]{\color{black}\small$m\!-\!\ell\!+\!1$} -- (14.5,14-5.5);
\draw[dotted] (5.5,14.5) 
-- (5.5,-0.5);
\draw (-0.5,14-7.65) node[left=1pt]{\color{black}\small$m$};
\draw[dotted] (-0.5,14-8) -- (14.5,14-8);
\draw[dotted] (8,14.5) node[above=6pt, anchor=base]{\color{black}\small$m$} -- (8,-0.5);
\draw[dotted] (-0.5,14-8.75) node[left=1pt]{\color{black}\small$m\!+\!1$};
\draw[dotted] (-0.5,14-8.5) -- (14.5,14-8.5);
\draw[dotted] (8.5,14.5) 
-- (8.5,-0.5);
\draw[dotted] (-0.5,14-10) node[left=1pt]{\color{black}\small$r$} -- (14.5,14-10);
\draw[dotted] (10,14.5) node[above=6pt, anchor=base]{\color{black}\small$r$} -- (10,-0.5);
\draw[dotted] (-0.5,14-12.5) node[left=1pt]{\color{black}\small$N\!+\!m\!-\!r\!+\!1$} -- (14.5,14-12.5);
\draw[dotted] (12.5,14.5) 
-- (12.5,-0.5);
\draw[dotted] (-0.5,0) node[left=1pt]{\color{black}\small$N$} -- (14.5,0);
\draw[dotted] (14,14.5) node[above=6pt, anchor=base]{\color{black}\small$N$} -- (14,-0.5);
\draw (-0.5,-0.5) -- (14.5,-0.5) -- (14.5,14.5) -- (-0.5,14.5) -- (-0.5,-0.5);
\draw[gray!50,line width=4.5pt,line cap=round] (0,6) -- (2.5,8.5);
\draw[gray!50,line width=4.5pt,line cap=round] (5.5,11.5) -- (8,14) ;
\draw[gray!50,line width=4.5pt,line cap=round] (3,11) -- (5,9);
\draw[gray!50,line width=4.5pt,line cap=round] (8.5,0) -- (10,1.5);
\draw[gray!50,line width=4.5pt,line cap=round] (12.5,4) -- (14,5.5) ;
\draw[gray!50,line width=4.5pt,line cap=round] (10.5,3.5) -- (12,2);
\filldraw[fill=black] 
(0,6) circle (.25) (2.5,8.5) circle (.25)
(5.5,11.5) circle (.25) (8,14) circle (.25)
(3,11) circle (.25) (5,9) circle (.25)
(8.5,0) circle (.25) (10,1.5) circle (.25)
(12.5,4) circle (.25) (14,5.5) circle (.25)
(10.5,3.5) circle (.25) (12,2) circle (.25);
\etp
\]

In the first case \eqref{K:S2:A1} and \eqref{K:S2:A2} imply that, up to an overall scalar factor,
\aln{
K(u) &= \sum_{1\le i \le \ell} u\, E_{ii} + \sum_{\ell< i\le r} \veps_{i}\,( E_{i,t-i+1} + E_{t-i+1,i}) + \sum_{r<i\le t-r} \chi\,E_{ii} + \sum_{t-\ell<i\le N} u^{-1} E_{ii},  
\intertext{with $\veps_{i},\chi=\pm1$. In the second case, by the same arguments as before,}
K(u) &= \sum_{1\le i \le \ell} u\,\veps_{i}\,(E_{i,m-i+1} + E_{m-i+1,i}) + \sum_{\ell<i\le m-\ell} u\,E_{ii} \el & \hspace{2.6cm} + \sum_{c< i\le r} \veps_{i}\,(E_{i,N+m-i+1} +E_{N+m-i+1,i}) + \sum_{r<i< N+m-r} \chi\,E_{ii} , 
}
with $\veps_{i},\chi=\pm1$. These are $\la=\mu=1$ and $c_i = \pm1$ specializations of \eqref{KS1} and \eqref{KS2}, respectively. By Lemma \ref{L:[K]} and Remark \ref{R:spec} they are equivalent to the matrix \eqref{KA3:X0}.


{\noindent \it Step 4: If $K(u)$ is a symmetric generalized cross matrix \eqref{K:S1:A} but not a generalized involution matrix, then $K$ is equivalent to the matrix \eqref{KS}.}


We start by studying ratios between the matrix entries associated to the pairs $(i,\si(i))$ and $(k,k)$ satisfying $i<\si(i)$ and $k=\si(k)$. There are three cases we need to consider, $k<i<\si(i)$, $i<k<\si(i)$ and $i<\si(i)<k$. The corresponding configurations are represented below
\[
\btp
\fill[gray!20] (3,3) rectangle (6,6);
\draw[dotted] (0,7.5) node[left=1pt]{\color{black}\small$k$} -- (9,7.5);
\draw[dotted] (1.5,9) node[above=6pt, anchor=base]{\color{black}\small$k$} -- (1.5,0);
\draw[dotted] (0,6) node[left=1pt]{\color{black}\small$i$} -- (9,6) ;
\draw[dotted] (3,9) node[above=6pt, anchor=base]{\color{black}\small$i$} -- (3,0);
\draw[dotted] (0,3) node[left=1pt]{\color{black}\small$\si(i)$} -- (9,3);
\draw[dotted] (6,9) node[above=6pt, anchor=base]{\color{black}\small$\si(i)$} -- (6,0);
\draw (0,0) -- (9,0) -- (9,9) -- (0,9) -- (0,0);
\filldraw[fill=black] (1.5,7.5) circle (.25);
\filldraw[fill=black] (3,3) circle (.25) (3,6) circle (.25) (6,3) circle (.25) (6,6) circle (.25) ;
\etp
\qq\qu
\btp
\fill[gray!20] (3,3) rectangle (6,6);
\draw[dotted] (0,4.8) node[left=1pt]{\color{black}\small$k$} -- (9,4.8);
\draw[dotted] (4.2,9) node[above=6pt, anchor=base]{\color{black}\small$k\;$} -- (4.2,0);
\draw[dotted] (0,6) node[left=1pt]{\color{black}\small$i$} -- (9,6) ;
\draw[dotted] (3,9) node[above=6pt, anchor=base]{\color{black}\small$i$} -- (3,0);
\draw[dotted] (0,3) node[left=1pt]{\color{black}\small$\si(i)$} -- (9,3);
\draw[dotted] (6,9) node[above=6pt, anchor=base]{\color{black}\small$\;\si(i)$} -- (6,0);
\draw (0,0) -- (9,0) -- (9,9) -- (0,9) -- (0,0);
\filldraw[fill=black] (4.2,4.8) circle (.25);
\filldraw[fill=black] (3,3) circle (.25) (3,6) circle (.25) (6,3) circle (.25) (6,6) circle (.25) ;
\etp
\qq\qu
\btp
\fill[gray!20] (3,3) rectangle (6,6);
\draw[dotted] (0,1.5) node[left=1pt]{\color{black}\small$k$} -- (9,1.5);
\draw[dotted] (7.5,9) node[above=6pt, anchor=base]{\color{black}\small$\;k$} -- (7.5,0);
\draw[dotted] (0,6) node[left=1pt]{\color{black}\small$i$} -- (9,6) ;
\draw[dotted] (3,9) node[above=6pt, anchor=base]{\color{black}\small$i$} -- (3,0);
\draw[dotted] (0,3) node[left=1pt]{\color{black}\small$\si(i)$} -- (9,3);
\draw[dotted] (6,9) node[above=6pt, anchor=base]{\color{black}\small$\si(i)\;$} -- (6,0);
\draw (0,0) -- (9,0) -- (9,9) -- (0,9) -- (0,0);
\filldraw[fill=black] (7.5,1.5) circle (.25);
\filldraw[fill=black] (3,3) circle (.25) (3,6) circle (.25) (6,3) circle (.25) (6,6) circle (.25) ;
\etp
\] 
Substituting $i\to\si(i)$ and $j\to k$ in \eqref{K:22} gives
\aln{
a(\tfrac{u}{v}) \Big(  b_{ki}(u v)\,\key_{\si(i)i}(u)\,\key_{i\si(i)}(v) + b_{k\si(i)}(u v)\,\key_{\si(i)\si(i)}(u)\, \key_{\si(i)\si(i)}(v) - b_{\si(i)k}(u v)\,\key_{kk}(u)\, \key_{kk}(v) \Big)
 & \\
 + a(u v) \Big( b_{\si(i)k}(\tfrac{u}{v})\,\key_{\si(i)\si(i)}(v)\,\key_{kk}(u) - b_{k\si(i)}(\tfrac{u}{v})\,\key_{\si(i)\si(i)}(u)\,\key_{kk}(v) \Big) &= 0. 
}
Assume that $i<k<\si(i)$. Using \eqref{ab} we obtain 
\aln{
& (1-u v)\,\Big(u\,\key_{kk}(v)\,\key_{\si(i)\si(i)}(u) - v\,\key_{kk}(u)\,\key_{\si(i)\si(i)}(v)\Big) \\ & \qu + (u-v) \Big(u v\,\key_{\si(i)\si(i)}(u)\,\key_{\si(i)\si(i)}(v) + \key_{\si(i)i}(u)\,\key_{i\si(i)}(v) - \key_{kk}(u)\,\key_{kk}(v) \Big)  = 0.
}
It remains to use \eqref{K:S2:B1} and \eqref{K:S2:C} giving
\[
\frac{uv\,(u-v)(1-u^2) (1-v^2)}{(1-\al_{i\si(i)}u) (1-\al_{i\si(i)}v)}\Big(\ga_{i\si(i)k}(\ga_{i\si(i)k}+1)-\beta_{i\si(i)}^2\Big) \key_{\si(i)\si(i)}(u)\,\key_{\si(i)\si(i)}(v) = 0.
\]
The equality above is only true if $\beta_{i\si(i)}^2=\ga_{i\si(i)k}(\ga_{i\si(i)k}+1)$ with $\ga_{i\si(i)k}\in\C$. We have thus arrived at the following result, when $i<k<\si(i)$,
\ali{
\frac{\key_{kk}(u)}{\key_{\si(i)\si(i)}(u)} &= \bigg( 1 + \frac{u-u^{-1}}{\al_{i\si(i)} - u^{-1}}\,\ga_{i\si(i)k}\bigg) , \qq && \beta_{i\si(i)}^2=\ga_{i\si(i)k}(\ga_{i\si(i)k}+1) . 
\label{K:iki}
\intertext{Now assume that $k<i<\si(i)$ instead. Repeating the same steps as before we find} 
\frac{\key_{kk}(u)}{\key_{\si(i)\si(i)}(u)} &= \bigg( 1 + \frac{u-u^{-1}}{\al_{i\si(i)}-u^{-1}}\,(u \ga_{i\si(i)k}-1)\bigg), \qq &&\beta_{i\si(i)}^2=\ga_{i\si(i)k}(\ga_{i\si(i)k}-\al_{i\si(i)}). \label{K:kii}
\intertext{Finally, assume that $i<\si(i)<k$. This time we find} 
\frac{\key_{kk}(u)}{\key_{\si(i)\si(i)}(u)} &= \bigg( 1 + \frac{u-u^{-1}}{\al_{i\si(i)} - u^{-1}}\,u^{-1}\ga_{i\si(i)k}\bigg) , \qq && \beta_{i\si(i)}^2=\ga_{i\si(i)k}(\ga_{i\si(i)k}+\al_{i\si(i)}). \label{K:iik}
}

We now focus on the ratios between the matrix entries associated to the pairs $(i,\si(i))$ and $(j,\si(j))$ satisfying $i<\si(i)$ and $j<\si(j)$. Note that \eqref{K:ij-kl} holds, i.e., the pairs $(i,\si(i))$ and $(j,\si(j))$ must be non-crossing. Moreover, there can be at most two simultaneously non-crossing split pairs. Hence there are two cases we need to consider, $i<\si(i)<j<\si(j)$ and $i<j<\si(j)<\si(i)$. The configurations in question are represented below
\[
\btp
\fill[gray!10] (1.5,6) rectangle (4,8.5);
\fill[gray!10] (6,4) rectangle (8.5,1.5);
\draw[dotted] (0,8.5) node[left=1pt]{\color{black}\small$i$} -- (10,8.5) ;
\draw[dotted] (1.5,10) node[above=6pt, anchor=base]{\color{black}\small$i$} -- (1.5,0);
\draw[dotted] (0,6) node[left=1pt]{\color{black}\small$\si(i)$} -- (10,6) ;
\draw[dotted] (4,10) node[above=6pt, anchor=base]{\color{black}\small$\si(i)$} -- (4,0);
\draw[dotted] (0,4) node[left=1pt]{\color{black}\small$j$} -- (10,4);
\draw[dotted] (6,10) node[above=6pt, anchor=base]{\color{black}\small$j$} -- (6,0);
\draw[dotted] (0,1.5) node[left=1pt]{\color{black}\small$\si(j)$} -- (10,1.5);
\draw[dotted] (8.5,10) node[above=6pt, anchor=base]{\color{black}\small$\si(j)$} -- (8.5,0);
\draw (0,0) -- (10,0) -- (10,10) -- (0,10) -- (0,0);
\filldraw[fill=black] (4,8.5) circle (.25);
\filldraw[fill=black] (1.5,6) circle (.25);
\filldraw[fill=black] (4,6) circle (.25);
\filldraw[fill=black] (1.5,8.5) circle (.25);
\filldraw[fill=black] (8.5,4) circle (.25);
\filldraw[fill=black] (6,1.5) circle (.25);
\filldraw[fill=black] (8.5,1.5) circle (.25);
\filldraw[fill=black] (6,4) circle (.25);
\etp
\qq\qq
\btp
\fill[gray!10] (1.5,8.5) rectangle (8.5,1.5);
\fill[gray!20] (3,4) rectangle (6,7);
\draw[dotted] (0,8.5) node[left=1pt]{\color{black}\small$i$} -- (10,8.5);
\draw[dotted] (1.5,10) node[above=6pt, anchor=base]{\color{black}\small$i$} -- (1.5,0);
\draw[dotted] (0,7) node[left=1pt]{\color{black}\small$j$} -- (10,7) ;
\draw[dotted] (3,10) node[above=6pt, anchor=base]{\color{black}\small$j$} -- (3,0);
\draw[dotted] (0,4) node[left=1pt]{\color{black}\small$\si(j)$} -- (10,4);
\draw[dotted] (6,10) node[above=6pt, anchor=base]{\color{black}\small$\si(j)\;$} -- (6,0);
\draw[dotted] (0,1.5) node[left=1pt]{\color{black}\small$\si(i)$} -- (10,1.5);
\draw[dotted] (8.5,10) node[above=6pt, anchor=base]{\color{black}\small$\;\si(i)$} -- (8.5,0);
\draw (0,0) -- (10,0) -- (10,10) -- (0,10) -- (0,0);
\filldraw[fill=black] (1.5,1.5) circle (.25);
\filldraw[fill=black] (8.5,8.5) circle (.25);
\filldraw[fill=black] (1.5,8.5) circle (.25);
\filldraw[fill=black] (8.5,1.5) circle (.25);
\filldraw[fill=black] (3,4) circle (.25);
\filldraw[fill=black] (6,7) circle (.25);
\filldraw[fill=black] (3,7) circle (.25);
\filldraw[fill=black] (6,4) circle (.25);
\etp
\]
Since $K(u)$ is a generalized cross matrix, the filled nodes represent the only nonzero entries in the dotted lines and columns.

Let $i,j$ be such that $i<\si(i)<j<\si(j)$. The relation \eqref{K:S2:C} implies that, for $\ga_{j\si(j)\si(i)}\in\C$,
\[
\frac{\key_{\si(i)\si(i)}(u)}{\key_{\si(j)\si(j)}(u)} = 1 + \frac{(u-u^{-1})(u\,\ga_{j\si(j)\si(i)}-1)}{\al_{j\si(j)} - u^{-1}} .
\]
By \eqref{K:ij-kl} we have $\key_{i\si(i)}(u)/\key_{j\si(j)}(u)\in \C u$. Then, using \eqref{K:S2:B1}, we deduce that, for $\eta\in\C^\times$,
\[
\frac{\key_{\si(i)\si(i)}(u)}{\key_{\si(j)\si(j)}(u)} = \frac{\al_{i\si(i)} - u^{-1}}{\al_{j\si(j)} - u^{-1}}\,u\,\eta . 
\]
By equating the above expressions we find 
\equ{
\ga_{j\si(j)\si(i)}=0,\qq \eta+\al_{j\si(j)}=0, \qq \al_{i\si(i)}\al_{j\si(j)}=1. \label{K:S3:gna}
}
Hence, upon combining with \eqref{K:S2:B2}, we have, when $i<\si(i)<j<\si(j)$,
\equ{
\frac{\key_{\si(i)\si(i)}(u)}{\key_{\si(j)\si(j)}(u)} = \frac{\al_{j\si(j)} - u }{\al_{j\si(j)} - u^{-1}} , \qq \frac{\key_{ii}(u)}{\key_{jj}(u)} = \frac{1 - u\,\al_{j\si(j)}}{1 - u^{-1} \al_{j\si(j)}}, \label{K:S3:diag1}
}
and so
\equ{
\frac{\key_{ii}(u)}{\key_{\si(j)\si(j)}(u)} = u^2 , \qq 
\frac{\key_{jj}(u)}{\key_{\si(i)\si(i)}(u)} = 1 . \label{K:S3:diag2}
}

Next substitute $i\to\si(i)$ and $j\to\si(j)$ in \eqref{K:22}. Provided $i<\si(i)<j<\si(j)$, this gives
\aln{
& a(\tfrac{u}{v}) \Big( b_{\si(j)i}(u v)\, \key_{\si(i)i}(u)\, \key_{i\si(i)}(v) + b_{\si(j)\si(i)}(u v)\,\key_{\si(i)\si(i)}(u)\, \key_{\si(i)\si(i)}(v) \\ & \qq\qu - b_{\si(i)j}(u v)\,\key_{j\si(j)}(u) \, \key_{\si(j)j}(v) -  b_{\si(i)\si(j)}(u v)\,\key_{\si(j)\si(j)}(u)\, \key_{\si(j)\si(j)}(v) \Big) 
\\
& + a(u v) \Big(b_{\si(i)\si(j)}(\tfrac{u}{v})\,\key_{\si(i)\si(i)}(v)\, \key_{\si(j)\si(j)}(u) - b_{\si(j)\si(i)}(\tfrac{u}{v})\,\key_{\si(i)\si(i)}(u)\,\key_{\si(j)\si(j)}(v) \Big) = 0.
}
Using \eqref{ab} and dividing by the common scalar factor we rewrite the equality above as 
\aln{
& (u-v) \Big( \key_{\si(i)i}(u)\,\key_{i\si(i)}(v) + \key_{\si(i)\si(i)}(u)\,\key_{\si(i)\si(i)}(v) \Big) \\ & \qu - u v\,(u-v)\,\Big(\key_{j\si(j)}(u)\,\key_{\si(j)j}(v) + \key_{\si(j)\si(j)}(u)\,\key_{\si(j)\si(j)}(v) \Big) \\ &\qu - (1-u v)\,\Big( u\,\key_{\si(i)\si(i)}(v)\,\key_{\si(j)\si(j)}(u) - v\,\key_{\si(i)\si(i)}(u)\,\key_{\si(j)\si(j)}(v) \Big) = 0.
}
Finally, using \eqref{K:S2:B1}, \eqref{K:22}, \eqref{K:S3:gna} and \eqref{K:S3:diag1} we obtain
\[
\frac{ uv\,(u-v) (1-u^2) (1-v^2) }{(1-\al_{j\si(j)}u) (1-\al_{j\si(j)}v)}\,\Big(\beta _{i\si(i)}^2 \al_{j\si(j)}^2-\beta_{j\si(j)}^2\Big)\, \key_{\si(j)\si(j)}(u)\,\key_{\si(j)\si(j)}(v) = 0.
\]
Hence the above equality is only true if $\beta_{i\si(i)}^2 = \al_{j\si(j)}^{-2} \beta_{j\si(j)}^2$. We have thus arrived at the following result, when $i<\si(i)<j<\si(j)$,
\equ{
\frac{\key_{i\si(i)}(u)}{\key_{\si(j)\si(j)}(u)} = \pm \frac{u-u^{-1}}{\al_{j\si(j)} - u^{-1} } \, u\,\beta_{j\si(j)} . \label{K:iijj}
}

Let $i,j$ be such that $i<j<\si(j)<\si(i)$. The relation \eqref{K:S2:C} now implies that, for $\ga_{i\si(i)\si(j)}\in\C$,
\[
\frac{\key_{\si(j)\si(j)}(u)}{ \key_{\si(i)\si(i)}(u)} = 1 + \frac{u-u^{-1}}{\al_{i\si(i)} - u^{-1}}\,\ga_{i\si(i)\si(j)}.
\]
By \eqref{K:ij-kl} we have $\key_{i\si(i)}(u)/\key_{j\si(j)}(u)\in \C$. Then, using \eqref{K:S2:B1} and \eqref{K:S2:B2}, we deduce that that, for $\eta\in\C^\times$, 
\[
\frac{\key_{\si(j)\si(j)}(u)}{\key_{\si(i)\si(i)}(u)} = \frac{\al_{j\si(j)} - u^{-1}}{\al_{i\si(i)} - u^{-1}}\,\eta , 
\]
yielding $\ga_{i\si(i)\si(j)}=0$, $\eta=1$ and $\al_{i\si(i)}=\al_{j\si(j)}$. Upon combining with \eqref{K:S2:B2} we obtain
\equ{
\frac{\key_{ii}(u)}{\key_{jj}(u)} = \frac{\key_{\si(i)\si(i)}(u)}{\key_{\si(j)\si(j)}(u)} = 1 . \label{K:S3:diag3}
}
Finally, by repeating the same steps as before, i.e., by combining the above results with \eqref{K:S2:B1} and \eqref{K:22}, we find $\beta_{i\si(i)}^2 = \beta_{j\si(j)}^2$. Therefore, when $i<j<\si(j)<\si(i)$,
\equ{
\frac{\key_{i\si(i)}(u)}{\key_{j\si(j)}(u)} = \pm1 , \label{K:ijji}
}


The last ingredient that we need is a constraint on the choice of the pairs $(k,\si(k))$ with $k=\si(k)$. Let $i,j,k$ be such that $i<k<j<\si(j)<\si(i)$ and $k=\si(k)$. We already know that in this case $\al_{i\si(i)}=\al_{j\si(j)}$ and $\beta_{i\si(i)}^2= \beta_{j\si(j)}^2$. Thus, upon combining \eqref{K:S3:diag3}, \eqref{K:ijji}, \eqref{K:iki} and \eqref{K:kii}, we find
\[
1 + \frac{u-u^{-1} }{\al_{i\si(i)} - u^{-1}} \,\ga_{i\si(i)k} = 1 +  \frac{u-u^{-1}}{\al_{i\si(i)} - u^{-1}}\,(1+u \ga_{j\si(j)k}) ,
\]
where $\beta_{i\si(i)}^2=\ga_{i\si(i)k}(\ga_{i\si(i)k}+1)=\ga_{j\si(j)k}(\ga_{j\si(j)k}-\al_{j\si(j)})$. The equality above is only true if $\ga_{i\si(i)k}=-1$ and $\ga_{j\si(j)k}=0$. But then $\beta_{i\si(i)}=0$ yielding a contradiction. Now let $i,j,k$ be such that $i<j<\si(j)<k<\si(i)$ and $k=\si(k)$ instead. Using similar arguments as above, with \eqref{K:iik} instead of \eqref{K:kii}, we again obtain a contradiction. In other words, the following two configurations
\[
\btp
\fill[gray!10] (1.5,8.5) rectangle (8.5,1.5);
\fill[gray!20] (4,3) rectangle (7,6);
\draw[dotted] (0,8.5) node[left=1pt]{\color{black}\small$i$} -- (10,8.5);
\draw[dotted] (1.5,10) node[above=6pt, anchor=base]{\color{black}\small$i$} -- (1.5,0);
\draw[dotted] (0,7.25) node[left=1pt]{\color{black}\small$k$} -- (10,7.25);
\draw[dotted] (2.75,10) node[above=6pt, anchor=base]{\color{black}\small$k$} -- (2.75,0);
\draw[dotted] (0,6) node[left=1pt]{\color{black}\small$j$} -- (10,6) ;
\draw[dotted] (4,10) node[above=6pt, anchor=base]{\color{black}\small$j$} -- (4,0);
\draw[dotted] (0,3) node[left=1pt]{\color{black}\small$\si(j)$} -- (10,3);
\draw[dotted] (7,10) node[above=6pt, anchor=base]{\color{black}\small$\!\si(j)\;\;$} -- (7,0);
\draw[dotted] (0,1.5) node[left=1pt]{\color{black}\small$\si(i)$} -- (10,1.5);
\draw[dotted] (8.5,10) node[above=6pt, anchor=base]{\color{black}\small$\;\;\;\si(i)\!\!$} -- (8.5,0);
\draw (0,0) -- (10,0) -- (10,10) -- (0,10) -- (0,0);
\filldraw[fill=black] (2.75,7.25) circle (.25);
\filldraw[fill=black] (1.5,1.5) circle (.25) (8.5,8.5) circle (.25) (1.5,8.5) circle (.25) (8.5,1.5) circle (.25);
\filldraw[fill=black] (4,3) circle (.25) (7,3) circle (.25) (4,6) circle (.25) (7,6) circle (.25);
\etp
\qq\qq
\btp
\fill[gray!10] (1.5,8.5) rectangle (8.5,1.5);
\fill[gray!20] (3,4) rectangle (6,7);
\draw[dotted] (0,8.5) node[left=1pt]{\color{black}\small$i$} -- (10,8.5);
\draw[dotted] (1.5,10) node[above=6pt, anchor=base]{\color{black}\small$i$} -- (1.5,0);
\draw[dotted] (0,7) node[left=1pt]{\color{black}\small$j$} -- (10,7) ;
\draw[dotted] (3,10) node[above=6pt, anchor=base]{\color{black}\small$j$} -- (3,0);
\draw[dotted] (0,4) node[left=1pt]{\color{black}\small$\si(j)$} -- (10,4);
\draw[dotted] (6,10) node[above=6pt, anchor=base]{\color{black}\small$\!\!\si(j)\;\;$} -- (6,0);
\draw[dotted] (0,1.5) node[left=1pt]{\color{black}\small$\si(i)$} -- (10,1.5);
\draw[dotted] (8.5,10) node[above=6pt, anchor=base]{\color{black}\small$\;\;\si(i)\!\!\!$} -- (8.5,0);
\draw[dotted] (0,2.75) node[left=1pt]{\color{black}\small$k$} -- (10,2.75);
\draw[dotted] (7.25,10) node[above=6pt, anchor=base]{\color{black}\small$\,k$} -- (7.25,0);
\draw (0,0) -- (10,0) -- (10,10) -- (0,10) -- (0,0);
\filldraw[fill=black] (7.25,2.75) circle (.25);
\filldraw[fill=black] (1.5,1.5) circle (.25) (8.5,8.5) circle (.25) (1.5,8.5) circle (.25) (8.5,1.5) circle (.25);
\filldraw[fill=black] (3,4) circle (.25) (6,7) circle (.25) (3,7) circle (.25) (6,4) circle (.25);
\etp
\]
are {\it not} allowed. Recall that the filled nodes represent the only nonzero entries in the dotted lines and columns. This means that all simultaneously non-crossing non-split pairs $(i_r,\si(i_r))$ satisfying $i_r<\si(i_r)$ with $r=1,\dots,n$, where $n$ is the number of such pairs, form the sequence \eqref{K:seq-ii}. Moreover, by \eqref{K:S3:diag3} and \eqref{K:ijji}, $\al_{i_1\si(i_1)}=\dots=\al_{i_n\si(i_n)}$ and so
\equ{
\key_{i_1i_1}(u)=\dots=\key_{i_ni_n}(u) , \qq
\key_{\si(i_1)\si(i_1)}(u)=\dots=\key_{\si(i_1)\si(i_n)}(u) . \label{K:diags1}
}
By \eqref{K:S2:A1} we have $\beta^2_{i_1\si(i_1)}=\dots=\beta^2_{i_n\si(i_n)}$. Finally, by combining \eqref{K:S2:B2} with \eqref{K:iki}, \eqref{K:kii} and \eqref{K:iik}, we deduce that, for all $i,i',j,j',k,k'$ fixed under $\si$ and satisfying $i,i'< i_1$,\; $i_n< j,j'<\si(i_n)$ and $\si(i_1)< k,k'$ the relations \eqref{K:diags2} hold.


The above results imply that $\si$ must be given by $\Sigma^{\rm sym'}_N$ or $\Sigma^{\rm sym''}_N$. The nonzero entries of $K(u)$ can thus be represented as follows (recall a similar result in Step 3): 
\[
\btp
\fill[gray!10] (2,12) rectangle (11,3);
\fill[gray!20] (4.5,9.5) rectangle (8.5,5.5);
\draw[dotted] (-0.5,12) node[left=1pt]{\color{black}\small$\ell\!+\!1$} -- (14.5,12);
\draw[dotted] (2,14.5) node[above=6pt, anchor=base]{\color{black}\small$\ell\!+\!1$} -- (2,-0.5);
\draw[dotted] (-0.5,9.5) node[left=1pt]{\color{black}\small$r$} -- (14.5,9.5);
\draw[dotted] (4.5,14.5) node[above=6pt, anchor=base]{\color{black}\small$r$} -- (4.5,-0.5);
\draw[dotted] (-0.5,5.5) node[left=1pt]{\color{black}\small$t\!-\!r$} -- (14.5,5.5);
\draw[dotted] (8.5,14.5) node[above=6pt, anchor=base]{\color{black}\small$\!t\!-\!r\;$} -- (8.5,0);
\draw[dotted] (-0.5,3) node[left=1pt]{\color{black}\small$t\!-\!\ell\!+\!1$} -- (14.5,3);
\draw[dotted] (11,14.5) node[above=6pt, anchor=base]{\color{black}\small$\;\;\;t\!-\!\ell\!+\!1\!\!\!\!\!$} -- (11,-0.5);
\draw (-0.5,-0.5) -- (14.5,-0.5) -- (14.5,14.5) -- (-0.5,14.5) -- (-0.5,-0.5);
\draw[gray!50,line width=4.5pt,line cap=round] (0,14) -- (14,0); 
\draw[gray!50,line width=4.5pt,line cap=round] (5,9) -- (8,6);
\draw[gray!50,line width=4.5pt,line cap=round] (2,3) -- (4.5,5.5);
\draw[gray!50,line width=4.5pt,line cap=round] (8.5,9.5) -- (11,12);
\filldraw[fill=black] 
(0,14) circle (.25) (2,12) circle (.25)
(4.5,9.5) circle (.25) (8.5,5.5) circle (.25) (11,3) circle (.25) (14,0) circle (.25)
(2,3) circle (.25) (11,12) circle (.25) 
(4.5,5.5) circle (.25) (8.5,9.5) circle (.25);
\etp
\qq\qu
\btp
\fill[gray!10] (0,14) rectangle (8,6);
\fill[gray!20] (2.5,11.5) rectangle (5.5,8.5);
\fill[gray!10] (14,0) rectangle (8.5,5.5);
\fill[gray!20] (12.5,1.5) rectangle (10,4);
\draw[dotted] (-0.5,14) node[left=1pt]{\color{black}\small$1$} -- (14.5,14);
\draw[dotted] (0,14.5) node[above=6pt, anchor=base]{\color{black}\small$1$} -- (0,-0.5);
\draw[dotted] (-0.5,14-2.5) node[left=1pt]{\color{black}\small$\ell$} -- (14.5,14-2.5);
\draw[dotted] (2.5,14.5) node[above=6pt, anchor=base]{\color{black}\small$\ell$} -- (2.5,-0.5);
\draw[dotted] (-0.5,14-5.5) node[left=1pt]{\color{black}\small$m\!-\!\ell\!+\!1$} -- (14.5,14-5.5);
\draw[dotted] (5.5,14.5) 
-- (5.5,-0.5);
\draw (-0.5,14-7.65) node[left=1pt]{\color{black}\small$m$};
\draw[dotted] (-0.5,14-8) -- (14.5,14-8);
\draw[dotted] (8,14.5) node[above=6pt, anchor=base]{\color{black}\small$m$} -- (8,-0.5);
\draw[dotted] (-0.5,14-8.75) node[left=1pt]{\color{black}\small$m\!+\!1$};
\draw[dotted] (-0.5,14-8.5) -- (14.5,14-8.5);
\draw[dotted] (8.5,14.5) 
-- (8.5,-0.5);
\draw[dotted] (-0.5,14-10) node[left=1pt]{\color{black}\small$r$} -- (14.5,14-10);
\draw[dotted] (10,14.5) node[above=6pt, anchor=base]{\color{black}\small$r$} -- (10,-0.5);
\draw[dotted] (-0.5,14-12.5) node[left=1pt]{\color{black}\small$N\!+\!m\!-\!r\!+\!1$} -- (14.5,14-12.5);
\draw[dotted] (12.5,14.5) 
-- (12.5,-0.5);
\draw[dotted] (-0.5,0) node[left=1pt]{\color{black}\small$N$} -- (14.5,0);
\draw[dotted] (14,14.5) node[above=6pt, anchor=base]{\color{black}\small$N$} -- (14,-0.5);
\draw (-0.5,-0.5) -- (14.5,-0.5) -- (14.5,14.5) -- (-0.5,14.5) -- (-0.5,-0.5);
%
\draw[gray!50,line width=4.5pt,line cap=round] (0,6) -- (2.5,8.5);
\draw[gray!50,line width=4.5pt,line cap=round] (5.5,11.5) -- (8,14) ;
\draw[gray!50,line width=4.5pt,line cap=round] (0,14) -- (14,0);
\draw[gray!50,line width=4.5pt,line cap=round] (8.5,0) -- (10,1.5);
\draw[gray!50,line width=4.5pt,line cap=round] (12.5,4) -- (14,5.5) ;
\filldraw[fill=black] 
(0,6) circle (.25) (2.5,8.5) circle (.25)
(5.5,11.5) circle (.25) (8,14) circle (.25)
(2.5,11.5) circle (.25) (5.5,8.5) circle (.25)
(0,14) circle (.25) (8,14-8) circle (.25)
(8.5,0) circle (.25) (10,1.5) circle (.25)
(12.5,4) circle (.25) (14,5.5) circle (.25)
(10,4) circle (.25) (12.5,1.5) circle (.25)
(8.5,5.5) circle (.25) (14,0) circle (.25);
\etp
\]

Let $(\ell,r,t,\si)\in\Sigma^{\rm sym'}_N$. By \eqref{K:S3:diag3}, \eqref{K:ijji}, \eqref{K:diags1} and \eqref{K:diags2} we have
\gan{
\key_{11}(u) = \dots = \key_{\ell\ell}(u) =  u^2\,\key_{t-l+2,t-l+2}(u) = \dots = u^2\,\key_{NN}(u) , \\
\key_{\ell+1,\ell+1}(u) = \dots = \key_{rr}(u) , \qq  \key_{t-r,r-r}(u) = \dots = \key_{t-\ell+1,t-\ell+1}(u) ,
\\
\key_{r+1,r+1}(u) = \dots = \key_{t-r-1,t-r-1}(u), \qq \key_{\ell+1,\si(\ell+1)}^2(u) = \dots = \key_{r\si(r)}^2(u) .
}
Without loss of generality, we set $\key_{r+1,r+1}(u)=1$ and 
\equ{
\al_{r\si(r)}=\frac{\mu-\mu^{-1}}{\la-\la^{-1}} , \qq \ga_{r\si(r),r+1} = \frac{\la^{-1}}{\la-\la^{-1}}, \qq \la,\mu\in\C^\times. \label{ab-lamu}
}
Using \eqref{K:S2:B1}, \eqref{K:S2:B2} and \eqref{K:iki} we find
\eqg{
\key_{rr}(u) = 1 + \frac{\la(u-u^{-1})}{(\la^{-1}\mu^{-1}+u^{-1})(\la-\mu\,u)}, \qq 
\key_{\si(r)\si(r)}(u) = 1 + \frac{\la^{-1}(u-u^{-1})}{(\la^{-1}\mu^{-1}+u^{-1})(\la-\mu\,u)} , \\
\key_{r\si(r)}(u) = \pm\frac{u-u^{-1}}{(\la^{-1}\mu^{-1}+u^{-1})(\la-\mu\,u)} , \qq \beta_{r\si(r)} = \mp\frac{1}{\la-\la^{-1}} . \label{K:rr}
}
Finally, using \eqref{K:kii}, \eqref{K:iik} and \eqref{K:diags2}, we find that 
\ali{
\key_{11}(u) &= 1 + \frac{u-u^{-1}}{\la^{-1} \mu^{-1} + u^{-1}}, \qq& \key_{NN}(u) &= 1 - \frac{u-u^{-1}}{(\la^{-1} \mu^{-1} + u^{-1})\la \mu u} . \label{K:11NN}
}
Hence, up to an overall scalar factor, 
\aln{ 
K(u) &= I + \frac{u-u^{-1}}{\la^{-1} \mu^{-1} + u^{-1}} \Bigg( \sum_{1\le i \le \ell} E_{ii} -  \frac{1}{\la\mu u} \sum_{t<i\le N} E_{ii} \\ & \hspace{4cm} + \frac{1}{\la-\mu\,u} \sum_{\ell< i\le r} \Big( \la E_{ii} + \la^{-1} E_{\si(i)\si(i)} \mp E_{i\si(i)} \mp E_{i\si(i)}\Big)\Bigg)
}
By Lemma \ref{L:[K]}, it is equivalent to \eqref{KS}.


Now let $(\ell,m,r,\si)\in\Sigma^{\rm sym''}_N$. By \eqref{K:S3:diag2}, \eqref{K:iijj}, \eqref{K:S3:diag3}, \eqref{K:ijji}, \eqref{K:diags1} and \eqref{K:diags2} we have   
\gan{
\key_{11}(u) = \dots = \key_{\ell\ell}(u) = u^2\,\key_{N-m-r+1}(u) = \dots = u^2\,\key_{NN}(u) , 
\\
\key_{m-\ell+1,m-\ell+1}(u) = \dots = \key_{mm}(u) = \key_{m+1,m+1}(u) = \dots = \key_{rr}(u),
\\
\key_{\ell+1,\ell+1}(u) = \dots = \key_{m-\ell,m-\ell}(u), \qq \key_{r+1,r+1}(u) = \dots = \key_{N+m-r,N+m-r}(u),
\\
\key_{1\si(1)}^2(u) = \dots = \key_{\ell\si(\ell)}^2(u) = u^2\,\key_{m+1,\si(m+1)}^2(u) = \dots = u^2\,\key_{r\si(r)}^2(u) .
}
As in the previous case, we set $\key_{r+1,r+1}(u)=1$ and choose the parametrization \eqref{ab-lamu}. Then \eqref{K:rr} also holds. Moreover, $\key_{\ell+1,\ell+1}(u)$ coincides with $\key_{11}(u)$ given by \eqref{K:11NN}. Hence, up to an overall scalar factor,
\aln{
K(u) = I + \frac{u-u^{-1}}{\la^{-1}\mu^{-1}+u^{-1}} &\Bigg( \sum_{1\le i \le \ell} E_{ii} + \sum_{\ell<i\le m-\ell} E_{ii} \el & \qu + \frac{1}{\la-\mu u} \sum_{1\le i \le \ell}\Big( \mu^{-1} u E_{ii} + \la E_{\si(i)\si(i)} \mp u E_{i\si(i)} \mp u E_{\si(i)i} \Big) \el & \qu + \frac{1}{\la-\mu u} \sum_{m< i\le r} \Big( \la E_{ii} + \la^{-1} E_{\si(i)\si(i)} \mp E_{i\si(i)} \mp E_{i\si(i)}\Big)\Bigg).
}
By Lemma 3.4, it is equivalent to \eqref{KS}.


{\noindent \it Step 5: If $K(u)$ is a non-symmetric generalized cross matrix, then $K$ is equivalent to \eqref{KT}.}

Let $K(u)$ be given by \eqref{K:S1:A}. 
Assume that $K(u)$ is a non-symmetric matrix, i.e., there exists $k$ such that $k<\si(k)$ and $\key_{k\si(k)}(u)\,\key_{\si(k)k}(u)=0$. 
Then for all $i$ such that $i<\si(i)$ we have $\key_{i\si(i)}(u)\,\key_{\si(i)i}(u)=0$, i.e., a non-symmetric $K(u)$ is triangular. We will show this by contradiction. 
(Note that if $k$ is the only index such that $k<\si(k)$, for example when $N=3$, then there is nothing to prove.)
Assume that there exists $i$ such that $i<\si(i)$ and $\key_{i\si(i)}(u)\,\key_{\si(i)i}(u)\neq 0$. 
There are four cases we need to consider: 
\equ{
k<\si(k)<i<\si(i) , \qu i<\si(i)<k<\si(k), \qu i<k<\si(k)<\si(i) , \qu k<i<\si(i)<\si(k). \label{K:S5:1}
}
Consider the first case above. The equality \eqref{K:22} in this case becomes, upon substituting $i\to\si(i)$ and $j\to k$,
\aln{
a(\tfrac{u}{v}) \Big( b_{ki}(u v)\,\key_{\si(i)i}(u)\,\key_{i\si(i)}(v) + b_{k\si(i)}(u v)\,\key_{\si(i)\si(i)}(u)\,\key_{\si(i)\si(i)}(v) - b_{\si(i)k}(u v)\,\key_{kk}(u)\,\key_{kk}(v) \Big) \qu& \\
+\ a(u v) \Big(b_{\si(i)k}(\tfrac uv)\,\key_{\si(i)\si(i)}(v)\,\key_{kk}(u) - b_{k\si(i)}(\tfrac uv)\,\key_{\si(i)\si(i)}(u)\,\key_{kk}(v) \Big) & = 0 .
}
By \eqref{K:S3:diag2}, which holds for $i,k$ described above, we have $\key_{kk}(u)=u^2 \key_{\si(i)\si(i)}(u)$ giving
\[
a(\tfrac{u}{v})\, b_{ki}(u v)\,\key_{\si(i)i}(u)\,\key_{i\si(i)}(v) + A(u,v)\, \key_{\si(i)\si(i)}(u)\, \key_{\si(i)\si(i)}(v) = 0 ,
\]
where
\aln{
A(u,v) &= a(u v) \left(u^2\, b_{\si(i)k}(\tfrac{u}{v})-v^2\, b_{k\si(i)}(\tfrac{u}{v})\right)+a(\tfrac{u}{v}) \left(b_{k\si(i)}(u v)-u^2 v^2\,b_{\si(i)k}(u v)\right) \\
& = \frac{1-uv}{q-q^{-1}uv} \left(\frac{(q^2-1)u^2v}{q^2v-u} - \frac{(q^2-1)uv^2}{q^2v-u} \right) +  \frac{v-u}{qv-q^{-1}u} \left(\frac{(q^2-1)uv}{q^2-uv} - \frac{(q^2-1)u^2v^2}{q^2-uv} \right) \\
& = \frac{\left(q^2-1\right) u v (1-u v)(u-v)}{\left(q-q^{-1}u v\right)\left(q^2 v-u\right) } - \frac{\left(q^2-1\right) u v (v-u) (u v-1)}{\left(q v-q^{-1}u\right) \left(q^2-u v\right)} = 0.
}
Hence $a(\tfrac{u}{v})\, b_{ki}(u v)\,\key_{\si(i)i}(u)\,\key_{i\si(i)}(v)=0$ yielding a contradiction. Proceeding in a similar way we find that each of the remaining three cases in \eqref{K:S5:1} also yields a contradiction. 
Therefore a non-symmetric $K(u)$ is triangular. 

Next, note that \eqref{K:S2:B2} holds for all diagonal entries of a triangular $K(u)$. Hence the diagonal part is the same as in \eqref{KD}, i.e.,
\equ{
\sum_{1\le i \le \ell} u\,E_{ii} + \sum_{\ell< i\le r} \frac{\al-u}{\al\,u-1}\, E_{ii} + \sum_{r< i\le N} u^{-1} E_{ii} \label{K:S5:2}
}
with $\ell$, $r$ satisfying $0\le \ell < r \le N$. 

We also note that for any $i,j$ such that $i<\si(i)$ and $j<\si(j)$ the pairs $(i,\si(i))$ and $(j,\si(j))$ are non-crossing, and there are at most two simultaneously non-crossing split pairs. This follows by the same arguments as before, i.e., from \eqref{K:ij-kl}. Moreover, \eqref{K:S3:diag1} and \eqref{K:S3:diag3} also hold in the split and non-split cases, respectively. Recall that in the split case we have $\al_{i\si(i)}\al_{j\si(j)}=1$, while in the non-split case $\al_{i\si(i)}=\al_{j\si(j)}$. 

Finally, we note that constraints involving parameters $\beta_{i\si(i)}$ in (\ref{K:iki}-\ref{K:iik}) are not present in the triangular case, i.e., repeating the same analysis as we did in Step 4 does not yield any constraints on $\beta_{i\si(i)}$. In particular, if $i,j,k$ are such that $i<j<\si(j)<\si(i)$ and $k=\si(k)$, the the configurations
\[
i<k<j<\si(j)<\si(i) , \qq i<j<\si(j)<k<\si(i)
\]
are also allowed. (Recall that in the symmetric case only $k<i<j<\si(j)<\si(i)$, $i<j<k<\si(j)<\si(i)$ and $i<j<\si(j)<\si(i)<k$ were allowed.) This implies that all simultaneously non-crossing non-split pairs $(i_r,\si(i_r))$ satisfying $i_r<\si(i_r)$ for $r=1,\dots,n$, where $n$ is the number of such pairs, form a sequence 
\[
(i_{a_1}, \si(i_{a_1})), \; (i_{a_2}, \si(i_{a_2})) ,\;\dots,\; (i_{a_n}, \si(i_{a_n}))
\]
such that 
\[
i_{a_1}<i_{a_2}<\dots < i_{a_n} , \qq \si(i_{a_1})>\si(i_{a_2})>\dots > \si(i_{a_n}).
\]
Furthermore, if $i,j,k$ are such that $i<\si(i)<j<\si(j)$ and $k=\si(k)$, then the following three configurations are also allowed:
\[
k<i<\si(i)<j<\si(j), \qq i<\si(i)<k<j<\si(j), \qq i<\si(i)<j<\si(j)<k .
\]
(Recall that in the symmetric case only $i<k<\si(i)<j<\si(j)$ and $i<\si(i)<j<k<\si(j)$ were allowed.) This implies that there exist $\ell$, $m$, $r$ satisfying $0\le \ell < r \le N$ and $\ell \le m \le r$ such that for all $i,j$ satisfying $i\ne \si(i)$, $j\ne \si(j)$, 
\aln{
& 0<\si(j)\le\si(i)\le \ell && \qu\text{if}\qu \ell< i \le j \le m, \\ 
& r<\si(j)\le\si(i)\le N && \qu\text{if}\qu m < i \le j \le r .
}
In other words, $(\ell,m,r,\si)\in\Sigma^{\rm tri'}_N$.
Finally, by combining \eqref{K:S2:B1}, \eqref{K:S2:C} and \eqref{K:S5:2}, we find that, for $i\ne \si(i)$,
\[
\key_{i\si(i)}(u) = (\del_{i\le m}\,u + \del_{i>m})\frac{u-u^{-1}}{\al\,u-1}\,\beta_{i\si(i)}  
\]
with $\beta_{i\si(i)}\in\C$. Hence
\aln{
K(u) &= \sum_{1\le i \le \ell} u\,E_{ii} + \sum_{\ell< i\le r} \frac{\al-u}{\al\,u-1}\, E_{ii} + \sum_{r< i\le N} u^{-1} E_{ii} \\
& \qu + \sum_{\ell\le i \le r} (\del_{i\le m}\,u + \del_{i>m}) \frac{u-u^{-1}}{\al\,u-1}\,\Big(c_{i\si(i)} E_{i\si(i)} + c_{\si(i)i} E_{\si(i)i} \Big)
}
with $c_{i\si(i)}\in\C$ satisfying $c_{ii}=0$ and $c_{i\si(i)}c_{\si(i)i}=0$. By Lemma \ref{L:[K]}, it is equivalent to \eqref{KT}.
\end{proof}

Recall that in the proof of Theorem \ref{T:K} we used an affinization procedure to show that $K(u)$ is a solution of the reflection equation\eqref{RE}.
The equivalence relation for solutions of the constant reflection equation (i.e., an analogue of Lemma \ref{L:K-symm}) is given by a conjugation with a diagonal matrix and a multiplication by an arbitrary nonzero scalar. Taking this into account, the Type 1 solutions stated in \cite[Thm.~2]{Mu} correspond to the matrix $G$ in \eqref{KQ}, while the Type 2 solutions correspond to $Q$ in \eqref{KT:aff}.%
\footnote{Note that solutions found in \cite{Mu} are $w$-transposed with respect to the ones considered in the present paper. This is because of a difference in the definition of the constant $R$-matrix $R_q$.}
This fact combined with Theorem \ref{T:K} leads to the following important corollary.

\begin{crl} \label{C:Aff}
Every invertible solution of the reflection equation \eqref{RE} can be obtained by the affinization procedure \eqref{KS:aff} or \eqref{KT:aff} of a symmetric (Type 1) or triangular (Type 2) solution of the constant reflection equation and an application of Lemma~\ref{L:K-symm}, respectively. 
\end{crl}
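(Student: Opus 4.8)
The plan is to read the corollary off from Theorem~\ref{T:K} together with the affinization identities \eqref{KS:aff} and \eqref{KT:aff} already established in the proof of that theorem, combined with the identification of the relevant constant matrices with the Type~1 and Type~2 solutions of \cite{Mu}. Thus essentially no new computation is needed; the work is in assembling pieces that are already in place.

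First I would apply Theorem~\ref{T:K}: any invertible $K\in\Rat(\C^N)^\times$ solving \eqref{RE} is equivalent to the canonical symmetric solution \eqref{KS} or to the canonical triangular solution \eqref{KT}. Since two solutions are equivalent precisely when they lie in a common orbit under the transformations of Lemma~\ref{L:K-symm}, it remains only to exhibit \eqref{KS} and \eqref{KT} themselves as affinizations of constant solutions. For \eqref{KS}: rewrite it in the form \eqref{KS:aff} using the pair $G,Q$ of \eqref{KQ}. By Lemma~\ref{L:KS-const} the matrices $G$ and $Q$ satisfy the constant braided reflection equations \eqref{RE:C1}--\eqref{RE:C4} and the algebraic identities \eqref{KKK1}--\eqref{KKK2} (so that $Q$ is in fact determined by $G$), and neither need be invertible — for instance $G\,e_i=0$ when $i\le\ell$, and $Q=I+(\la-\la^2)G-G^2$ is an idempotent. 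By \cite[Thm.~2]{Mu}, after taking into account the $w$-transposition of conventions recorded in the footnote above and the freedom of rescaling by a nonzero scalar and conjugating by a diagonal matrix, $G$ is precisely a symmetric (Type~1) solution of the constant reflection equation. Hence \eqref{KS} is obtained by the affinization \eqref{KS:aff} of a Type~1 solution followed by an application of Lemma~\ref{L:K-symm}.

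For \eqref{KT}: rewrite it in the form \eqref{KT:aff}, where $Q$ is the idempotent appearing in the sum in \eqref{KT}. As shown in the proof of Theorem~\ref{T:K}, this $Q$ satisfies the braided constant reflection equation $\check{R}_q Q_2 \check{R}_q Q_2 = Q_2 \check{R}_q Q_2 \check{R}_q$, i.e.\ it is a triangular (Type~2) solution of the constant reflection equation in the sense of \cite{Mu}; thus \eqref{KT} is obtained by the affinization \eqref{KT:aff} of a Type~2 solution, again followed by an application of Lemma~\ref{L:K-symm}. Combining the two cases with the reduction of the previous paragraph gives the corollary.

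The only point requiring real care is the bookkeeping in the comparison with \cite[Thm.~2]{Mu}: matching normalisations, remembering that a Type~1 datum automatically comes equipped with the companion idempotent $Q$ via \eqref{KKK2}, and keeping track of the $w$-transpose discrepancy between the two papers' conventions for $R_q$. Since all of this has essentially been prepared in Lemma~\ref{L:KS-const} and in the paragraph immediately preceding the corollary, the proof itself is short.
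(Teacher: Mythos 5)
Your proposal is correct and follows essentially the same route as the paper: the corollary is read off from Theorem~\ref{T:K} together with the affinization forms \eqref{KS:aff} and \eqref{KT:aff}, the identities of Lemma~\ref{L:KS-const} (in particular that $Q$ is determined by $G$), and the identification of $G$ and $Q$ with Mudrov's Type~1 and Type~2 constant solutions up to the $w$-transposed convention. No gaps.
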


\begin{rmk} [{{\it Non-invertible solutions}}]
Consider the triangular solution \eqref{KT}. Multiply by $\al$ and substitute $\veps_{i\si(i)}\to \al^{-1}\veps_{i\si(i)}$, $\veps_{\si(i)i}\to \al^{-1}\veps_{\si(i)i}$. Taking the limit $\al\to 0$ gives a set of non-invertible solutions of \eqref{RE}
\[
\sum_{m<i\le N} \Big( \veps_{i\si(i)} E_{i\si(i)} + \veps_{\si(i)i} E_{\si(i)i} \Big)
\]
with $(m,\si)\in\Sigma^{\rm tri}_N$ and $\veps_{i\si(i)},\veps_{\si(i)i}$ satisfying the same conditions as those in \eqref{KT}. 
A numerical low-rank analysis of \eqref{RE} suggests that these are indeed all non-invertible solutions, inequivalent in the sense of Lemma \ref{L:K-symm}. 
A similar limit of \eqref{KS} does not give any new solutions. \hfill \rmkend
\end{rmk}


\subsection{Twisted reflection equation}

We now turn to classification of invertible solutions of the twisted reflection equation \eqref{CtRE}.

\begin{thrm} \label{T:CK}
Let $N\ge 3$. Then $K \in\Rat(\C^N)^\times$ is a solution of the twisted reflection equation \eqref{CtRE} if and only if it is equivalent to one of the solutions given by the following matrices: 
\gat{
\sum_{1\le i \le N} E_{i\tsp\bi} + \sum_{1\le i<j\le N} \frac{1+q}{\tq + q\tsp u}\Big( (-q)^{\frac{j-i}2} u\tsp E_{i\tsp\bj} + (-q)^{\frac{i-j}2} \tq\tsp E_{j\tsp\bi} \Big) , \label{CK:qOns}
\\
\sum_{1\le i \le N} E_{i\tsp\bi} , \label{CK:A1}
\\
\sum_{1\le i \le N/2} (E_{2i-1,\overline{2i}} + E_{2i,\overline{2i}+1} ) , \label{CK:A2}
\\
\sum_{1\le i \le N/2} ( u\tsp E_{i,\bi-N/2} + E_{i+N/2,\bi} ) . \label{CK:A4}
}
The last two cases are only allowed if $N$ is even. 
\end{thrm}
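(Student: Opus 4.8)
($\Longrightarrow$) The proof parallels that of Theorem~\ref{T:K}, carried out throughout for the braided form of \eqref{CtRE} (its ``natural braided counterpart'' mentioned just after that equation). That each of \eqref{CK:qOns}, \eqref{CK:A1}, \eqref{CK:A2}, \eqref{CK:A4} solves \eqref{CtRE} can be cited from \cite{Gan,MRS,RV}, or proved uniformly by affinization exactly as in the first part of the proof of Theorem~\ref{T:K}: one rewrites the braided form of \eqref{CtRE} using the characteristic identities of $\check R_q := PR_q$ and of $P\RC_q$ so that both $R$-matrices become affine-linear in the spectral parameter; writes each candidate as an affinization of a constant matrix (\eqref{CK:A1} and \eqref{CK:A2} are themselves constant, \eqref{CK:qOns} is $\sum_i E_{i\bi}$ plus rational functions times constant matrices, and \eqref{CK:A4} is an affinization of a generalized permutation matrix); substitutes and collects powers of $u$ and $v$. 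This reduces the equation in each case to a short list of polynomial identities among constant matrices, verified by evaluation on $e_i$ and $e_i\ot e_j$ as in Lemma~\ref{L:KS-const}; for \eqref{CK:qOns} these identities encode the constant twisted reflection equation of type AI or AII (cf.~\cite{NoSu}) together with auxiliary commutation relations with $\check R_q$.

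($\Longleftarrow$) Conversely, write $K(u)=\sum_{ij}\key_{ij}(u)E_{ij}$ with $\key_{ij}\in\Rat$, substitute into the braided form of \eqref{CtRE}, and extract the scalar equations obtained by pairing with $e^*_i \ot e^*_j$ on the left and $e_k \ot e_l$ on the right; this yields a system analogous to \eqref{K1}, now with the structure constants of $\RC$ --- which carry the antidiagonal shift $i\mapsto\bi$ and the factors $(-q)^{(j-i)/2}$ --- entering alongside those of $R$. The first step is to determine the support of $K$: specializing $u/v$ and $uv$ to the values at which the relevant structure constants of $\check R$ and $P\RC$ degenerate (the analogue of $u=-qv$, which makes $a_{ij}(\tfrac uv)=1$), one shows that the nonzero $\key_{ij}(u)$ are forced into a small number of shapes, and that after a non-crossing/nesting analysis --- like Steps~3--4 of the proof of Theorem~\ref{T:K}, but with combinatorics governed by the antidiagonal involution rather than an arbitrary $\si$ --- the only possibilities are that $K(u)$ is dense or that $K(u)$ is a generalized permutation matrix $\sum_i\key_{i\si(i)}(u)E_{i\si(i)}$.

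In the dense case, separation of variables fixes all ratios $\key_{ij}(u)/\key_{kl}(u)$ up to finitely many constants, and the remaining freedom is removed using Lemma~\ref{L:K-symm} --- parts (iii) and (iv), with $\phi=w$ and $\psi$ the composite of $w$, $u\mapsto u^{-1}$ and $(-q)^{1/2}\mapsto(-q)^{-1/2}$ --- leaving \eqref{CK:qOns}. In the generalized-permutation case the admissible $\si$ are cut down both by the non-crossing relations (as in the proof of Lemma~\ref{L:R-symm}) and by the $(-q)$-powers in $\RC_q$, giving, up to equivalence, $\si: i\mapsto\bi$ (hence \eqref{CK:A1}) or --- only when $N$ is even --- one of the two shifted involutions underlying \eqref{CK:A2} and \eqref{CK:A4}, the spectral-parameter dependence of \eqref{CK:A4} being pinned down as in Step~3 of the proof of Theorem~\ref{T:K}. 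I expect the main obstacle to be precisely this support analysis together with the dense case --- controlling the system of functional equations tightly enough to force \eqref{CK:qOns} as the unique dense solution up to equivalence --- and, on the permutation side, correctly enumerating the admissible involutions for even $N$ and checking that the list \eqref{CK:A1}--\eqref{CK:A4} is complete and non-redundant beyond the relations already noted.
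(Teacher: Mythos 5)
Your proposal follows essentially the paper's route in the classification direction: write $K$ entrywise, pair the braided form of \eqref{CtRE} with dual basis vectors to obtain the analogue of \eqref{K1} (the paper's \eqref{CK1}), specialize the spectral parameters (e.g.\ $v=-q^{-1}u$) and separate variables to pin down the support and the entry ratios, and use Lemma \ref{L:K-symm} together with the equivalence classes of Lemma \ref{L:[CK]} to normalize the remaining constants. The paper organizes the support analysis as a trichotomy --- all entries nonzero forces \eqref{CK:qOns}; a vanishing entry together with a nonvanishing antidiagonal entry forces \eqref{CK:A1}; a vanishing antidiagonal entry forces an antidiagonally symmetric generalized permutation matrix, whence $N$ even and \eqref{CK:A2} or \eqref{CK:A4} --- which is your ``dense versus generalized permutation'' dichotomy made precise, so the skeleton matches.

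The one point where you genuinely deviate is the forward direction for \eqref{CK:A4}: citation alone would not do, since \cite{RV} states that solution without proof, and the paper instead verifies it by conjugating \eqref{tRE} with an explicit reordering $T$ of the tensor basis of $\C^N\ot\C^N$ and computing blockwise. Your affinization alternative is viable, with one correction: no characteristic identity of $P\RC_q$ is needed (nor is one available in the Hecke form you invoke), because by \eqref{RC(u)-aff} the matrix $\RC(uv)$ is already, up to a scalar factor, affine-linear in $uv$; after clearing denominators both sides of the braided equation are polynomials of low degree in $u,v$, and collecting coefficients reduces the verification to finitely many constant-matrix identities checked on $e_i\ot e_j$, in the spirit of Remark \ref{R:Aff}. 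That route buys a uniform treatment of all four candidate solutions at the cost of more coefficient bookkeeping than the paper's block computation.
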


We first state a technical lemma which will help us with proving Theorem \ref{T:CK}.

\begin{lemma} \label{L:[CK]} 
Let $K$ be given by \eqref{CK:qOns}, \eqref{CK:A1}, \eqref{CK:A2} or \eqref{CK:A4}. 
Then the solutions of \eqref{CtRE} equivalent to $K$ are given by
\gat{
g(u) \Bigg( \sum_{1\le i \le N} c_{\bi}^2E_{i\tsp\bi} + \sum_{1\le i<j\le N} c_{\bi}\,c_{\bj}\;\frac{1+q}{\tq\pm q\tsp u} \Big( \pm(-q)^{\frac{j-i}2} u\tsp E_{i\tsp\bj} + (-q)^{\frac{i-j}2} \tq\tsp E_{j\tsp\bi} \Big) \Bigg) \qu\text{for}\qu \eqref{CK:qOns}\!\! \label{CK:[qOns]}
\\
g(u) \sum_{i} c_{\bi} E_{i\,\bi} \qq\text{for}\qq \eqref{CK:A1}  \label{CK:[A1]}
\\
g(u)  \sum_{1\le i \le N/2} c_{i}( E_{2i-1,\overline{2i}} + E_{2i,\overline{2i}+1} ) \qq\text{or}  \label{CK:[A2a]} \\
g(u) \Big( c_1(u\,E_{11} + u^{-1} E_{NN}) + \sum_{1< i \le N/2} c_{i}( E_{2i-1,\overline{2i}+2} + E_{2i-2,\overline{2i}+1} ) \Big) \qq\text{for}\qq \eqref{CK:A2} \label{CK:[A2b]} 
\\
g(u) \sum_{1\le i \le N/2} c_i( \pm u\,E_{i,\bi-N/2} + E_{i+N/2,\bi} )  \qq\text{for}\qq \eqref{CK:A4} \label{CK:[A4]}
}
for some $g \in\Rat^\times$ and $c_i\in\C^\times$. 
\end{lemma}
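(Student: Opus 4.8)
The plan is to follow the same strategy as in the proof of Lemma~\ref{L:[K]}. By Lemmas~\ref{L:R-symm} and~\ref{L:K-symm}, together with $(Z^\rho(u))^N = uI$, the equivalence relation on solutions of \eqref{CtRE} is generated by five operations: multiplication by an arbitrary $g \in \Rat^\times$; the substitution $u \mapsto -u$; conjugation $K(u) \mapsto w(D)\,K(u)\,D$ by a \cid matrix $D = \sum_i d_i E_{ii}$; conjugation $K(u) \mapsto w\big((Z^\rho(\tfrac\eta u))^k\big)\,K(u)\,(Z^\rho(\eta u))^k$ with $0 \le k < N$ and $\eta \in \C^\times$; and the antiautomorphism $\psi$ of item~(iv) of Lemma~\ref{L:K-symm}. (Here one uses that every symmetry of $R$ equals a rational function times a product of a \cid matrix and a power of $Z^\rho$, and that operation~(iii) of Lemma~\ref{L:K-symm} applied to such a product factors as a composition of multiplication by an element of $\Rat^\times$ and the two conjugations above.) Writing $\mathcal F$ for the family of matrices displayed in the statement attached to a given canonical solution $K$, and noting that $K \in \mathcal F$, I would prove that the equivalence class of $K$ equals $\mathcal F$ in two steps: (a)~$\mathcal F$ is stable under each of the five operations, whence the equivalence class of $K$ is contained in $\mathcal F$; and (b)~every member of $\mathcal F$ is obtained from $K$ by a composition of these operations, whence $\mathcal F$ is contained in the equivalence class.

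Three of the operations are dispatched quickly. The prefactor $g(u)$ is exactly what multiplication by $g \in \Rat^\times$ supplies, and this operation visibly preserves $\mathcal F$. Conjugation by $D$ acts entrywise: since $w(D) = \sum_i d_{\bi} E_{ii}$, one has $w(D)\,E_{i\tsp\bj}\,D = d_{\bi} d_{\bj}\,E_{i\tsp\bj}$, so the coefficient of $E_{i\tsp\bj}$ is multiplied by $d_{\bi} d_{\bj}$; applied to \eqref{CK:qOns} this produces precisely the coefficients $c_{\bi}^2$, $c_{\bi} c_{\bj}$ of \eqref{CK:[qOns]} with $c_i = d_i$, and likewise for the other families, using that $\C^\times$ consists of squares so that $\sum_i d_{\bi}^2 E_{i\tsp\bi}$ sweeps out all of \eqref{CK:[A1]}; this operation plainly maps $\mathcal F$ into itself. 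The substitution $u \mapsto -u$ sends $\tq + q\tsp u$ to $\tq - q\tsp u$ and flips the sign of every term carrying an explicit $u$, which is the origin of the symbol $\pm$ in \eqref{CK:[qOns]} and \eqref{CK:[A4]}; it has no effect on \eqref{CK:[A1]} or \eqref{CK:[A2a]}, and it is an involution on each family. Finally, $\psi$ is checked to preserve each family directly: using $w(Z^\rho(v)) = Z^\rho(v)$ one finds that $\psi$ acts as the identity on \eqref{CK:A1} and on \eqref{CK:qOns} (the combined action of $w$, $u \mapsto u^{-1}$ and $(-q)^{1/2} \mapsto (-q)^{-1/2}$ on the two off-diagonal terms of \eqref{CK:qOns} interchanges them and restores the original coefficients), while on \eqref{CK:A2} and \eqref{CK:A4} it reproduces the matrix up to the exchange $u \leftrightarrow u^{-1}$, hence again a member of $\mathcal F$.

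The substantive step is stability of each $\mathcal F$ under conjugation by powers of $Z^\rho$, which is also the source of the ``wrapped'' members of the families --- in particular of the matrices in \eqref{CK:[A2b]}. Since $w$ is an antiautomorphism fixing $Z^\rho(v)$, it fixes all powers $(Z^\rho(v))^k$, so this operation is simply $K(u) \mapsto (Z^\rho(\tfrac\eta u))^k\,K(u)\,(Z^\rho(\eta u))^k$; using $(Z^\rho(v))^k = \sum_{k<j\le N} E_{j-k,j} + \sum_{1\le j\le k} v\,E_{N+j-k,j}$ I would compute it entrywise on each canonical $K$. For \eqref{CK:A1} the explicit factors of $u$ cancel and one obtains $\sum_{1\le i\le N-k} E_{i\tsp\bi} + \eta^2\sum_{N-k<i\le N} E_{i\tsp\bi}$, a member of \eqref{CK:[A1]}, and composing with conjugation by $D$ one reaches all of \eqref{CK:[A1]}. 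For \eqref{CK:qOns} the conjugation cyclically relabels the index pattern and, after absorbing powers of $\eta$ into the $c_i$ and a scalar into $g$, returns a matrix of the form \eqref{CK:[qOns]}. For the even-$N$ solutions the key observation is that a cyclic shift preserves the block structure of the position pattern: for \eqref{CK:A2} an even shift permutes the consecutive pairs $\{1,2\},\{3,4\},\dots$ among themselves, returning the form \eqref{CK:[A2a]}, whereas an odd shift splits one pair across the ``seam'' $i=N$, $i=1$, which is exactly how the terms $u\tsp E_{11}$, $u^{-1}E_{NN}$ and the shifted index pattern of \eqref{CK:[A2b]} arise (the phenomenon being entirely analogous to the passage from $\Sigma^{\rm sym}_N$ to $\Sigma^{\rm sym'}_N$, $\Sigma^{\rm sym''}_N$ in Lemma~\ref{L:[K]}); for \eqref{CK:A4} every shift permutes the pairs $\{a,a+N/2\}$ among themselves, so the form \eqref{CK:[A4]} is preserved up to the relabeling of the $c_i$ and the sign already allowed in \eqref{CK:[A4]}. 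Running over $0 \le k < N$ together with conjugation by $D$ then exhausts each family, completing~(b).

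The part I expect to be the main obstacle is precisely this $Z^\rho$-bookkeeping: tracking how the cyclic shift $(Z^\rho(v))^k$, acting on both sides, transforms the index pattern and the $u$-factors of each canonical matrix, and confirming for the even-$N$ families that no power of $Z^\rho$ moves \eqref{CK:A2} or \eqref{CK:A4} outside its family. As in Lemma~\ref{L:[K]}, this is a finite and elementary, if tedious, verification, with no conceptual difficulty beyond it.
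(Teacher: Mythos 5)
Your proposal is correct and follows essentially the same route as the paper: the paper's proof likewise reduces everything to Lemma \ref{L:K-symm}, writing the equivalent solutions as $g(u)\,D^w\big((Z^\rho(\tfrac{\eta}{u}))^w\big)^k K(\pm u)\,(Z^\rho(\eta u))^k D$ and checking entrywise (with the same even/odd shift distinction producing \eqref{CK:[A2b]}, and invariance under operation (iv)). Your additional framing of the argument as stability of each family plus reachability of every member is just an explicit articulation of what the paper leaves implicit in its ``tedious computation''.
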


\begin{proof}
The proof is analogous to that of Lemma \ref{L:[K]}, i.e., the statements above follow by tedious computations using Lemma~\ref{L:K-symm}. In particular, for $1\le k<N$,
\aln{
g(u)\,D^w \big((Z^\rho(\tfrac{\eta}{u})^w)\big)^k K(\pm u) (Z^\rho(\eta u))^k D &= K'(u) , \qq c_i = \eta^{\del_{i\le k}} d_i , 
}
where $K(u)$ is the matrix \eqref{CK:qOns}, \eqref{CK:A1} or \eqref{CK:A4} and $K'(u)$ is the matrix \eqref{CK:[qOns]}, \eqref{CK:[A1]} or \eqref{CK:[A4]}, respectively. 
The coefficients $c_i$ are given by $d_i \eta^{\del_{i\le k}}$, $d_i \eta^{\del_{i\le k}}$, $d_i^2 \eta^{2\del_{i\le k}}$ and $d_{N/2-i+1} d_{\bi} \eta^{\del_{N/2-i< k}}$, accordingly. 
In case of \eqref{CK:A2} we have instead
\[
g(u)\,D^w \big((Z^\rho(\tfrac{\eta}{u})^w)\big)^k K(\pm u) (Z^\rho(\eta u))^k D = 
\begin{cases} 
K'(u)  & \text{if $k$ is even,} \\
K''(u) & \text{if $k$ is odd,}
\end{cases}
\]
where $K(u)$ is \eqref{CK:A2}, $K'(u)$ is \eqref{CK:[A2a]} with $c_i = \eta^{2\del_{\overline{2i}\le k}} d_{\overline{2i}} d_{\overline{2i}+1}$ and $K''(u)$ is \eqref{CK:[A2b]} with $c_1 = \eta\,d_1 d_N$ and $c_i = \eta^{2\del_{\overline{2i}+2\le k}} d_{\overline{2i}+1} d_{\overline{2i}+2}$. 
Lastly, all matrices \eqrefs{CK:[qOns]}{CK:[A4]} are invariant under the transformation given by Lemma \ref{L:K-symm} (iv).
\end{proof}

\begin{proof}[Proof of Theorem \ref{T:CK}] 

($\Longrightarrow$) 
We start by showing that matrices \eqrefs{CK:qOns}{CK:A4} are solutions to \eqref{CtRE}. 
The matrices \eqrefs{CK:qOns}{CK:A2} in the form given by $\wt{K}(u)$, cf.\ \eqref{CK}, are known to be solutions of \eqref{tRE}.
For \eqref{CK:qOns} this was shown in \cite{Gan}. 
For \eqref{CK:A1} this is accounted for by the fact that R-matrices $R(\tfrac uv)$ and $R^{t_1}(\tfrac{1}{uv})$ commute. 
For \eqref{CK:A2} this was shown in \cite{MRS} (in the proof of Theorem 3.10). 
For \eqref{CK:A4} this was stated without proof in \cite{RV}. 
Hence we only need to show that \eqref{CK:A4} is indeed a solution of \eqref{CtRE}. 
By Lemmas \ref{L:CRE} and \ref{L:K-symm}  it is sufficient to show that the matrix
\[
\wt{K}(u) = \sum_{1\le i \le N/2} (u E_{i+N/2,i} + E_{i,i+N/2})
\]
is a solution of \eqref{tRE}. Let $T$ be the unique element of $\GL(\C^{N} \ot \C^{N})$ that sends the ordered basis 
\[
(e_1 \ot e_1 , \dots , e_1 \ot e_{N} , e_2 \ot e_1 , \dots , e_2 \ot e_{N} ,  \dots , e_{N} \ot e_1, \dots , e_{N} \ot e_{N} )
\] 
to the ordered basis 
\begin{align*}
& (e_1 \ot e_1 , \dots , e_1 \ot e_{N/2} , e_2 \ot e_1 , \dots , e_2 \ot e_{N/2},  \dots , e_{N} \ot e_1, \dots , e_{N} \ot e_{{N/2}},  \\
& \qq e_1 \ot e_{{N/2}+1} , \dots , e_1 \ot e_{N} , e_2 \ot e_{{N/2}+1} , \dots , e_2 \ot e_{N},  \dots , e_{N} \ot e_{{N/2}+1}, \dots , e_{N} \ot e_{N}).
\end{align*}
We conjugate the matrices $R(u)$, $R^{t_1}(u)$, $\wt{K}_1(u)$ and $\wt{K}_2(v)$ with the matrix $T$.  
This gives 
\gan{ 
\setlength{\arraycolsep}{2pt}
\renewcommand{\arraystretch}{0.9}
T R(\tfrac uv)\,T^{-1} = \left(\begin{array}{cccc} R(\tfrac uv) & \\ & (1-\tfrac uv)\,f_q(\tfrac uv)\,I & (q-q^{-1})\,\tfrac uv\,f_q(\tfrac uv)\,P\\[.25em] & (q-q^{-1})\,f_q(\tfrac uv)\,P & (1-\tfrac uv)\,f_q(\tfrac uv)\,I \\ &&&  R(u) \end{array}\right) , 
\\[.25em]
\setlength{\arraycolsep}{0pt}
\renewcommand{\arraystretch}{0.9}
TR^{t_1}(\tfrac1{uv})\,T^{-1} = \left(\begin{array}{cccc} R^{t_1}(\tfrac1{uv}) & & & (q-q^{-1})\, f_q(\tfrac1{uv})\,P^{t_1} \\ & (1-\tfrac1{uv})\,f_q(\tfrac1{uv})\,I \\ &  & (1-\tfrac1{uv})\,f_q(\tfrac1{uv})\,I \\ (q-q^{-1})\,\tfrac1{uv}\, f_q(\tfrac1{uv})\,P^{t_1} &&& R^{t_1}(\tfrac1{uv}) \end{array}\right) 
}
and
\gan{ 
\setlength{\arraycolsep}{3pt}
\renewcommand{\arraystretch}{0.9}
T \wt{K}_1(u)\, T^{-1} = \left(\begin{array}{cccc} & & I \\ & && I \\ \!u\,I \\ & \!u\,I  \end{array}\right), \qq
T \wt{K}_2(v)\, T^{-1} = \left(\begin{array}{cccc} & I \\ \!v\,I \\ &&& I \\ && \!v\,I  \end{array}\right) ,
}
where the operators inside the matrices are each acting on a copy of $\C^{N/2} \ot \C^{N/2}$. It remains to conjugate both sides of \eqref{tRE} with $T$ and use the expressions above together with the identities $P P^{t_1} = P^{t_1} P = P^{t_1}$ and $P\,R^{t_1}(u)\,P = R^{t_1}(u)$. Then a straightforward computation shows that both sides of \eqref{tRE} agree. Hence \eqref{CK:A4} is indeed a solution of~\eqref{CtRE}.

($\Longleftarrow$) 
Conversely, suppose that $K(u) = \sum_{ij} \key_{ij}(u)\,e_{ij}$ with $\key_{ij} \in \Rat$. We write the twisted reflection equation \eqref{CtRE} in the braided form,
\equ{ 
\check{R}(\tfrac{u}{v})\, K_2(u)\, \check{R}^{\vee}(uv) \,K_2(v) = K_2(v)\, \check{R}^{\vee}(uv)\, K_2(u)\, \check{R}(\tfrac{u}{v}) \label{CtRE1} ,
}
where $\check{R}^{\vee}(u):=P R^{\vee}(u)$.
Recall \eqref{RC(u)-aff} and \eqref{RCq}. Observe that
\gan{
\check{R}^{\vee}(uv) (e_k \ot e_l) = c_{kl}(uv)\, e_l \ot e_k + \del_{k\bl} \sum_{1\le r\le N} d_{kr}(uv)\, e_{\bar r} \ot e_r , \\
(e^*_k \ot e^*_l)\,\check{R}^{\vee}(uv) = c_{kl}(uv)\, e^*_l \ot e^*_k + \del_{k\bl} \sum_{1\le r\le N} d_{kr}(uv)\, e^*_{\bar r} \ot e^*_r , 
}
where 
\eqg{
c_{kl}(uv) = q^{\del_{k\tsp\bl}} f_q\big(\tfrac{\tq^{\,2}}{uv}\big) + q^{-\del_{k\tsp\bl}} f_{q^{-1}}\big(\tfrac{uv}{\tq^{\,2}}\big), 
\\
d_{kr}(uv) = (q-q^{-1})\,(-q)^{r-k} \,\Big( \del_{k<r}\, f_q\big(\tfrac{\tq^{\,2}}{uv}\big) - \del_{k>r}\, f_{q^{-1}}\big(\tfrac{uv}{\tq^{\,2}}\big) \Big) . \label{cd}
}
In particular, $c_{k\tsp\bk}(uv)=1$ and $d_{kk}(uv)=0$. Repeating the same steps as in the proof of Theorem~\ref{T:K} we find
\aln{
& (e^*_i \ot e^*_j)\, \check{R}(\tfrac{u}{v})\, K_2(u)\, \check{R}^{\vee}(uv)\, K_2(v) \,(e_k \ot e_l) 
\\
& \qq = \sum_{1\le r,s\le N} \Big( a_{ij}(\tfrac uv)\,\key_{ir}(u)\, e^*_j \ot e^*_r + b_{ij}(\tfrac uv)\, \key_{jr}(u)\, e^*_i \ot e^*_r \Big) \el[-.25em] & \hspace{2.75cm} \times \Big( c_{ks}(uv)\, \key_{sl}(v)\, e_s \ot e_k + \del_{k\bar s}\sum_{1\le t\le N} d_{kt}(uv)\, \key_{sl}(v)\, e_{\bar t} \ot e_t \Big) 
\\
& \qq = a_{ij}(\tfrac uv)\Big( d_{k\bj}(uv)\, \key_{i\tsp\bj}(u)\,  \key_{\bk l}(v) + c_{kj}(uv)\, \key_{ik}(u)\,\key_{jl}(v) \Big) \el
& \qq\qu  + b_{ij}(\tfrac uv)\Big( d_{k\bi}(uv)\, \key_{j\tsp\bi}(u)\, \key_{\bk l}(v) + c_{ki}(uv)\, \key_{jk}(u)\,  \key_{il}(v) \Big)
\intertext{and}
& (e^*_i \ot e^*_j)\, K_2(v)\, \check{R}^{\vee}(uv)\, K_2(u)\, \check{R}(\tfrac{u}{v}) \,(e_k \ot e_l) 
\\
& \qq= \sum_{1\le r,s\le N} \Big( \key_{jr}(v)\, c_{ir}(uv)\, e^*_r \ot e^*_i + \del_{i\bar r} \sum_{1\le t\le N} \key_{jr}(v)\, d_{it}(uv)\, e^*_{\bar t} \ot e^*_t \Big) \el[-.25em] & \hspace{2.75cm} \times \Big( \key_{sk}(u)\, a_{kl}(\tfrac uv)\, e_l \ot e_s + \key_{sl}(u)\,b_{kl}(\tfrac uv)\, e_k \ot e_s \Big) 
\\
& \qq = a_{kl}(\tfrac uv)\Big( d_{i\bl}(uv)\,\key_{j\tsp\bi}(v)\,\key_{\bl k}(u) + c_{il}(uv)\,\key_{jl}(v)\,\key_{ik}(u) \Big) \\
& \qq\qu + b_{kl}(\tfrac uv)\Big( d_{i\bk}(uv)\,\key_{j\tsp\bi}(v)\,\key_{\bk l}(u) + c_{ik}(uv)\,\key_{jk}(v)\,\key_{il}(u) \Big). 
}
Taking the difference of the expressions above we obtain
\eqa{
\label{CK1}
& \Big( a_{ij}(\tfrac uv)\,c_{kj}(uv) - a_{kl}(\tfrac uv)\,c_{il}(uv)\Big) \, \key_{ik}(u)\,\key_{jl}(v)\\
& + \Big( a_{ij}(\tfrac uv)\, d_{k\bj}(uv)\, \key_{i\tsp\bj}(u) + b_{ij}(\tfrac uv)\, d_{k\bi}(uv)\, \key_{j\tsp\bi}(u)\Big)\, \key_{\bk l}(v) \\
& - \Big( a_{kl}(\tfrac uv)\, d_{i\bl}(uv)\,\key_{\bl k}(u) + b_{kl}(\tfrac uv)\, d_{i\bk}(uv)\,\key_{\bk l}(u) \Big) \,\key_{j\tsp\bi}(v) \\
& + b_{ij}(\tfrac uv)\, c_{ki}(uv)\, \key_{jk}(u)\,  \key_{il}(v) - b_{kl}(\tfrac uv)\, c_{ik}(uv)\,\key_{jk}(v)\,\key_{il}(u) = 0.
}

Next, in four steps, we prove that any invertible solution $K(u)$ of \eqref{CtRE1} is equivalent to \eqref{CK:qOns}, \eqref{CK:A1}, \eqref{CK:A2} or \eqref{CK:A4}. For any $i\ne j$ we will write $a_{ij}(u)=a(u)$ and for any $i\ne \bj$ we will write $c_{ij}(u)=c(u)$.


{\noindent \it Step 1: Without loss of generality, for any $i$ we have
\equ{
\key_{i\tsp\bi}(u) = 1 \tx{or} \key_{i\tsp\bi}(u) = 0. \label{CK:S1:1}
}
For any $i,j$ such that $i<\bj$ we have $\key_{ij}(u)=\key_{\bj\tsp\bi}(u)=0$ or both are nonzero and
\equ{
\key_{\bj\tsp\bi}(u) = -(-q)^{i-\bj}\,\tq\,u^{-1}\,\frac{\tq\,c+q\,u}{\tq + q\,u\,c}\,\key_{ij}(u) \label{CK:S1:2}
}
with $c\in\C$. Moreover, when $\key_{i\tsp\bi}(u)=1$, we have
\equ{
\key_{ij}(u)=(1+q) \frac{c_{ij}}{q+\tq\,u^{-1}}, \qq \key_{\bj\tsp\bi}(u)= (-q)^{i-\bj}\,\tq\,u^{-1}\key_{ij}(u), \label{CK:S1:3}
}
or, if $i>\bj$ instead, 
\equ{
\key_{\bj\tsp\bi}(u) = (1+q) \frac{c_{\bj\tsp\bi}}{q+\tq\,u^{-1}}, \qq \key_{ij}(u)= (-q)^{\bj-i}\,\tq\,u^{-1}\key_{\bj\tsp\bi}(u), \label{CK:S1:4}
}
where $c_{ij},c_{\bj\tsp\bi}\in\C^\times$.
}


Setting $i=j$ and $k=l$ in \eqref{CK1} we obtain
\[
d_{k\bi}(u v)\, \key_{i\tsp\bi}(u)\,\key_{\bk k}(v) - d_{i\bk}(u v)\,\key_{i\tsp\bi}(v)\,\key_{\bk k}(u) = 0 ,
\]
which, by \eqref{cd}, is equivalent to $\key_{i\tsp\bi}(u) = c_{ik} \key_{k\bk}(u)$ for some $c_{ik} \in \C$. Then, by Lemma \ref{L:K-symm} (iii), we may assume \eqref{CK:S1:1}. 


Next, setting $\bi = k$ and $\bj = l$ in \eqref{CK1} we obtain
\[
\Big(a(\tfrac{u}{v})\,d_{\bi\tsp\bj}(u v)\,\key_{i\tsp\bj}(u) + b_{ij}(\tfrac{u}{v}) \,\key_{j\tsp\bi}(u) \Big) \key_{i\tsp\bj}(v) - \Big(a(\tfrac{u}{v})\,d_{ij}(u v)\,\key_{j\tsp\bi}(u) + b_{\bi\tsp\bj}(\tfrac{u}{v})\,\key_{i\tsp\bj}(u) \Big)\,\key_{j\tsp\bi}(v) = 0. 
\]
Let $\key_{i\tsp\bj}(u)=0$. Then the equality above gives $a(\tfrac uv)\,d_{ij}(uv)\,\key_{j\tsp\bi}(u)\,\key_{j\tsp\bi}(v)=0$. Hence $\key_{i\tsp\bj}(u)=0$ implies $\key_{j\tsp\bi}(u)=0$. Now suppose that both $\key_{i\tsp\bj}(u)$ and $\key_{j\tsp\bi}(u)$ are nonzero instead. Assuming further that $i<j$ the equality above gives
\aln{
& (-q)^{i+j} (q^2 u v-\tq^2) \Big(u\,\key_{j\tsp\bi}(u)\,\key_{i\tsp\bj}(v) - v\,\key_{i\tsp\bj}(u)\,\key_{j\tsp\bi}(v) \Big) \\ 
& \qq + (u-v)\Big( (-q)^{2 i+1} \tq^2\,\key_{i\tsp\bj}(u)\,\key_{i\tsp\bj}(v) - (-q)^{2 j+1} u v \,\key_{j\tsp\bi}(u)\,\key_{j\tsp\bi}(v)\Big) = 0 ,
}
which can be solved using separation of variables. In particular, upon subsituting $j\to \bj$, we obtain the required relation \eqref{CK:S1:2}.


Finally, setting $i=j=\bk$ in \eqref{CK1} and assuming $\key_{i\tsp\bi}(u)=1$ (by \eqref{CK:S1:1}), we find
\equ{
\big(a(\tfrac{u}{v})\,c(u v)-1\big) \,\key_{il}(v) + b_{\bi l}(\tfrac{u}{v})\,\key_{il}(u) + a(\tfrac{u}{v})\,d_{i\bl}(u v)\, \key_{\bl\bi}(u) = 0. \label{CK:S1:6}
}
Using \eqref{ab}, \eqref{cd} and setting $v=\tq\,q^{-1}$ we obtain
\[
(\del_{i<\bl}\,u + \del_{i>\bl}\,q\,\tq)\,\key_{\bl \bi}(u) - (-q)^{i-\bl} \,( \del_{i<\bl}\,q\,u + \del_{i>\bl}\,\tq)\,\key_{il}(u) = 0 
\]
which we rewrite as
\equ{
\key_{\bj\tsp\bi}(u) = (-q)^{i-\bj}\, (\del_{i<\bj}\, \tq \,u^{-1} + \del_{i>\bj}\, \tq^{-1} u )\, \key_{ij}(u) . \label{CK:S1:7}
}
(This corresponds to the $c=1$ case in \eqref{CK:S1:2}.) We now substitute \eqref{CK:S1:7} back to \eqref{CK:S1:6} giving
\aln{
& q (-q)^{\bl-i}  (u-v) (\tq^2\,\del_{i>\bl}+u v\,\del_{i<\bl})\, \key_{\bl\bi}(u)  \\ & \qq + (\tq^2-q^2 u v)\,(u\,\del_{i>\bl} + v\,\del_{i<\bl})\,\key_{il}(u) - u\,(\tq^2-q^2 v^2)\,\key_{il}(v) = 0.
}
Substituting $l\to j$ and separating variables we find \eqref{CK:S1:3} (when $i<\bj$) and \eqref{CK:S1:4} (when $i>\bj$), as required.


{\noindent \it Step 2: Suppose that $\key_{ij}(u)\ne 0$ for all $i,j$. Then $K$ is equivalent to \eqref{CK:qOns}.
}


By \eqref{CK:S1:1} we can assume that $\key_{i\tsp\bi}(u)=1$ for all $i$. 
Then \eqref{CK:S1:3} allows us to find all the remaining matrix entries up to the unknown constants $c_{ij}$ with $i+j\le N$. 
Set $i=l$ and $j=\bk$ in \eqref{CK1}. Assuming $i\ne \bi$ we obtain
\aln{
& a(\tfrac{u}{v}) \Big((c(u v)-1)\, \key_{ji}(v)\, \key_{i \bj}(u)+d_{i\,\bi}(u v)\, \key_{\bi\tsp\bj}(u)\, \key_{j\,\bi}(v) \Big) + b_{\bj i}(\tfrac{u}{v})\, d_{ij}(u v) \,\key_{ji}(u) \,\key_{j\bi}(v)
\\
& \qu + c(u v) \Big( b_{\bj i}(\tfrac{u}{v}) \,\key_{ii}(u)\,\key_{j\tsp\bj}(v) - b_{ij}(\tfrac{u}{v}) \, \key_{ii}(v)\,\key_{j\tsp\bj}(u) \Big) - b_{ij}(\tfrac{u}{v})\, d_{\bj\bi}(u v)\, \key_{ji}(v)\, \key_{j\bi}(u) = 0.
}
Assuming further that $i<j$ and $i+j\le N$ and using \eqref{ab} and \eqref{cd} we find
\aln{
\frac{(-q)^{-2 i}}{(u-q^2 v) (\tq^2-q^2 u v)} & \bigg( (-q)^{2 i+1} (q^2-1) (\tq^2-u v) (u\,\key_{ii}(v) - v \,\key_{ii}(u)) \\
& \qu +  (-q)^{i+j} (q^2-1)^2\, u v \,( u\,\key_{ji}(v)\,\key_{j\,\bi}(u) - v\,\key_{ji}(u)\,\key_{j\,\bi}(v) ) \\
& \qu + (q-1)(u-v) \Big( (-q)^{2 i+1} (\tq^2+q\,u v)\,\key_{ji}(v)\,\key_{i\,\bj}(u) \\ & \qq\qq\qq\qq\qq - \tq^2 q^2 (q+1)\,u v\,\key_{\bi\,\bj}(u)\,\key_{j\,\bi}(v)\Big) \bigg) = 0 .
}
We now need to substitute \eqref{CK:S1:3}. 
The resulting equality is 
\[
\frac{q^2 (q+1) (q^2-1) u v (u-v) (u v-\tq^2) (c_{ii} - c_{i\bj} \,c_{ji} )}{(q u+\tq) (q v+\tq) (u-q^2 v) (\tq^2-q^2 u v)} = 0.
\]
This is only true if $c_{ii} = c_{i\tsp\bj}\,c_{ji}$. Now set $i=\bl = k$ in \eqref{CK1}. 
Assuming $i\ne \bi$ we obtain
\aln{
& a(\tfrac{u}{v}) \Big( (c(u v)-1)\, \key_{ii}(u)\, \key_{j\bi}(v) + d_{i\tsp\bj}(u v)\,\key_{\bi\tsp\bi}(v)\, \key_{i\tsp\bj}(u) \Big) + b_{ij}(\tfrac{u}{v})\, d_{i\tsp\bi}(u v)\,\key_{\bi\tsp\bi}(v) \,\key_{j\bi}(u) \\
& \qu + c(u v) \Big(b_{ij}(\tfrac{u}{v})\,\key_{ji}(u)\, \key_{i\tsp\bi}(v) - b_{i\tsp\bi}(\tfrac{u}{v})\,\key_{ji}(v)\, \key_{i\tsp\bi}(u) \Big) - b_{i\tsp\bi}(\tfrac{u}{v})\, d_{i\tsp\bi}(u v)\,\key_{\bi\tsp\bi}(u) \, \key_{j\bi}(v) = 0.
}
Assuming further that $i<j$ and $i+j\le N$ and using \eqref{ab} and \eqref{cd} we find
\aln{
\frac{1}{(u-q^2 v) (\tq^2-q^2 u v)} & \bigg( q\,(q^2-1)\,u\,(\tq^2-u v)\,(\key_{ji}(v) - \key_{ji}(u)) \\ & \qu + (-q)^{\overline{2i}} (q^2-1)^2 u^2 v \,(\key_{\bi\tsp\bi}(v)\, \key_{j\tsp\bi}(u) - \key_{\bi\tsp\bi}(u)\,\key_{j\tsp\bi}(v)) \\ & \qu + q\,(q-1) (u-v) \Big((\tq^2+q u v)\,\key_{ii}(u)\,\key_{j\tsp\bi}(v) \\ 
& \qq\qq\qq\qq\qq - (-q)^{\bi-j} (q+1)\,u v\,\key_{\bi\tsp\bi}(v)\,\key_{i\tsp\bj}(u) \Big) \bigg) = 0 .
}
Substituting \eqref{CK:S1:3} yields
\[
\frac{\tq\,q\,(q-1)\,(q+1)^2 u\,(u-v) (\tq^2-u v) ((-q)^{i-j} c_{ii}\,c_{i\tsp\bj}-c_{ji})}{(\tq+q u) (\tq+q v) (u-q^2 v) (\tq^2-q^2 u v)} = 0
\]
and so $c_{ji} = (-q)^{i-j} c_{ii}\, c_{i\tsp\bj}$. Upon combing with the identity $c_{ii} = c_{i\tsp\bj}\,c_{ji}$ we find $c_{ij}^2 = (-q)^{\bj-i}$.
We will use Lemma \ref{L:K-symm} (iii) to narrow down the choice of the sign for $c_{ij}$. 
Conjugation by the symmetry $Z =\sum_{1\le i \le N} z_i\,E_{ii}$ maps each $\key_{ij}(u)$ to $z_{\bi} z_j \key_{ij}(u)$. 
Assume that $z_i \in \{ \pm 1 \}$. 
Then all the anti-diagonal entries $\key_{i\,\bi}(u)$ are mapped to themselves, hence the assumption that $\key_{i\,\bi}(u)=1$ remains valid.
Choose $z_{N}=1$ so that each $\key_{1i}(u)$ is mapped to $z_i\key_{1i}(u)$. 
This allows us to choose the square root $c_{1j}=(-q)^{\frac{\bj-1}2}$ for $1\le j<N$. 

Next, set $k=l$ in \eqref{CK1}. 
Assuming $i\ne k \ne j$ and $i\ne j$ we find
\aln{
& c(u v) \Big((a(\tfrac{u}{v})-1)\,\key_{ik}(u)\, \key_{jk}(v) + b_{ij}(\tfrac{u}{v})\,\key_{ik}(v)\,\key_{jk}(u) \Big) \\
& \qq + a(\tfrac{u}{v})\,d_{k\bj}(u v)\,\key_{i\bj}(u) + b_{ij}(\tfrac{u}{v})\,d_{k\bi}(u v)\,\key_{j\bi}(u) - d_{i\bk}(u v)\,\key_{j\bi}(v) = 0. 
}
Assume further that $i<k<j$, $i+k\le N$ and $i+j>N+1$, so that $2\le k< N$. Then, using \eqref{ab} and \eqref{cd}, the equality above becomes 
\aln{
&\frac{1}{(u-q^2v)(\tq^2-q^2 u v)} \Bigg(  \frac{(u+q v)\,\key_{ik}(u)\,\key_{jk}(v) - (q+1)\,u\, \key_{ik}(v)\, \key_{jk}(u)}{q (q-1)(\tq^2-u v)}  \\
& \qq + \frac{ (-q)^{i-j+1}\tq^2 (u-v)\,\key_{i\bj}(u) + u v \left((u-q^2 v)\,\key_{j\bi}(v) + (q^2-1)\,u\,\key_{j\bi}(u)\right) }{(-q)^{i-\bk}(q^2-1)}\Bigg) = 0 .
}
Upon substituting \eqref{CK:S1:3} we obtain the relation $c_{i\bj} = c_{ik}\,c_{\bk\bj}$.
Setting $i=1$ and using the expression for $c_{1j}$ we find $c_{\bk\bj}=(-q)^{\frac{j-\bk}{2}}$, or equivalently $c_{ji}=(-q)^{\frac{\bi-j}{2}}$ for $1\le i<j<N$. Now set $j=N$ in $c_{i\bj} = c_{ik}\,c_{\bk\bj}$ and use the expression for $c_{j1}$. This gives $c_{ik}=(-q)^{\frac{\bk-i}{2}}$. It remains to find the expressions for coefficients $c_{ii} = c_{i\bj}\,c_{ji}= (-q)^{\frac{\bi-i}2}$. By combining the above results we find that $c_{ij}=(-q)^{\frac{\bj-i}2}$ for $i+j\le N$. Substituting this back to \eqref{CK:S1:3} and using \eqref{CK:S1:1}, namely $\key_{i\,\bi}(u)=1$, we obtain the solution \eqref{CK:qOns}, as required.


{\noindent \it Step 3: Suppose there exists $i,j$ such that $i\ne \bj$, $\key_{ij}(u)= 0$ and $\key_{i\tsp\bi}(u)=1$ or $\key_{\bj\tsp j}(u)=1$. Then $K(u)$ is given by \eqref{CK:A1}.
}


We know from Step 1 that $\key_{ij}(u)= 0$ implies $\key_{\bj\tsp\bi}(u)= 0$. With this condition in mind, i.e., when $\key_{i\tsp\bj}(u)=\key_{j\tsp\bi}(u)=0$ or $\key_{\bl k}(u)=\key_{\bk l}(u)=0$, \eqref{CK1} gives
\aln{
& b_{ij}(\tfrac uv)\,\key_{il}(v)\,\key_{jk}(u) - b_{kl}(\tfrac uv)\,\key_{il}(u)\,\key_{jk}(v) \\
& \qq = (u\,\del_{k<l}+v\,\del_{k>l})\,\key_{il}(u)\,\key_{jk}(v) - (u\,\del_{i<j} + v\,\del_{i>j})\,\key_{il}(v)\,\key_{jk}(u) = 0.
}
Note that the equality above also holds if $l=\bi$ or $k=\bj$. (The case when $l=\bi$ and $k=\bj$ holds trivially by \eqref{CK:S1:1}.) We first focus on the case when $\key_{i\tsp\bj}(u)=\key_{j\tsp\bi}(u)=0$ and $i<j$. By separating the variables we find
\equ{
\frac{\key_{il}(u)}{\key_{jk}(u)} \in (\del_{k<l} + u\,\del_{k>l})\,\C . \label{CK:S3:1a}
}
which can be represented by 
\[
\btp
\draw (0,0) -- (10,0) -- (10,10) -- (0,10) -- (0,0);
\draw[dotted] (0,8) node[left=1pt]{$i$} -- (10,8) ;
\draw[dotted] (8,10) node[above=1pt]{$\bi$} -- (8,0);
\draw[dotted] (0,2) node[left=1pt]{$j$} -- (10,2) ;
\draw[dotted] (2,10) node[above=1pt]{$\bj$} -- (2,0);
\draw[dotted] (3.8,10) node[above=1pt]{$k$} -- (3.8,0);
\draw[dotted] (6.2,10) node[above=1pt]{$l$} -- (6.2,0) ;
\filldraw[fill=white] (2,8) circle (.25);
\filldraw[fill=white] (8,2) circle (.25);
\filldraw[fill=black] (6.2,8) circle (.25);
\filldraw[fill=black] (3.8,2) circle (.25);
\etp
\qq\qq
\btp
\draw (0,0) -- (10,0) -- (10,10) -- (0,10) -- (0,0);
\draw[dotted] (0,8) node[left=1pt]{$i$} -- (10,8) ;
\draw[dotted] (8,10) node[above=1pt]{$\bi$} -- (8,0);
\draw[dotted] (0,2) node[left=1pt]{$j$} -- (10,2) ;
\draw[dotted] (2,10) node[above=1pt]{$\bj$} -- (2,0);
\draw[dotted] (3.8,10) node[above=1pt]{$l$} -- (3.8,0);
\draw[dotted] (6.2,10) node[above=1pt]{$k$} -- (6.2,0) ;
\filldraw[fill=white] (2,8) circle (.25);
\filldraw[fill=white] (8,2) circle (.25);
\filldraw[fill=black] (6.2,2) circle (.25);
\filldraw[fill=black] (3.8,8) circle (.25);
\etp
\]
Here the empty circles represent matrix entries that are equal to zero, i.e., $\key_{i\bj}(u)=\key_{j\bi}(u)=0$. 
The filled circles represent the nonzero entries in question, $\key_{il}(u)$ and $\key_{jk}(u)$. 
In the cases $l=\bi$ or $k=\bj$ the corresponding vertical dotted lines should be identified.
 
In the case when $\key_{\bl k}(u)=\key_{\bk l}(u)=0$ and $k<l$ we find instead
\equ{
\frac{\key_{il}(u)}{\key_{jk}(u)} \in (\del_{i<j} + u^{-1}\del_{i>j} )\,\C, \label{CK:S3:1b}
}
which we represent by
\[
\btp
\draw (0,0) -- (10,0) -- (10,10) -- (0,10) -- (0,0);
\draw[dotted] (0,8) node[left=1pt]{$\bl$} -- (10,8) ;
\draw[dotted] (8,10) node[above=1pt]{$l$} -- (8,0);
\draw[dotted] (0,2) node[left=1pt]{$\bk$} -- (10,2) ;
\draw[dotted] (2,10) node[above=1pt]{$k$} -- (2,0);
\draw[dotted] (0,6.2) node[left=1pt]{$i$} -- (10,6.2) ;
\draw[dotted] (0,3.8) node[left=1pt]{$j$} -- (10,3.8);
\filldraw[fill=white] (2,8) circle (.25);
\filldraw[fill=white] (8,2) circle (.25);
\filldraw[fill=black] (8,6.2) circle (.25);
\filldraw[fill=black] (2,3.8) circle (.25);
\etp
\qq\qq
\btp
\draw (0,0) -- (10,0) -- (10,10) -- (0,10) -- (0,0);
\draw[dotted] (0,8) node[left=1pt]{$\bl$} -- (10,8) ;
\draw[dotted] (8,10) node[above=1pt]{$l$} -- (8,0);
\draw[dotted] (0,2) node[left=1pt]{$\bk$} -- (10,2) ;
\draw[dotted] (2,10) node[above=1pt]{$k$} -- (2,0);
\draw[dotted] (0,6.2) node[left=1pt]{$j$} -- (10,6.2) ;
\draw[dotted] (0,3.8) node[left=1pt]{$i$} -- (10,3.8);
\filldraw[fill=white] (2,8) circle (.25);
\filldraw[fill=white] (8,2) circle (.25);
\filldraw[fill=black] (2,6.2) circle (.25);
\filldraw[fill=black] (8,3.8) circle (.25);
\etp
\]
Similarly as before, when $\bl=i$ or $\bk=j$ the corresponding horizontal dotted lines should be identified.


Next, setting $i=j$ and $\key_{ik}(u)=0$ in \eqref{CK1} and assuming $\key_{i\tsp\bi}(u)=1$ we obtain
\[
a(\tfrac{u}{v})\,d_{i\bl}(u v)\,\key_{\bl k}(u) + b_{kl}(\tfrac{u}{v})\,d_{i\bk}(u v)\,\key_{\bk l}(u) - d_{k\bi}(u v)\,\key_{\bk l}(v) = 0.
\]
Assume further that $i<\bk$. Then the equality above gives
\ali{
& (-q)^{k+1} (u-v) \big(\tq^2 \del_{i>\bl} + u v\, \del_{i<\bl}\big) \,\key_{\bl k}(u) \el 
& \qq + (-q)^l \,u v \Big((q^2-1)\,(u\,\del_{k<l} + v\,\del_{k>l})\,\key_{\bk l}(u) + (u-q^2 v)\,\key_{\bk l}(v)\Big) = 0 . \label{CK:S3:2}
}
There are three cases we need to consider: $l<k$, $k<l<\bi$ and $\bi<l$. They are shown diagrammatically below
\[
\btp
\draw (0,0) -- (10,0) -- (10,10) -- (0,10) -- (0,0);
\draw[dotted] (0,7.7) node[left=1pt]{$i$} -- (10,7.7) ;
\draw[dotted] (7.7,10) node[above=1pt]{$\bi$} -- (7.7,0);
\filldraw[fill=black] (7.7,7.7) circle (.25);
\draw[dotted] (0,3) node[left=1pt]{$\bk$} -- (10,3) ;
\draw[dotted] (3,10) node[above=1pt]{$k$} -- (3,0);
\filldraw[fill=white] (3,7.7) circle (.25);
\filldraw[fill=white] (7.7,3) circle (.25);
\draw[dotted] (0,1) node[left=1pt]{$\bl$} -- (10,1) ;
\draw[dotted] (1,10) node[above=1pt]{$l$} -- (1,0);
\filldraw[fill=black] (1,3) circle (.25);
\filldraw[fill=black] (3,1) circle (.25);
\etp
\qq\qq
\btp
\draw (0,0) -- (10,0) -- (10,10) -- (0,10) -- (0,0);
\draw[dotted] (0,7.7) node[left=1pt]{$i$} -- (10,7.7) ;
\draw[dotted] (7.7,10) node[above=1pt]{$\bi$} -- (7.7,0);
\filldraw[fill=black] (7.7,7.7) circle (.25);
\draw[dotted] (0,3) node[left=1pt]{$\bk$} -- (10,3) ;
\draw[dotted] (3,10) node[above=1pt]{$k$} -- (3,0);
\filldraw[fill=white] (3,7.7) circle (.25);
\filldraw[fill=white] (7.7,3) circle (.25);
\draw[dotted] (0,5) node[left=1pt]{$\bl$} -- (10,5) ;
\draw[dotted] (5,10) node[above=1pt]{$l$} -- (5,0);
\filldraw[fill=black] (5,3) circle (.25);
\filldraw[fill=black] (3,5) circle (.25);
\etp
\qq\qq
\btp
\draw (0,0) -- (10,0) -- (10,10) -- (0,10) -- (0,0);
\draw[dotted] (0,7.7) node[left=1pt]{$i$} -- (10,7.7) ;
\draw[dotted] (7.7,10) node[above=1pt]{$\bi$} -- (7.7,0);
\filldraw[fill=black] (7.7,7.7) circle (.25);
\draw[dotted] (0,3) node[left=1pt]{$\bk$} -- (10,3) ;
\draw[dotted] (3,10) node[above=1pt]{$k$} -- (3,0);
\filldraw[fill=white] (3,7.7) circle (.25);
\filldraw[fill=white] (7.7,3) circle (.25);
\draw[dotted] (0,9) node[left=1pt]{$\bl$} -- (10,9) ;
\draw[dotted] (9,10) node[above=1pt]{$l$} -- (9,0);
\filldraw[fill=black] (9,3) circle (.25);
\filldraw[fill=black] (3,9) circle (.25);
\etp
\]
Relation \eqref{CK:S3:1a} with $l=\bi$ implies that in the first two cases we have $\key_{i\tsp\bi}(u)/\key_{\bk l}(u)\in\C$. Since $\key_{i\tsp\bi}(u)=1$, it follows that $\key_{\bk l}(u)\in\C$. In a similar way, relation \eqref{CK:S3:1b} with $l=\bi$ implies that $\key_{i\tsp\bi}(u)/\key_{\bl k}(u)\in\C$ and so $\key_{\bl k}(u)\in\C$. In the third case \eqref{CK:S3:1a} implies $\key_{\bk l}(u)\in\C u^{-1}$, while \eqref{CK:S3:1b} gives $\key_{\bl k}(u)\in\C u$. Applying these arguments to \eqref{CK:S3:2} we find 
\[
\key_{\bk l}(u) = \Big( {-}(-q)^{k-l+1}\del_{l<k} + (-q)^{k-l-1} \del_{k<l<\bi} + (-q)^{k+\bl} u^2\del_{\bi<l} \Big)\key_{\bl k}(u) .
\]
On the other hand, \eqref{CK:S1:2} gives
\[
\key_{\bk l}(u) = \Big( (-q)^{k-l-1}\del_{l<k} + (-q)^{k-l+1} \del_{k<l<\bi} + (-q)^{k+\bl-2} u^2\del_{\bi<l} \Big)\key_{\bl k}(u) 
\]
yielding a contradiction. Assuming (above \eqref{CK:S3:2}) that $i>\bk$ instead and repeating the same steps as before we obtain a similar contradiction. Hence, if $\key_{ik}=0$ and $\key_{i\tsp\bi}(u)=1$, then the only nonzero entry in $\bk$-th line and $k$-th column of $K(u)$ is $\key_{\bk k}(u)=1$. We indicate this result by
\[
\btp
\draw[line width=2,gray!40] (0,3) node[left=1pt]{\color{black}$\bk$} -- (10,3) ;
\draw[line width=2,gray!40] (3,10) node[above=1pt]{\color{black}$k$} -- (3,0);
\draw (0,0) -- (10,0) -- (10,10) -- (0,10) -- (0,0);
\draw[dotted] (0,7.7) node[left=1pt]{$i$} -- (10,7.7) ;
\draw[dotted] (7.7,10) node[above=1pt]{$\bi$} -- (7.7,0);
\filldraw[fill=black] (7.7,7.7) circle (.25);
\filldraw[fill=black] (3,3) circle (.25);
\etp
\qq\qq
\btp
\draw[line width=2,gray!40] (0,7.7) node[left=1pt]{\color{black}$\bk$} -- (10,7.7) ;
\draw[line width=2,gray!40] (7.7,10) node[above=1pt]{\color{black}$k$} -- (7.7,0);
\draw (0,0) -- (10,0) -- (10,10) -- (0,10) -- (0,0);
\draw[dotted] (0,3) node[left=1pt]{$i$} -- (10,3) ;
\draw[dotted] (3,10) node[above=1pt]{$\bi$} -- (3,0);
\filldraw[fill=black] (3,3) circle (.25);
\filldraw[fill=black] (7.7,7.7) circle (.25);
\etp
\]
when $i<\bk$ and $i>\bk$, respectively. Here the grey lines indicate that the matrix entries in that line or column are all equal to zero except $\key_{\bk k}(u)=1$. By combining both cases and repeating the same arguments line-by-line (and consequently column-by-column) and using the fact that each line and each column of $K(u)$ must have at least one nonzero element we conclulde that, if there exists $i,j$ such that $i\ne \bj$, $\key_{ij}(u)=0$ and $\key_{i\tsp\bi}(u)=1$ or $\key_{\bi\tsp i}(u)=1$, then $K(u)$ is the constant antidiagonal matrix \eqref{CK:A1}.


{\noindent \it Step 4: Suppose there exists $i$ such that $\key_{i\tsp\bi}(u)=0$. Then $K(u)$ is given by \eqref{CK:A2} or \eqref{CK:A4}.
}


Set $i=j$ in \eqref{CK1} and assume that $\key_{i\tsp\bi}(u)=0$. This gives
\[
\left(a(\tfrac{u}{v})-1\right) \key_{ik}(u)\,\key_{il}(v) + b_{kl}(\tfrac{u}{v})\, \key_{ik}(v)\, \key_{il}(u) = 0.
\] 
Upon setting $v=-q^{-1}u$ this yields $\key_{ik}(-q^{-1}u)\,\key_{il}(u)=0$. Now set $k=l$ in \eqref{CK1} and assume that $\key_{\bk k}(u)=0$. We now have
\[
\left(a(\tfrac{u}{v})-1\right) \key_{ik}(u)\,\key_{jk}(v) + b_{ij}(\tfrac{u}{v})\,\key_{ik}(v)\,\key_{jk}(u) = 0.
\]
Upon setting $v=-q^{-1}u$ this yields $\key_{ik}(-q^{-1}u)\,\key_{jk}(u)=0$. Therefore, if $\key_{i\tsp\bi}(u)=0$, then there exists only one nonzero entry in the $i$-th line and in the $\bi$-th column of $K(u)$. Arguments used in Step~3 further imply that $\key_{i\tsp\bi}(u)=0$ for all $i$. Hence $K(u)$ is a generalized permutation matrix with nonzero entries distributed symmetrically with respect to the main antidiagonal (this follows from Step 1) and $N$ is an even positive integer greater than 2.  


We will use \eqref{CK:S3:1a}, \eqref{CK:S3:1b} and \eqref{CK:S1:2} to determine the distribution of the nonzero entries of $K(u)$. Let $i,j,k,l$ be all different and let $\key_{i\tsp\bj}(u)$, $\key_{j\tsp\bi}(u)$, $\key_{k\bl}(u)$ and $\key_{l\bk}(u)$ be all nonzero. According to \eqref{CK:S3:1a} and \eqref{CK:S3:1b} there can be three inequivalent configuration, namely when $i<j<k<l$, $i<k<j<l$ and $i<k<l<j$. We represent these configurations as follows:
\[
\btp
\fill[gray!10] (7,8.5) rectangle (8.5,7);
\fill[gray!10] (1.5,3) rectangle (3,1.5);
\draw[dotted] (0,8.5) node[left=1pt]{\color{black}$i$} -- (10,8.5) ;
\draw[dotted] (8.5,10) node[above=1pt]{\color{black}$\bi$} -- (8.5,0);
\draw[dotted] (0,7) node[left=1pt]{\color{black}$j$} -- (10,7) ;
\draw[dotted] (7,10) node[above=1pt]{\color{black}$\bj$} -- (7,0);
\draw[dotted] (0,3) node[left=1pt]{\color{black}$k$} -- (10,3);
\draw[dotted] (3,10) node[above=1pt]{\color{black}$\bk$} -- (3,0);
\draw[dotted] (0,1.5) node[left=1pt]{\color{black}$l$} -- (10,1.5);
\draw[dotted] (1.5,10) node[above=1pt]{\color{black}$\bl$} -- (1.5,0);
\draw (0,0) -- (10,0) -- (10,10) -- (0,10) -- (0,0);
\filldraw[fill=black] (7,8.5) circle (.25);
\filldraw[fill=black] (8.5,7) circle (.25);
\filldraw[fill=black] (1.5,3) circle (.25);
\filldraw[fill=black] (3,1.5) circle (.25);
\etp
\qq\qq
\btp
\fill[gray!10] (4.3,8.5) rectangle (8.5,4.3);
\fill[gray!10] (1.5,5.7) rectangle (5.7,1.5);
\draw[dotted] (0,8.5) node[left=1pt]{\color{black}$i$} -- (10,8.5) ;
\draw[dotted] (8.5,10) node[above=1pt]{\color{black}$\bi$} -- (8.5,0);
\draw[dotted] (0,4.3) node[left=1pt]{\color{black}$j$} -- (10,4.3) ;
\draw[dotted] (4.3,10) node[above=1pt]{\color{black}$\bj$} -- (4.3,0);
\draw[dotted] (0,5.7) node[left=1pt]{\color{black}$k$} -- (10,5.7);
\draw[dotted] (5.7,10) node[above=1pt]{\color{black}$\bk$} -- (5.7,0);
\draw[dotted] (0,1.5) node[left=1pt]{\color{black}$l$} -- (10,1.5);
\draw[dotted] (1.5,10) node[above=1pt]{\color{black}$\bl$} -- (1.5,0);
\draw (0,0) -- (10,0) -- (10,10) -- (0,10) -- (0,0);
\filldraw[fill=black] (4.3,8.5) circle (.25);
\filldraw[fill=black] (8.5,4.3) circle (.25);
\filldraw[fill=black] (1.5,5.7) circle (.25);
\filldraw[fill=black] (5.7,1.5) circle (.25);
\etp
\qq\qq
\btp
\fill[gray!10] (1.5,8.5) rectangle (8.5,1.5);
\fill[gray!25] (6,3) rectangle (3,6);
\draw[dotted] (0,8.5) node[left=1pt]{\color{black}$i$} -- (10,8.5) ;
\draw[dotted] (8.5,10) node[above=1pt]{\color{black}$\bi$} -- (8.5,0);
\draw[dotted] (0,6) node[left=1pt]{\color{black}$k$} -- (10,6) ;
\draw[dotted] (6,10) node[above=1pt]{\color{black}$\bk$} -- (6,0);
\draw[dotted] (0,3) node[left=1pt]{\color{black}$l$} -- (10,3);
\draw[dotted] (3,10) node[above=1pt]{\color{black}$\bl$} -- (3,0);
\draw[dotted] (0,1.5) node[left=1pt]{\color{black}$j$} -- (10,1.5);
\draw[dotted] (1.5,10) node[above=1pt]{\color{black}$\bj$} -- (1.5,0);
\draw (0,0) -- (10,0) -- (10,10) -- (0,10) -- (0,0);
\filldraw[fill=black] (1.5,8.5) circle (.25);
\filldraw[fill=black] (8.5,1.5) circle (.25);
\filldraw[fill=black] (3,6) circle (.25);
\filldraw[fill=black] (6,3) circle (.25);
\etp
\]
Note that the only nonzero matrix entries in each dotted line and each dotted column are the ones indicated by the filled circles. The grey shadings are used to emphasize the differences between the configurations.

We start by focusing on the first case, when $i<j<k<l$. Relation \eqref{CK:S3:1a} implies that $\key_{i\tsp\bj}(u)/\key_{l\bk}(u)\in\C$, while \eqref{CK:S3:1b} gives $\key_{j\tsp\bi}(u)/\key_{k\bl}(u)\in\C$. By requiring these to be compatible with \eqref{CK:S1:2} we deduce that 
\equ{
\key_{j\tsp\bi}(u)=(-q)^{i-j+1}\key_{i\tsp\bj}(u), \qq \key_{l\bk}(u)=(-q)^{k-l+1}\key_{k\bl}(u).  \label{CK:S4:1a}
} 
Hence, without loss of generality, we may assume that 
\equ{
\key_{i\tsp\bj}(u), \key_{j\tsp\bi}(u), \key_{k\bl}(u), \key_{l\bk}(u) \in\C. \label{CK:S4:1}
}
In the second case, when $i<k<j<l$, by similar arguments as above we deduce that, without loss of generality,
\equ{
\key_{i\tsp\bj}(u), \key_{k\bl}(u) \in \C u, \qu \key_{j\tsp\bi}(u), \key_{l\bk}(u) \in\C, \label{CK:S4:2}
}
and in particular, 
\equ{
\key_{j\tsp\bi}(u)=\pm(-q)^{i-j}\tq u^{-1}\key_{i\tsp\bj}(u) , \qq
\key_{l\bk}(u)=\pm(-q)^{k-l}\tq u^{-1}\key_{k\bl}(u). \label{CK:S4:2a}
}
Finally, in the third case, when $i<k<l<j$, we find that, without loss of generality,
\equ{
\key_{i\tsp\bj}(u) \in \C u, \qu \key_{k\bl}(u) , \key_{l\bk}(u) \in\C, \qu \key_{j\tsp\bi}(u) \in \C u^{-1},  \label{CK:S4:3}
}
and in particular, 
\equ{
\key_{j\tsp\bi}(u)=(-q)^{i-j-1}\tq^2 u^{-2}\key_{i\tsp\bj}(u) , \qq
\key_{l\bk}(u)=(-q)^{k-l+1}\key_{k\bl}(u) . \label{CK:S4:3a}
}

If $N=4$ then the relations above are enough determine $K$.
If $N\ge 6$ we need additional arguments.
We return to the first case, when $i<j<k<l$. Let $r,s$ be such that $\key_{r\bar s}(u)$ and $\key_{s\tsp\bar r}(u)$ are also nonzero. Then \eqref{CK:S3:1a}, \eqref{CK:S3:1b} and \eqref{CK:S1:2} imply that there are two inequivalent configurations, when $i<j<r<s<k<l$ with $\key_{r\bar s}(u),\key_{s\tsp\bar r}(u)\in\C$, and when $r<i<j<k<l<s$ with $\key_{r\bar s}(u)\in \C u$ and $\key_{s\tsp\bar r}(u)\in \C u^{-1}$. These cases correspond to the following two diagrams, respectively,
\[
\btp
\fill[gray!10] (9,7.5) rectangle (7.5,9);
\fill[gray!10] (4.5,6) rectangle (6,4.5);
\fill[gray!10] (1.5,3) rectangle (3,1.5);
\draw[dotted] (0,9) node[left=1pt]{\color{black}$i$} -- (10.5,9) ;
\draw[dotted] (9,10.5) node[above=1pt]{\color{black}$\bi$} -- (9,0);
\draw[dotted] (0,7.5) node[left=1pt]{\color{black}$j$} -- (10.5,7.5) ;
\draw[dotted] (7.5,10.5) node[above=1pt]{\color{black}$\bj$} -- (7.5,0);
\draw[dotted] (0,6) node[left=1pt]{\color{black}$r$} -- (10.5,6);
\draw[dotted] (6,10.5) node[above=1pt]{\color{black}$\bar r$} -- (6,0);
\draw[dotted] (0,4.5) node[left=1pt]{\color{black}$s$} -- (10.5,4.5);
\draw[dotted] (4.5,10.5) node[above=1pt]{\color{black}$\bar s$} -- (4.5,0);
\draw[dotted] (0,3) node[left=1pt]{\color{black}$k$} -- (10.5,3);
\draw[dotted] (3,10.5) node[above=1pt]{\color{black}$\bk$} -- (3,0);
\draw[dotted] (0,1.5) node[left=1pt]{\color{black}$l$} -- (10.5,1.5);
\draw[dotted] (1.5,10.5) node[above=1pt]{\color{black}$\bl$} -- (1.5,0);
\draw (0,0) -- (10.5,0) -- (10.5,10.5) -- (0,10.5) -- (0,0);
\filldraw[fill=black] (7.5,9) circle (.25);
\filldraw[fill=black] (9,7.5) circle (.25);
\filldraw[fill=black] (4.5,6) circle (.25);
\filldraw[fill=black] (6,4.5) circle (.25);
\filldraw[fill=black] (1.5,3) circle (.25);
\filldraw[fill=black] (3,1.5) circle (.25);
\etp
\qq\qq
\btp
\fill[gray!10] (9,1.5) rectangle (1.5,9);
\fill[gray!25] (7.5,6) rectangle (6,7.5);
\fill[gray!25] (4.5,3) rectangle (3,4.5);
\draw[dotted] (0,7.5) node[left=1pt]{\color{black}$i$} -- (10.5,7.5) ;
\draw[dotted] (7.5,10.5) node[above=1pt]{\color{black}$\bi$} -- (7.5,0);
\draw[dotted] (0,6) node[left=1pt]{\color{black}$j$} -- (10.5,6) ;
\draw[dotted] (6,10.5) node[above=1pt]{\color{black}$\bj$} -- (6,0);
\draw[dotted] (0,9) node[left=1pt]{\color{black}$r$} -- (10.5,9);
\draw[dotted] (9,10.5) node[above=1pt]{\color{black}$\bar r$} -- (9,0);
\draw[dotted] (0,1.5) node[left=1pt]{\color{black}$s$} -- (10.5,1.5);
\draw[dotted] (1.5,10.5) node[above=1pt]{\color{black}$\bar s$} -- (1.5,0);
\draw[dotted] (0,4.5) node[left=1pt]{\color{black}$k$} -- (10.5,4.5);
\draw[dotted] (4.5,10.5) node[above=1pt]{\color{black}$\bk$} -- (4.5,0);
\draw[dotted] (0,3) node[left=1pt]{\color{black}$l$} -- (10.5,3);
\draw[dotted] (3,10.5) node[above=1pt]{\color{black}$\bl$} -- (3,0);
\draw (0,0) -- (10.5,0) -- (10.5,10.5) -- (0,10.5) -- (0,0);
\filldraw[fill=black] (6,7.5) circle (.25);
\filldraw[fill=black] (7.5,6) circle (.25);
\filldraw[fill=black] (1.5,9) circle (.25);
\filldraw[fill=black] (9,1.5) circle (.25);
\filldraw[fill=black] (3,4.5) circle (.25);
\filldraw[fill=black] (4.5,3) circle (.25);
\etp
\]
All the remaining cases are either equivalent to the two above or are not allowed. For example, the case $i<j<k<l<r<s$, after renaming $k\leftrightarrow r$ and $l\leftrightarrow s$, gives $i<j<r<s<k<l$. As a second example consider the case, when $i<r<j<s<k<l$. Such a configuration is not allowed since then $\key_{i\tsp\bj}(u), \key_{r\bar s}(u)\in\C u$ and $\key_{i\tsp\bj}(u),\key_{r\bar s}(u)\in\C$ yielding a contradiction.

We now focus on the second case, when $i<k<j<l$. As before, we assume that $r,s$ are such that $\key_{r\bar s}(u)$ and $\key_{s\tsp\bar r}(u)$ are nonzero. Then there is only one configuration we need to consider, when $i<r<k<j<s<l$, yielding $\key_{r\bar s}(u)\in \C u$ and $\key_{s\tsp\bar r}(u)\in \C$. (All the remaining cases are either equivalent to this one or are not allowed.) Diagrammatically this configuration is as follows,
\[
\btp
\fill[gray!10] (4.5,9) rectangle (9,4.5);
\fill[gray!10] (3,7.5) rectangle (7.5,3);
\fill[gray!10] (1.5,6) rectangle (6,1.5);
\draw[dotted] (0,9) node[left=1pt]{\color{black}$i$} -- (10.5,9) ;
\draw[dotted] (9,10.5) node[above=1pt]{\color{black}$\bi$} -- (9,0);
\draw[dotted] (0,4.5) node[left=1pt]{\color{black}$j$} -- (10.5,4.5) ;
\draw[dotted] (4.5,10.5) node[above=1pt]{\color{black}$\bj$} -- (4.5,0);
\draw[dotted] (0,7.5) node[left=1pt]{\color{black}$r$} -- (10.5,7.5);
\draw[dotted] (7.5,10.5) node[above=1pt]{\color{black}$\bar r$} -- (7.5,0);
\draw[dotted] (0,3) node[left=1pt]{\color{black}$s$} -- (10.5,3);
\draw[dotted] (3,10.5) node[above=1pt]{\color{black}$\bar s$} -- (3,0);
\draw[dotted] (0,6) node[left=1pt]{\color{black}$k$} -- (10.5,6);
\draw[dotted] (6,10.5) node[above=1pt]{\color{black}$\bk$} -- (6,0);
\draw[dotted] (0,1.5) node[left=1pt]{\color{black}$l$} -- (10.5,1.5);
\draw[dotted] (1.5,10.5) node[above=1pt]{\color{black}$\bl$} -- (1.5,0);
\draw (0,0) -- (10.5,0) -- (10.5,10.5) -- (0,10.5) -- (0,0);
\filldraw[fill=black] (4.5,9) circle (.25);
\filldraw[fill=black] (9,4.5) circle (.25);
\filldraw[fill=black] (3,7.5) circle (.25);
\filldraw[fill=black] (7.5,3) circle (.25);
\filldraw[fill=black] (1.5,6) circle (.25);
\filldraw[fill=black] (6,1.5) circle (.25);
\etp
\]

It remains to consider the third case, when $i<k<l<j$. Once again, we assume that $r,s$ are such that $\key_{r\bar s}(u)$ and $\key_{s\tsp\bar r}(u)$ are nonzero. The only allowed configuration is $i<r<s<k<l<j$ with $\key_{r\bar s}(u),\key_{r\bar s}(u)\in\C$. (Again, all the remaining cases are either equivalent to this one or are not allowed.) Note that this configuration was already discussed above, i.e., it is equivalent to the case, when $r<i<j<k<l<s$. Also note that in each case above the allowed configurations are of even dimension.

The considerations above together with the requirement for $K(u)$ to be invertible imply that if there exists $i$ such that $\key_{i\,\bi}(u)=0$ then $K(u)$ is given by 
\[
\sum_{1\le i \le N/2} \Big(\key_{2i-1,\overline{2i}}(u)\,E_{2i-1,\overline{2i}} + \key_{2i,\overline{2i}+1}(u)\,E_{2i,\overline{2i}+1} \Big) 
\]
with $\key_{2i-1,\overline{2i}}(u),\key_{2i,\overline{2i}+1}(u)\in \C$, or by
\[
\key_{11}(u)\,\,E_{11} + \key_{NN}(u)\, E_{NN} + \sum_{1\le i < N/2} \Big( \key_{2i,\overline{2i}-1}(u)\, E_{2i,\overline{2i}-1} + \key_{2i,\overline{2i}+1}(u)\, E_{2i,\overline{2i}+1} \Big) 
\]
with $\key_{11}(u)\in \C u$, $\key_{NN}(u) \in \C u^{-1}$ and $\key_{2i-1,\overline{2i}}(u),\key_{2i,\overline{2i}+1}(u)\in \C$, or by
\[
\sum_{1\le i \le N/2} \Big( \key_{i,\bi-N/2}(u)\,E_{i,\bi-N/2} + \key_{i+N/2,\bi}(u)\,E_{i+N/2,\bi} \Big) 
\]
with $\key_{i,\bi-N/2}(u)\in \C u$ and $\key_{i+N/2,\bi}(u)\in \C$. The ratios between the entries symmetric with respect to the main antidiagonal follow from \eqref{CK:S4:1a}, \eqref{CK:S4:2a} and \eqref{CK:S4:3a} yielding matrices \eqref{CK:[A2a]}, \eqref{CK:[A2b]} and \eqref{CK:[A4]}, respectively. By Lemma \ref{L:[CK]} this completes the proof.
\end{proof}


\begin{rmk} [{{\it Affinization procedure for \eqref{CtRE}}}] \label{R:Aff}
Upon dividing by $(1 + \tq^{-1}q\,u)$ the solution \eqref{CK:qOns} may be written as
\equ{
(1+\tq^{-1} q\,u)J + L + \tq^{-1} u\,C^{-1} L^t C^t , \label{CK-aff}
}
where
\equ{
J = \sum_{1\le i \le N} E_{i\tsp\bi} , \qq L = \sum_{1\le i < j \le N} (1+q) (-q)^{\frac{i-j}2} E_{j\tsp\bi}. \label{JD}
}
Similarly, the solution \eqref{CK:A4} may be written as
\equ{
G + \tq^{-1} u\, C^{-1}G^t C^t \qq G = \sum_{1\le i \le N/2} E_{i+N/2,\bi}. \label{KIV-aff}
}
The above identities may be viewed as analogues of the affinization procedure for the twisted reflection equation \eqref{CtRE}. The matrices $J+L$ and $G$ defined above and the matrices \eqref{CK:A1} and \eqref{CK:A2} are solutions of the constant braided ($C$-conjugated) twisted reflection equation
\equ{
\check{R}_q G_2 (\check{R}_q^{\vee})^{-1} G_2 = G_2 (\check{R}_q^{\vee})^{-1} G_2 \check{R}_q . \label{CtRE-const}
}
This can be seen by taking the $u,v\to 0$ limit of \eqref{CtRE}. 
More generally, if a matrix $G\in\End(\C^N)$ is a solution of \eqref{CtRE-const}, then it also a solution of \eqref{CtRE} if it has nonzero entries both above and below the main antidiagonal. 
Otherwise, if $G$ has no nonzero entries above the main antidiagonal, then the matrix \eqref{CK-aff}, where $J$ is the antidiagonal part of $G$ and $L:=G-J$, is a solution of \eqref{CtRE}. 
The matrices \eqref{CK:A1} and \eqref{CK:A2} and the matrix $J+L$ given by \eqref{JD} are the only invertible solutions of \eqref{CtRE-const} subject to an equivalence relation $G \sim D^w G D$, where $D$ is a \cid matrix, cf.\ Lemma \ref{L:K-symm} (iii). 
A classification of non-invertible solutions of \eqref{CtRE-const} will be presented elsewhere. 
The non-invertible solutions of \eqref{CtRE-const}, with the exception of the matrix $G$ given in \eqref{KIV-aff}, yield non-invertible solutions of \eqref{CtRE}.  \hfill\rmkend
\end{rmk}

\begin{rmk}[{{\it Unitarity\,}}]
A solution $K$ of \eqref{RE} or \eqref{CtRE} is called \emph{unitary} if $K(u)\,K(u^{-1})=I$ for generic $u$. 
Solutions defined by \eqrefs{KS}{KT} and \eqrefs{CK:A1}{CK:A4} are unitary. 
Upon multiplication by $(-q)^{1/2}(q+\tq\,u)(1-\tq\,u)^{-1}$ the solution \eqref{CK:qOns} also becomes unitary.  \hfill \rmkend
\end{rmk}

\begin{rmk}[{{\it Regularity\,}}] 
A solution $K$ of \eqref{RE} or \eqref{CtRE} is called \emph{regular} if $K(1)$ or $K(-1)$ equals~$I$.
Regularity is a requirement for defining quantum spin chain Hamiltonians with reflecting boundary conditions as derivatives of transfer matrices in the usual way, see \cite{Sk} for the untwisted case. 
All solutions of \eqref{RE} are regular unless $\la^2 = \mu^2 = 1$ in the solution given by \eqrefs{KS1}{KS2}. 
In the latter case $K(u)$ is a generalized involution matrix, cf.~Remark \ref{R:spec}.
None of the solutions of \eqref{CtRE} are regular, except the very special case of \eqref{CK:[A2b]} with $N=4$. 
For solutions of \eqref{CtRE} there exists a special construction to define a quantum spin chain Hamiltonian, see \cite{Do}. To our best knowledge this is the only known construction of Hamiltonians associated to non-regular solutions of \eqref{CtRE}. 
\hfill \rmkend
\end{rmk}


\end{document}